\title{A Formal Model for Secure Multiparty Computations} 
\date{}
\author[1]{Amy Rathore}
\author[1]{Marina Blanton}
\author[2]{Marco Gaboardi}
\author[1]{Lukasz Ziarek}
\affil[1]{Department of Computer Science and Engineering, University at Buffalo}
\affil[2]{Department of Computer Science, Boston University}
\newcommand{\fc}{\CircleSolid}
\newcommand{\hc}{\HalfCircleRight}
\newcommand{\ec}{\CircleShadow}
\newcommand{\bsq}{\Box}
\colorlet{darkpink}{red!60!purple!50}
\colorlet{lightpurple}{purple!60!blue!60}
\colorlet{purple}{purple!70!blue!100}
\colorlet{teal}{green!70!blue!90!black!100}
\colorlet{darkgreen}{green!50!black!100}
\colorlet{orange}{orange!70!red!80}
\colorlet{yellow}{yellow!100!orange!70!red!80!black!80}
\newcommand{\teal}[1]{\textcolor{teal}{#1}}
\newcommand{\orange}[1]{\textcolor{orange}{#1}}
\newcommand{\cyan}[1]{\textcolor{cyan}{#1}}
\newcommand{\red}[1]{\textcolor{red}{#1}}
\newcommand{\blue}[1]{\textcolor{blue}{#1}}
\newcommand{\TT}[1]{\texttt{#1}}
\newcommand{\ExprC}[1]{\blue{#1}}
\newcommand{\extC}[1]{\teal{#1}}
\newcommand{\initC}[1]{\red{#1}}
\newcommand{\sC}[1]{\cyan{#1}}
\newcommand{\ssC}[1]{\textcolor{lightpurple}{#1}}
\newcommand{\restC}[1]{\textcolor{yellow}{#1}}
\newcommand{\resoC}[1]{\orange{#1}}
\theoremstyle{plain}
\newtheorem{lemma}{Lemma}%[section]
\newtheorem{theorem}{Theorem}%[section]
\newtheorem{axiom}{Axiom}%[section]
\theoremstyle{definition}
\newtheorem{definition}{Definition}%[section]
\newcommand{\eval}{\Downarrow}
\newcommand{\Expr}{e}
\newcommand{\stmt}{s}
\newcommand{\val}{v}
\newcommand{\valL}{V}
\newcommand{\var}{\mathit{var}}
\newcommand{\x}{x}
\newcommand{\vl}{\mathit{list}}
\newcommand{\res}{\mathit{res}}
\newcommand{\binop}{\mathit{bop}}
\newcommand{\unop}{\mathit{uop}}
\newcommand{\preop}{\mathit{uop}}
\newcommand{\decl}{\mathit{decl}}
\newcommand{\Type}{\mathit{ty}}
\newcommand{\btype}{\mathit{bty}}
\newcommand{\llabel}{a}
\newcommand{\builtin}{\mathit{prim}}
\newcommand{\Elist}{E}
\newcommand{\Tlist}{\mathit{tyL}}
\newcommand{\plist}{P}
\newcommand{\bytelen}{\kappa}
\newcommand{\n}{n}
\newcommand{\byte}{\omega}
\newcommand{\tagb}{j}
\newcommand{\tagbL}{J}
\newcommand{\perm}{p}
\newcommand{\Const}{\mathrm{const}}
\newcommand{\Skip}{\mathrm{skip}}
\newcommand{\Void}{\mathrm{void}}
\newcommand{\Int}{\mathrm{int}}
\newcommand{\Float}{\mathrm{float}}
\newcommand{\Priv}{\mathrm{private}}
\newcommand{\Pub}{\mathrm{public}}
\newcommand{\If}{\mathrm{if}}
\newcommand{\Then}{\mathrm{then}}
\newcommand{\Else}{\mathrm{else}}
\newcommand{\free}{\mathrm{free}}
\newcommand{\pfree}{\mathrm{pfree}}
\newcommand{\Malloc}{\mathrm{malloc}}
\newcommand{\PMalloc}{\mathrm{pmalloc}}
\newcommand{\Null}{\mathrm{NULL}}
\newcommand{\While}{\mathrm{while}}
\newcommand{\smcinput}{\mathrm{smcinput}}
\newcommand{\smcoutput}{\mathrm{smcoutput}}
\newcommand{\PermF}{\mathrm{Freeable}}
\newcommand{\PermN}{\mathrm{None}}
\newcommand{\Decrypt}{\mathrm{decrypt}}
\newcommand{\Encrypt}{\mathrm{encrypt}}
\newcommand{\GetLoc}{\mathrm{GetLocation}}
\newcommand{\Initialize}{\mathrm{InitializeVariables}}
\newcommand{\Resolve}{\mathrm{ResolveVariables}}
\newcommand{\ResolveR}{\mathrm{ResolveVariables\_Retrieve}}
\newcommand{\ResolveS}{\mathrm{ResolveVariables\_Store}}
\newcommand{\Restore}{\mathrm{RestoreVariables}}
\newcommand{\resolve}{\mathrm{resolve}}
\newcommand{\sizeof}{\mathrm{sizeof}}
\newcommand{\Free}{\mathrm{Free}}
\newcommand{\MPC}[1]{\mathrm{MPC}_\mathit{{#1}}}
\newcommand{\PFree}{\MPC{free}}
\newcommand{\Retrieve}{\mathrm{Retrieve\_vals}}
\newcommand{\DerefPtrPub}{\mathrm{DerefPtr}}
\newcommand{\DeclassifyPtr}{\mathrm{DeclassifyPtr}}
\newcommand{\rrDeclassifyPtr}{\RT\DeclassifyPtr}
\newcommand{\Update}{\mathrm{UpdateVal}}
\newcommand{\UpdateArr}{\mathrm{UpdateArr}}
\newcommand{\UpdatePtr}{\mathrm{UpdatePtr}}
\newcommand{\InputVal}{\mathrm{InputValue}}
\newcommand{\OutputArr}{\mathrm{OutputArray}}
\newcommand{\DecodeArr}{\mathrm{DecodeArr}}
\newcommand{\Decode}{\mathrm{DecodeVal}}
\newcommand{\DecodePtr}{\mathrm{DecodePtr}}
\newcommand{\DecodeFun}{\mathrm{DecodeFun}}
\newcommand{\Encode}{\mathrm{EncodeVal}}
\newcommand{\EncodeArr}{\mathrm{EncodeArr}}
\newcommand{\EncodePtr}{\mathrm{EncodePtr}}
\newcommand{\EncodeFun}{\mathrm{EncodeFun}}
\newcommand{\ReadOOB}{\mathrm{ReadOOB}}
\newcommand{\WriteOOB}{\mathrm{WriteOOB}}
\newcommand{\Mem}{\mathrm{Mem}}
\newcommand{\plpl}{\mathrel{+}\mathrel{+}}
\newcommand{\Acc}{\mathrm{acc}}
\newcommand{\AccZ}{0}
\newcommand{\AccPP}{\Acc+1}
\newcommand{\qq}{\qquad\qquad}
\newcommand{\textred}[1]{\textcolor{red}{#1}}
\newcommand{\RT}[1]{\textred{\ensuremath{#1}}}
\newcommand{\rAcc}{\RT\Acc}
\newcommand{\rExpr}{\RT\Expr}
\newcommand{\rstmt}{\RT\stmt}
\newcommand{\rn}{\RT{n}}
\newcommand{\rrgamma}{\RT\gamma}
\newcommand{\rrsigma}{\RT\sigma}
\newcommand{\rgamma}[1]{\RT{\gamma{#1}}}
\newcommand{\rsigma}[1]{\RT{\sigma{#1}}}
\newcommand{\rVoid}{\RT\Void}
\newcommand{\rlabel}{\RT\llabel}
\newcommand{\rPriv}{\RT\Priv}
\newcommand{\rPub}{\RT\Pub}
\newcommand{\rloc}{\RT\loc}
\newcommand{\rval}{\RT\val}
\newcommand{\rType}{\RT\Type}
\newcommand{\rbtype}{\RT\btype}
\newcommand{\rbyte}{\RT\byte}
\newcommand{\rElist}{\RT\Elist}
\newcommand{\rTlist}{\RT\Tlist}
\newcommand{\rplist}{\RT\plist}
\newcommand{\Veval}{\eval'}
\newcommand{\N}{\mathbb{N}}
\newcommand{\inputFun}{\mathrm{mcinput}}
\newcommand{\outputFun}{\mathrm{mcoutput}}
\newcommand{\PermL}{\mathrm{PermL}}
\newcommand{\VarPermL}{\PermL}
\newcommand{\FunPermL}{\PermL\_\mathrm{Fun}}
\newcommand{\ArrPermL}{\PermL}
\newcommand{\PtrPermL}{\PermL\_\mathrm{Ptr}}
\newcommand{\getIndirection}{\mathrm{GetIndirection}}
\newcommand{\indir}{i}
\newcommand{\hindir}{\hat{\indir}}
\newcommand{\offset}{\ensuremath{\mu}}
\newcommand{\Hoffset}{\hat{\offset}}
\newcommand{\nl}{\alpha}
\newcommand{\hnl}{\hat{\nl}}
\newcommand{\ind}{i}
\newcommand{\hind}{\hat{\ind}}
\newcommand{\hgamma}{\hat{\gamma}}
\newcommand{\hsigma}{\hat{\sigma}}
\newcommand{\hx}{\hat{\x}}
\newcommand{\hstmt}{\hat{\stmt}}
\newcommand{\hExpr}{\hat{\Expr}}
\newcommand{\hType}{\hat{\Type}}
\newcommand{\hbtype}{\hat{\btype}}
\newcommand{\hval}{\hat{\val}}
\newcommand{\hbyte}{\hat{\byte}}
\newcommand{\hn}{\hat{n}}
\newcommand{\hloc}{\hat{\loc}}
\newcommand{\piccoC}{SMC$^2$}
\newcommand{\DynamicPicco}{\piccoC}
\newcommand{\vanillaC}{Vanilla C}
\newcommand{\codeV}{\hat{\code}}
\newcommand{\codeVL}{\hat{\codeL}}
\newcommand{\codeVLL}{\hat{\codeLL}}
\newcommand{\codeVV}[1]{\mathit{\hat{#1}}}
\newcommand{\codeVS}[1]{(\pid, [\codeVV{#1}])}
\newcommand{\codeVM}[1]{(\mathrm{ALL}, [\codeVV{#1}])}
\newcommand{\piccoCodes}{SmcC}
\newcommand{\vanillaCodes}{VanC}
\newcommand{\erasure}{\mathrm{Erase}}
\newcommand{\loweq}{\simeq_L}
\newcommand{\deriv}{\triangleright}
\newcommand{\Code}[1]{\texttt{#1}}
\newcommand{\changeMap}{location map}
\newcommand{\DMap}{\Delta}
\newcommand{\dmap}{\delta}
\newcommand{\lxor}{\oplus}
\newcommand{\UpdateOffset}{\mathrm{UpdateOffset}}
\newcommand{\UpdateBytesFree}{\mathrm{UpdateBytesFree}}
\newcommand{\locDefault}{\loc_\mathit{default}}
\newcommand{\hlocDefault}{\hloc_\mathit{default}}
\newcommand{\SelectFreeable}{\mathrm{CheckFreeable}}
\newcommand{\UpdatePtrLocs}{\mathrm{UpdatePointerLocations}}
\newcommand{\LocMap}{map}
\newcommand{\Pcong}{\ensuremath{\cong_{\psi}}}
\newcommand{\Deval}[2]{\eval^{#1}_\mathit{#2}}
\newcommand{\Dyn}{}
\newcommand{\DynExtract}{\mathrm{\Dyn Extract}}
\newcommand{\DynInit}{\mathrm{\Dyn Initialize}}
\newcommand{\DynUpdate}{\mathrm{DynamicUpdate}}
\newcommand{\DynRestore}{\mathrm{\Dyn Restore}}
\newcommand{\DynResolve}{\mathrm{\Dyn Resolve}}
\newcommand{\DynResolveR}{\mathrm{\Dyn Resolve\_Retrieve}}
\newcommand{\DynResolveS}{\mathrm{\Dyn Resolve\_Store}}
\newcommand{\isPriv}{\vdash}
\newcommand{\isPub}{\nvdash}
\newcommand{\Config}{C}
\newcommand{\hConfig}{\hat{C}}
\newcommand{\pidstyle}[1]{\mathrm{{#1}}}
\newcommand{\pid}{\pidstyle{p}}
\newcommand{\pidA}{\pidstyle{1}}
\newcommand{\pidZ}{\pidstyle{q}}
\newcommand{\loc}{l}
\newcommand{\locL}{L}
\newcommand{\locLL}{\mathcal{L}}
\newcommand{\addL}{::}		% add party-wise location-touch tracking together
\newcommand{\code}{d}
\newcommand{\codeL}{D}
\newcommand{\codeLL}{\mathcal{D}}
\newcommand{\addC}{::}		% add party-wise code tracing together
\newcommand{\codeMP}[1]{(\mathrm{ALL}, [\mathit{#1}])}
\newcommand{\codeSP}[1]{(\pid, [\mathit{#1}])}
\newcommand{\Mid}{\-\ \Vert\ }
\begin{document}

\lstset{language=C, basicstyle=\footnotesize\ttfamily, numbers=left, numbersep=5pt, tabsize=2,mathescape=true}
\lstset{emph={[1]public,private,smcinput,smcoutput,smcopen,pfree,pmalloc},emphstyle={[1]\color{red}}}

\maketitle

\begin{abstract}
Although Secure Multiparty Computation (SMC) has seen considerable development in recent years, its use is challenging, resulting in complex code which obscures whether the security properties or correctness guarantees hold in practice. 
For this reason, several works have investigated the use of formal methods  to provide guarantees for
SMC systems. However, these approaches have been 
applied mostly to domain specific languages (DSL),
neglecting general-purpose approaches. 
In this paper, we consider a formal
model for an SMC system for  annotated C programs. 
We choose C due to its popularity
in the cryptographic community and being the only general-purpose language
for which SMC compilers exist.
Our formalization supports all
key features of C -- including private-conditioned branching statements, mutable arrays
(including out of bound array access), pointers to private data, etc. 
We use this formalization to  characterize correctness and security properties of annotated C, with the
latter being a form of non-interference on execution traces.  We realize our formalism as an implementation in the PICCO SMC compiler and provide
evaluation results on SMC programs written in C.
\end{abstract}

\section{Introduction} \label{Sec: Introduction}

% 1st para - Overview SMC
Secure Multiparty Computation (SMC) allows multiple parties to jointly compute over private data, revealing only the outcomes of the computation to designated recipients. Secure computation is needed in many domains, particularly the medical, military, and financial sectors. 
SMC is commonly implemented using low-level techniques like secret sharing~\cite{Shamir79}, garbled circuits~\cite{Yao86}, or homomorphic encryption~\cite{ElGamal85,Paillier99}. 
These low-level techniques are designed to enable computations among parties which are secure and efficient. While these low-level techniques provide efficiency, their use makes programming SMC applications challenging and error prone. 
To address this concern, several works have proposed high-level languages, DSLs, or language extensions providing abstractions, which can then be compiled down to low-level SMC primitives, to support programmers in writing SMC applications. 
As a result, there now exists a plethora of languages providing different expressivity, offering different features and performance trade-offs, using different threat models, and suitable for different domains.
 Similarly, there exist a number of SMC DSLs.  Although DSLs can make it easier to write SMC programs, there remains a disconnect between the DSL and its integration with the language of the underlying implementation of the compiled protocol. 
 %It is left to the programmer to sufficiently learn the nuances of the new DSL, and to the compiler developers to manage differences in semantics between the DSL and the programming language used to implement and compose the underlying SMC techniques (usually written in C). 

In the effort to unify our knowledge in this space, a recent work~\cite{HastingsHNZ19} compared several compilers and tools in terms of their expressivity and usability. We highlight two items among the lessons learned and recommendations to the community. First, there were numerous correctness issues and undocumented limitations present in the works surveyed. This finding is also echoed in~\cite{Mood16}, which found correctness issues in several two-party compilers. Second, the authors of~\cite{HastingsHNZ19} recommend that the community take a more principled approach to language design and verification, e.g., by defining and implementing type rules. This would help with ensuring correctness as well as reduce security-related corner cases overlooked by the compiler designers. 

To help achieves these goals, in this paper, we present a formalism of a general-purpose SMC system designed for C. We choose C because it provides the low level language framework targeted by most DSLs and there exist numerous direct language extensions for which multiple SMC compilers have been developed. This allows programmers to write secure multiparty programs in C, which the compiler will translate into secure computation protocols, avoiding managing the interactions between different languages. Given the maturity of SMC compilers today, modern implementations provide support for all C features, allowing private-conditioned branches (i.e., \texttt{if-else} statements whose conditional expressions use private data), use of private arrays, private indexing into arrays, and working with pointers to private data; all while ensuring that no private data is leaked over the execution of any given program. However, formally modeling semantics and translation of these features as done by SMC compilers presents non-trivial challenges not attempted in any prior work. Furthermore, what is interesting about C (and not present in the available well-typed DSLs) is that features such as pointer manipulation allow one to write programs that erroneously access unintended regions in memory. However, even in those circumstances, it is possible to show that the compiled protocol will not reveal any unintended information about private data that it handles.

Our contributions in this paper are:
\begin{enumerate}%[leftmargin=4mm,topsep=1mm,itemsep=.5mm]
\item a formal model for a general-purpose secure multiparty
  computation compiler, formalizing state of the art SMC techniques in 
  C. Our formal model supports distributed multi-party computation in the
  presence of private-conditioned branches, pointers to private data, 
  pointer arithmetic and general pointer operations in  private-conditioned branches. 
\item a formal proof based on this model that common SMC approaches
  guarantee correctness and a strong form of non-interference over execution
  traces consisting of multiple computing parties.  This shows that pointer
  operations can be safely managed with no restrictions on the program. 
 \item an implementation of our formal model in the PICCO SMC compiler and evaluation over micro-benchmarks and SMC programs.
\end{enumerate}
%In this, we have found there are still some aspects (e.g. indirectly modifying data inside branches based on private data) that are valid C code, but not accepted by PICCO currently. We aim to increase compatibility of PICCO with the entire C language.

We present an overview of related work in the next section. Section~\ref{sec:background}  provides an overview of SMC, specifically highlighting features present in C extensions that support SMC, it presents SMC compiler background, and a motivating examples.
Section~\ref{Sec: Semantics} defines our semantics, introducing a model that ensures correctness and non-interference for SMC programs with full support for pointers.  Section~\ref{Sec: Metatheory} formalizes our metatheory and provides proof sketches.  Section~\ref{Sec:implementation} details our implementation of our model in the PICCO SMC compiler and  discusses a bug in the private-conditioned branch implementation that our
implementation solved.  Section~\ref{Sec:evaluation} provides evaluation results of our implementation.  We conclude and provide remarks on future work in Section~\ref{sec:conclusions}. The appendices provide additional information including definitions, algorithms, theorems and proofs. Additional details, results, and proofs are available in~\cite{amys-dissertation}.

\section{Related Work}
\paragraph*{SMC compilers}
Work on SMC compilers was initiated in 2004 and a significant body of work has been developed. Notable examples include two-party computation compilers and tools Fairplay~\cite{Malkhi04}, TASTY~\cite{Henecka10}, ABY~\cite{Demmler15a}, PCF~\cite{Kreuter13}, TinyGarble~\cite{Songhori15}, Frigate~\cite{Mood16}, SCVM~\cite{Liu14}, and ObliVM~\cite{Liu15}; three-party Sharemind~\cite{Bogdanov08}; and multi-party FairplayMP~\cite{BenDavid08},  VIFF~\cite{DamgardGKN09}, and more recently SCALE-MAMBA, which evolved from \cite{BendlinDOZ11,DamgardPSZ12,NielsenNOB12}.
These compilers use custom DSLs to represent user programs, and notable exceptions are CBMC-GC~\cite{Holzer12} (intended to support general-purpose ANSI-C programs in the two-party setting, but not all features were realized at the time) and PICCO~\cite{Zhang13,Zhang18} (takes programs written in an extension of C, supports all C features, and produces multi-party protocols).
The above compilers did not come with a formalism of their type
system~\footnote{The ObliVM publication~\cite{Liu15} suggests that
there is a type system behind the ObliVM language, but no further
information could be found.}, while this was later developed for
Sharemind~\cite{sokk16}. There are also SMC DSLs with formal models,
such as Wysteria~\cite{RastogiHH14} with a formal model based on an
operational semantics and Wys*~\cite{RastogiSH17} which provides
support for SMC by means of an embedded DSL hosted in F*, a
dependently typed language supporting full verification. A different
approach is given in~\cite{PettaiL15} with an automated technique to
prove SMC protocols secure.

We provide a summary of significant features supported
in recent compilers in Table~\ref{tab:smc-compilers} (Wys*~\cite{RastogiSH17} inherits its expressivity from Wysteria and is
omitted).
\begin{table} \small \centering \setlength{\tabcolsep}{1ex} \footnotesize
\begin{tabular}{|c|c|c|c|c|c|c|} \hline
  \multirow{3}{*}{Compiler} & \multicolumn{6}{|c|}{Supported features} \\\cline{2-7}
  & \multirow{2}{*}{loops} & priv.  & mixed & float. & dyn. & seman. \\
  & & cond. & mode & point & mem. & formal. \\ \hline
  Fairplay~\cite{Malkhi04} & \hc & \fc & \ec & \ec & \ec & \ec \\ \hline
  Sharemind~\cite{Bogdanov08,jagomagis2010secrec} & \fc & \hc & \fc & \fc & \ec & \fc \\ \hline
  CBMC-GC~\cite{Holzer12} & \hc & \fc & \ec & \hc & \ec & \ec \\ \hline 
  PICCO~\cite{Zhang13,Zhang18} & \fc & \fc & \fc & \fc & \fc & \ec \\ \hline
  SCALE-MAMBA & \fc & \fc & \fc & \fc & \ec & \ec \\ \hline
  Wysteria~\cite{RastogiHH14} & \fc & \fc & \fc & \ec & \ec & \fc \\ \hline
  Frigate~\cite{Mood16} & \fc & \fc & \ec & \ec & \ec & \ec \\ \hline
  ABY~\cite{Demmler15a} & \hc & \hc & \fc & \fc & \ec & \ec \\ \hline
  ObliVM~\cite{Liu15} & \fc & \fc & \fc & \fc & \fc & \ec \\ \hline
  SCVM~\cite{Liu14} & \fc & \fc & \fc & \ec & \ec  & \fc \\ \hline
\end{tabular}
\caption{Language features supported in SMC compilers.} \label{tab:smc-compilers}
\end{table}
They are supporting loops, private-conditioned branches, supporting both private and public values (mixed mode), floating point arithmetic on private values, dynamically allocated memory, and having semantic formalism. 
Note that compilers that translate computation into Boolean circuits such as CBMC-GC need to unroll loops and thus can only support a bounded number of loop iterations, denoted as \hc\ in the table. 
ABY also appears to have this limitation and for that reason expects input sizes at compile time. 
Recent compilers that work with a circuit representation (e.g., Wysteria, ObliVM) store compiled programs using intermediate representation and perform loop unrolling and circuit generation at runtime. 
To the best of our knowledge, Sharemind permits updating only a single variable in a private-conditioned branch (i.e., \texttt{if (cond) a = b; else a = c;}). 
Similarly, in ABY the programmer has to encode all logic associated with conditional statements using multiplexers. 
CBMC-GC did not support floating point arithmetic based on open-source software at the time of publication. 

Dynamic memory management is often not discussed in prior work. CBMC-GC is said to support dynamic memory allocation, as long as this can be encoded as a bounded program, but the use of dynamic arrays and memory deallocation is not mentioned. PICCO explicitly supports C-style memory allocation and deallocation as well as dynamic arrays. ObliVM does not explicitly discuss dynamically allocated arrays, but we believe they are supported.
Similarly, out-of-bounds array access in user programs is also not typically discussed in the SMC literature. Therefore, it is difficult to tell what the behavior might be, i.e., whether the compiler checks for this and, if not, whether the behavior of the corresponding compiled program is undefined. Wysteria and PICCO are two notable exceptions: Wysteria has a strongly typed language and will prevent such programs from compiling (recall that it supports only static sizes). PICCO will compile programs with out-of-bounds memory accesses. While the behavior of such programs is undefined in C (and no correctness guarantees can be provided), its analysis demonstrates that no privacy violations take place. We formalize this behavior in this work.

\paragraph*{Non-interference}
Non-interference is a standard information flow security property guaranteeing that information about private data does not directly affect publicly observable data. We will show non-interference over executions of programs using the formal
model and its extension developed in this paper to prove security when SMC techniques and C language primitives are composed. Non-interference and its several variants have been extensively studied by means of language-based techniques, including type systems~\cite{VolpanoS97,AbadiBHR99}, runtime monitor~\cite{AustinF09,SabelfeldR09}, and multi-execution~\cite{DevrieseP10}, to cite a few. One of the challenges in guaranteeing non-interference when attackers can inspect the state of the computation is to guarantee that private information is not implicitly leaked by means of the control flow path, i.e., that the computation is data-oblivious. Several language-based methods have been designed to guarantee that systems are secure against leakage from branching statements, including timing analysis~\cite{Ford12} and multi-path execution~\cite{PlanulM13,Mitchell0SZ12,LaudP16}.
In particular, \cite{Mitchell0SZ12} considered an approach similar to the one we use here. However, these approaches do not prevent private data leakage from explicit memory management. Building on these early works,  several recent works~\cite{PatrignaniG17,AbateBCD0HPTT20} have shown that in the context of secure compilation the natural notion that one needs to consider is a form of non-interference extended to traces. Inspired by this work, this is the notion we use in this paper when reasoning about non-interference.

\section{Overview}
\label{sec:background}

\begin{figure*} \centering
\begin{tabular}{c}
\begin{lstlisting}
public int numParticipants=100, maxInput=100, inputSize[numParticipants];
public int historicFemaleSalAvg, historicMaleSalAvg, i, j;
private int salary[numParticipants][maxInput], maleCount=0; 
private int gender[numParticipants][maxInput], femaleCount=0;
private int avgMaleSalary=0, avgFemaleSalary=0; 

smcinput(historicFemaleSalAvg,1); smcinput(historicMaleSalAvg,1);
for (i=0; i<numParticipants; i++) 
	smcinput(inputSize[i], i+1); 
	smcinput(gender[i], inputSize[i], i+1); 
	smcinput(salary[i], inputSize[i], i+1); 
	
for (i=0; i<numParticipants; i++) 
	for (j=0; j<inputSize[i]; j++) 
		if (gender[i][j]==0) {
			avgFemaleSalary+=salary[i][j]; 
			femaleCount++;} 
		else { 
			avgMaleSalary+=salary[i][j]; 
			maleCount++;} 
			
avgFemaleSalary=(avgFemaleSalary/femaleCount)/2+historicFemaleSalAvg/2; 
avgMaleSalary=(avgMaleSalary/maleCount)/2+historicMaleSalAvg/2; 

for (i=1; i<numParticipants+1; i++) 
	smcoutput(avgFemaleSalary, i);  smcoutput(avgMaleSalary, i); 
\end{lstlisting}
\end{tabular}
\caption{Securely calculating the gender pay gap for 100 organizations.}
\label{Fig: salary vs gender}
\end{figure*}

We next motivate the need for formalization and discuss the challenges formalizing main language features presents. Let us consider an example SMC program that demonstrates how computation with private values can be specified and carried out. Figure~\ref{Fig: salary vs gender} presents an SMC program that securely computes the average salary of employees in a particular field by gender. This is representative of a real-world SMC deployment in the City of Boston gender pay gap study~\cite{Lapets18}, which evaluated pay inequalities by gender and race. While we list a simple program, the computation can be extended to securely compute comprehensive statistical information by gender, race, and other relevant attributes.

In this example, there are 100 organizations (line 1), each contributing a number of records about their employees, specified in the form of salary-gender pairs (lines 3--4). All records are private and must be protected from all participants. The computation proceeds by checking the gender field in a record and using the corresponding salary in the computation of either female or male average salary. Once the average salaries are computed privately, they are combined with publicly available historical averages using weighted average computation (lines 22-23).

When we talk about secure computation, we need to distinguish between the values which must be protected throughout the computation (and are not revealed to the participants) and the values requiring no protection which are therefore observable during the computation. In our specification, the former are marked as \emph{private} (e.g., salary-gender pairs and information derived from them) and the latter are marked as \emph{public} (e.g., the number of records that each organization contributes and historical average salaries). The ability to combine computation with private and public values is called \emph{mixed-mode} execution.

In our example, all inputs (public or private) are entered into the computation using the \texttt{smcinput} interface that expects the variable name, its dimension(s) (for array variables) and the index of the party supplying the input. When a private input enters the computation, it uses a different, cryptographically protected representation in accordance with the underlying secure multi-party computation techniques. That representation is maintained throughout the execution, which means that all operations on private data are carried out using the corresponding secure multi-party protocols and their true values are not observable. All public values are handled in the same way as in conventional computation, i.e., an SMC compiler does not modify the computation and the values are observable by any participant running the computation. Upon computation completion, any output is communicated to the intended recipient via the \texttt{smcoutput} interface. If \texttt{smcoutput} is used with a private variable, its true value can be recovered from its cryptographically protected representation by the output recipient only (and remains unknown to all other parties). Because output recovery happens after the computation completes, the disclosed output is the intended outcome of secure computation and is not subject to the security guarantees maintained during the computation (such as non-interference). 

The above interface distinguishes between different types of participants: those who \emph{supply inputs}, those who \emph{learn outputs}, and the parties \emph{carrying out the computation}. The computational parties can be different from input owners and output recipients and their selection may be based on the properties of the underlying secure multi-party computation techniques. For example, there are often constraints with respect to what fraction of computational parties can collude, i.e., combine their individual views during the computation, in order to maintain security of private values. This means that in our example, the participants can select a subset of them to run the computation or employ other parties such as cloud computing providers.

The main property this design guarantees is that  a computational party that is not an output recipient should learn nothing about the private values it handles during the computation.  To formalize this property about SMC programs we will show non-interference between private and public computations in mixed-mode programs. Non-interference ensures that private data does not directly affect publicly observable data and is crucial for mixed-mode execution. For the example given in Figure~\ref{Fig: salary vs gender}, non-interference ensures that the private data (e.g., \texttt{salary[i][j]}) does not affect any public data (e.g., \texttt{historicFemaleSalaryAvg}).

This is a distributed, mixed-mode computation, computed between multiple participants. Each individual salary should be kept private and none of the participants should be able to deduce  the salary (up to some number of colluding computational parties -- this is dependent on the cryptographic protocols used).
The computation is mixed-mode as it includes portions which are {\em protected} -- computations over private data (e.g., finding the sum of all the salaries and counting the total number of males and females), and portions which are {\em unprotected} -- local computations over public data (e.g., dividing the historic salaries by 2 in line 22 and 23). 
Secure computations by definition are distributed and governed by cryptographic protocols (e.g., conditional addition \texttt{sumFemaleSalary += salary[i][j];} on line 11).   This occurs because to do the increment operation to compute the total sum, each participant must interact during the operation.

Given a program such as the one in our example, an SMC compiler will perform certain transformations such as substituting an operation on private values with the corresponding SMC protocol for performing that operation securely.  In addition, the SMC compiler also has to perform structural transformations to programs to ensure that there is no information flow from private to public variables based on the instructions that a computational party executes. This is called \emph{data-oblivious} (or data-independent) execution. This means that for private-conditioned branches, the execution must not reveal which branch gets executed.  
Given that our formal model must also ensure \emph{data-oblivious} execution, we have designed our formal model to embody these transformations, using big-step semantics to transform the original operations into the secure operations.  
We formalize this by showing that non-interference holds not only when considering the input-output functional behavior of a program, but also when considering the traces of execution of programs.

\section{Semantics} \label{Sec: Semantics}

In this section we introduce our semantic and memory model. Our goal is to
formalize a general purpose SMC compiler, showing correctness with respect to standard C semantics
and non-interference to guarantee no leakage of private data. 
We therefore introduce two models, C (referred to as \vanillaC) as well as the semantics for the SMC compiler (referred to as \piccoC). 
We do not abstract away memory, instead we introduce a byte-level memory model, 
inspired by the memory model used by CompCert~\cite{leroy2012compcert}, a formally verified C compiler. 
Specifically, we build from their approach of byte-level representation of data and permissions. 

\begin{figure}
\begin{minipage}{0.48\textwidth}\footnotesize
$\begin{array}{l c l}
	\Type &::=& \RT{\llabel\ \btype} \mid \RT{\llabel\ \btype*} \mid \btype \mid {\btype*} \mid \Tlist \to \Type 
\\	\btype &::=& \Int \mid \Float \mid \Void 
\\	\llabel &::=& \rPriv \mid \rPub
\\	\Tlist &::=&  {[\ ]} \mid {\Type::\Tlist} 
\\ \\	\stmt &::=& {\var = \Expr} \mid {*\x = \Expr} \mid {\stmt_1; \stmt_2} \mid {\If (\Expr)\ \stmt_1\ \Else\ \stmt_2}

\\	  && \mid {\Type\ \x (\plist)\ \{ \stmt\}} \mid {\While (\Expr)\ \stmt} 
 \mid {\{ \stmt \}} \mid \decl \mid \Expr
\\	\Expr &::=& \Expr\ \binop\ \Expr \mid {\preop\ \x} \mid \var
		\mid {\x(\Elist)} \mid \builtin 
	\\	  && \mid {(\Type)\ \Expr}  \mid {( \Expr )}  \mid \val
\\	\decl &::=&  \Type\ \var \mid {\Type\ \x ( \plist )}
\\	\var &::=& \x \mid {\x[\Expr]} 
\\	\val &::=& n \mid (\loc, \offset) \mid [{\val_0},\ {...},\ {\val_n}] \mid \Null \mid \Skip
\\ \\ 	\builtin &::=& {\Malloc(\Expr)} \mid \RT{\PMalloc(\Expr,\ \Type)} \mid {\sizeof(\Type)}
	\\	  && \mid {\free(\Expr)} \mid \RT{\pfree(\Expr)} 
	\\ && \mid \RT{\smcinput(\var, e)} \mid \RT{\smcoutput(\var, e)}
\\ 	\binop &::=&{-} \mid {+} \mid \div \mid \cdot \mid {==}  \mid {!=} \mid {<}
\\	\preop &::=& \& \mid {*} \mid ++
\\	\Elist &::=& {\Elist,\ \Expr} \mid \Expr \mid \Void
\\	\plist &::=&  \plist,\ \Type\ \var \mid {\Type\ \var} \mid \Void
\end{array}$
\captionof{figure}{Combined \vanillaC /\piccoC\ Grammar. The color \red{red} denotes terms specific to programs written in \piccoC. 
} 	
\label{Fig: \piccoC grammar}
\end{minipage}
\qquad
\begin{tabular}{l}
\begin{minipage}{0.44\textwidth}\footnotesize
$\begin{array}{r l}	
% Configuration
\Config ::= & \epsilon \mid (\pid, \gamma, \sigma, \DMap, \Acc, \stmt) \Mid \Config \\
\\
% Environment
	\gamma ::=& [ \-\ ]\ \mid\ \gamma[\x\ \to\ (\loc,\ \Type)]							\\
% Memory
	\sigma ::=& [ \-\ ]\ \mid\ \sigma [\loc\ \to\ (\byte,\ \Type,\ n,\ \PermL)  	\\
% Permissions 
	\PermL ::=& [ \-\ ]\ \mid\ [(0,\ \llabel_0,\ \perm_0), ..., (\bytelen, \llabel_\bytelen, \perm_\bytelen)] \\
	\perm ::=& \PermF \mid \PermN		 									\\
% Permission Tuples - "Size"
	\bytelen ::= & \tau(\Type)\cdot n - 1		\\
% Change Map & locals map
	\DMap ::= & [ \-\ ] \mid \dmap::\DMap \\
	\dmap ::= & [ \-\ ] \mid ((\loc, \offset)\to(\val_1, \val_2, \tagb, \Type))::\dmap \\
\\ 
% Location List
	\locLL ::= & \epsilon \mid (\pid, \locL) \Mid \locLL \\
	\locL ::= & [ \-\ ] \mid (\loc, \offset)::\locL \\
	\codeLL ::= & \epsilon \mid (\pid, \codeL) \Mid \codeLL \\
	\codeL ::= & [ \-\ ] \mid \code::\codeL 
	\end{array}
$
\captionof{figure}{Configuration: party identifier $\pid$, environment $\gamma$, memory $\sigma$, \changeMap\ $\DMap$, accumulator $\Acc$, and statement $\stmt$.}
\label{Fig: mem model}
\end{minipage}
\end{tabular}
\end{figure}

Figure~\ref{Fig: \piccoC grammar} gives the combined \vanillaC\ and \piccoC\ grammar, which is a subset of the ANSI C grammar. 
We include one dimensional arrays, branches, loops, dynamic memory allocation, and pointers. 
Arrays are zero-indexed, and it is possible to overshoot their bounds. 
We chose not to include structs or multi-dimensional arrays, as they are an extension of this core subset.
The interested reader can find the full semantics within the scope of the grammar given in Appendix~\ref{app: semantics}.
We use the color \red{red} to denote terms in the \piccoC\ grammar that are not present in \vanillaC, including annotated types ($\llabel\ \btype$, $\llabel\ \btype *$), privacy labels ($\Pub$, $\Priv$), and primitive functions ($\PMalloc$, $\pfree$) for allocation and deallocation of 
memory for private pointers.

We denote types as $\Type$, basic types as $\btype$ ($\btype *$ as a pointer type), privacy annotations as $\llabel$, and function types as $\Tlist \to \Type$ 
(where $\Tlist$ is a type list, $[\ ]$ an empty list, and $::$ list concatenation). 
Values $\val$ include numbers $n$, locations ($\loc$, $\offset$) consisting of a memory block identifier and an offset, lists of values, $\Null$, and $\Skip$ (to show a statement being reduced to completion).
Declarations include variable and function declarations, where $\plist$ is the function parameter list.
For unary operations, we include: $\&$, to obtain the address of a variable; $*$, to allow dereferencing pointers; and $++$, to allow pre-incrementing and to model a basic pointer arithmetic.

%%%%%%%%%%%%%%%%%%%%%%%
% 
%      	 Memory model input
% 
%%%%%%%%%%%%%%%%%%%%%%%

\subsection{Memory Model} \label{subsec: mem model}
We assume the existance $n$ communicating parties, each with a separate memory.
Our memory model encodes each memory as a contiguous region of {\em blocks}, which are sequences of bytes and metadata.
We introduce an execution environment $\gamma$ and memory $\sigma$, shown in Figure \ref{Fig: mem model}.
Each block is assigned an identifier $\loc$ (to be discussed more later in this section). 
Blocks are never recycled nor cleared when they are freed. We chose this view of memory to preserve all allocated data, which, in conjunction with data-oblivious execution, represents the worst case for maintaining privacy. 
Direct memory access through pointers or manipulation of array indices allows programs to access any block for which the memory address is computable (e.g., as an offset or direct pointer access). 
To obtain the byte representations of data we leverage functions similar to CompCert, using $\Encode$ for values, $\EncodePtr$ for pointers, and $\EncodeFun$ for functions. Likewise, to obtain the human-readable data back, we use respective decode functions such as $\Decode$. We use a specialized version ($\DecodeArr$) for obtaining a specific index within an array data block.
We introduce the particulars in the following subsections.

% Fig: mem model moved to semantics.tex to be placed next to the grammar

\subsubsection{Environment} 
The environment, $\gamma$, maintains a mapping of each live variable $\x$ to its memory block identifier (where the data $\x$ is stored) and its type. 
At the start of a program, the environment is empty, i.e., $\gamma = [\-\ ]$. 
Variables that are no longer live are removed from the environment, based on scoping.  
We use the environment to facilitate the lookup of variables (i.e., for reads, writes, function calls) in memory $\sigma$.

\subsubsection{Memory Blocks and Identifiers} \label{subsec: mem block}
The memory, $\sigma$ (shown in Figure~\ref{Fig: mem model}), is a mapping of each identifier $\loc$ to its memory block, which contains 
the byte representation $\byte$ of data stored there and metadata about the block.
Metadata consists of  
a type $\Type$ associated with the block, 
the number of elements $n$ of that type stored in $\byte$, 
and a list of byte-wise permissions tuples $[(0,\ \llabel_0,\ \perm_0), ..., (\bytelen,\ \llabel_\bytelen,\ \perm_\bytelen)]$, where $\bytelen = \tau(\Type)\cdot n - 1$ and function $\tau$ provides the size of the given type in bytes. 
A new memory block identifier is obtained from function $\phi$. 
These identifiers are monotonically increasing with each allocation. 
Every block is added to memory $\sigma$ on allocation, and is never cleared of data nor removed from $\sigma$ upon deallocation. 
Metadata cannot be accessed or modified directly by the program (the semantic rules control modification).
A memory block can be of an arbitrary size, which is constant and determined at allocation (with the exception of private pointers, to be discussed later in subsection~\ref{subsec: picco ptr eval}). 
We represent a memory location as a two-tuple of a memory block identifier and an offset. This allows us to use pointers to refer to any arbitrary memory location, as in C.

\subsubsection{Permissions} 
\label{subsec: permissions}
A permission $\perm$ can either be $\PermF$ (i.e., can be written to, read from, etc.) or 
$\PermN$ (i.e., already freed). 
These byte-wise permissions are modeled after a subset of those used by CompCert, and we extend their permission model by including a privacy label.
Each memory block has a list of permission tuples, one for each byte of data 
stored in that block. 
A permission tuple consists of the position of the byte that it corresponds to, and the privacy label $\llabel$ and permission $\perm$ for that byte of data. 
These permissions are important in reading and writing data to memory, especially when it comes to overshooting arrays and other out-of-bound memory accesses possible through the use of pointers. 
In particular, permissions allow us to keep track of deallocated memory (e.g. a block freed - note that the memory stored in the block itself is not overwritten or cleared and the block can still be accessed indirectly through direct
memory manipulations). 
All permission tuples corresponding to a memory block of a function type will have public privacy labels, as the instructions for a function are accessible from the program itself. 
Those for a normal variable or an array will have privacy labels corresponding to their type (i.e., public for public types, private for private types); those for pointers are more complex and will be discussed later in subsection~\ref{subsec: picco ptr eval}.

\subsubsection{Malloc and Free}
Allocation of dynamic memory in C is provided by \TT{malloc}, which takes a number of bytes as its argument. 
When \Code{malloc} is called, 
a new memory block with identifier $\loc$ is obtained, initialized as a void type of the given size, and returned. 
This block then needs to be cast to the desired type. 
However, when dealing with private data, the programmer is unlikely to know the internal representation and the size of the private data types. 
For that reason, when allocating memory for private data, we adopt PICCO's \TT{pmalloc} functionality which takes two arguments: the type and the number of elements of that type to be allocated. 
The semantics then handle sizing the new memory block for the given private type. 
When \Code{free} or \Code{pfree} is called, if the argument is a variable of a pointer type, the permissions for all bytes of the location the pointer refers to will be set to $\PermN$, but the data stored there will not be erased. 
When \Code{pfree} is called with a pointer with a single location, it behaves identically to \Code{free}. 
The use of \TT{pfree} with multiple locations is a bit more involved, to be discussed later in Subsection~\ref{subsec: picco mem alloc/dealloc}. 
It is important to note that memory allocation and deallocation are public side effects, and therefore are {\em not allowed} within private-conditioned branches.

\subsubsection{Public vs. Private Blocks}
To distinguish between public and private blocks, we assume that private blocks will be encrypted 
and we will use basic private primitives implementing specific operations to manage them. 
For modeling purposes, 
these primitives can decrypt the required blocks, perform 
the operations they are meant to implement, and encrypt the result. 
In our model, a program can also access private blocks by means of standard non-private operations or through of pointers. In this case, the operation will just interpret the encrypted value as a public value. This approach gives us a conceptual distinction between a \emph{concrete memory} and its corresponding \emph{logical content}, i.e. public values and values of the private data prior to encryption. Our model as described in the next section will work on concrete memories, but in Section~\ref{sec: noninterference}, for the proof of noninterference, it will be convenient to refer to the logical content of a memory. We will use the notation $\sigma\ell$ to denote the logical content of $\sigma$.

%%%%%%%%%%%%%%%%%%%%%%%
%
%   Begin Vanilla C semantics description 
%
%%%%%%%%%%%%%%%%%%%%%%%

\begin{figure*}
\footnotesize
\begin{tabular}{l}
% BASIC  EVAL
Multiparty Binary Operation  \\
  	\inferrule{\begin{array}{l l}
		\begin{array}{l}
		\qquad((\pidA, \hgamma, \hsigma,\ \bsq, \bsq, \hExpr_1)\ \Mid ...\Mid
		(\pidZ, \hgamma, \hsigma,\ \bsq, \bsq, \hExpr_1))
			\crcr\Veval_{\codeVLL_1} 
			((\pidA, \hgamma, \hsigma_1, \bsq, \bsq, \hn_1) \Mid ...\Mid
			(\pidZ, \hgamma, \hsigma_1, \bsq, \bsq, \hn_1))
		\crcr \qquad ((\pidA, \hgamma, \hsigma_1, \bsq, \bsq, \hExpr_2)\ \Mid ...\Mid
			(\pidZ, \hgamma, \hsigma_1, \bsq, \bsq, \hExpr_2))
			\crcr\Veval_{\codeVLL_2} 
			((\pidA, \hgamma, \hsigma_2, \bsq, \bsq, \hn_2)\Mid ...\Mid
			(\pidZ, \hgamma, \hsigma_2, \bsq, \bsq, \hn_2))
		\end{array}
	& \begin{array}{l}
		 \hn_1 \binop\ \hn_2 = \hn_3
		\crcr \binop\in\{\cdot, +, -, \div\}
	\end{array}\end{array}}
	{\begin{array}{l}
	((\pidA, \hgamma, \hsigma,\ \bsq, \bsq, \hExpr_1 \binop\ \hExpr_2)\Mid ...\Mid
	(\pidZ, \hgamma, \hsigma,\ \bsq, \bsq, \hExpr_1 \binop\ \hExpr_2)) 
		\Veval_{\codeVLL_1\addC\codeVLL_2\addC\codeVM{mpb}} 
		\crcr((\pidA, \hgamma, \hsigma_2, \bsq, \bsq, \hn_3) 
			\qquad\ \Mid ...\Mid 
		(\pidZ, \hgamma, \hsigma_2, \bsq, \bsq, \hn_3))
		\end{array}}
\\ \\
 Write \\
\inferrule{\begin{array}{l}
		((\pid, \hgamma, \hsigma, \bsq, \bsq, \hExpr)\Mid \hConfig)  \Veval_{\codeVLL} 
			((\pid, \hgamma, \hsigma_1, \bsq, \bsq, \hn)\Mid \hConfig_1)
		\crcr \hgamma(\hx) = (\hloc, \hbtype)  
		\qq \Update(\hsigma_1, \hloc, \hn, \hbtype) = \hsigma_2
	\end{array}}
	{((\pid, \hgamma, \hsigma, \bsq, \bsq, \hx = \hExpr) \Mid \hConfig)
		\Veval_{\codeVLL\addC[\codeVS{w}]} 
		((\pid, \hgamma, \hsigma_2, \bsq, \bsq, \Skip)\Mid \hConfig_1)}
\\ \\
Malloc 	\\ 
\inferrule{\begin{array}{l}
		((\pid, \hgamma, \hsigma, \bsq, \bsq, \hExpr) \Mid \hConfig) \Veval_{\codeVLL} 
			((\pid, \hgamma, \hsigma_1, \bsq, \bsq, \hat{n})\Mid \hConfig_1)
		\qq \hloc = \phi()
		\crcr \hsigma_2 = \hsigma_1 \big[\hloc \to \big(\Null, \Void*, \hat{n}, 
			\PermL(\PermF, \Void*, \Pub, \hn)\big) \big] 
	\end{array} }
	{((\pid, \hgamma, \hsigma, \bsq, \bsq, \Malloc (\hExpr))\Mid \hConfig) 
		\Veval_{\codeVLL\addC[\codeVS{mal}]} 
		((\pid, \hgamma, \hsigma_2, \bsq, \bsq, (\hloc, 0))\Mid \hConfig_1)}
\\ \\
Multiparty Free \\
	\inferrule{\begin{array}{l}
		\hgamma(\hx) = (\hloc, \hbtype*) 
		\qq \hsigma(\hloc) = (\hbyte, {\hbtype*}, 1, \PtrPermL(\PermF, \hbtype*, \Pub, 1))
		\crcr \DecodePtr({ \hbtype*}, 1, \hbyte) = [1, [({\hloc_1}, 0)], [1], 1]
		\crcr \SelectFreeable(\hgamma, [({\hloc_1}, 0)], [1], \hsigma) = 1
		\qq \Free(\hsigma, \hloc_1) = \hsigma_1 
	\end{array}}
	{\begin{array}{l}
	((\pidA, \hgamma, \hsigma, \bsq, \bsq, \free (\x)) \Mid ... \Mid 
	(\pidZ, \hgamma, \hsigma, \bsq, \bsq, \free (\x)))
		\Veval_{\codeVM{fre}} \crcr
		((\pidA, \hgamma, \hsigma_1, \bsq, \bsq, \Skip)\Mid ... \Mid 
		(\pidZ, \hgamma, \hsigma_1, \bsq, \bsq, \Skip))
		\end{array}}
\\ \\
% ALLOC / DEALLOC
Pointer Declaration 	\\
\inferrule{\begin{array}{l l}
	\begin{array}{l}
		(\hType = \hbtype *)
		\crcr \hloc = \phi() 
		\crcr \hgamma_1 = \hgamma[\hx \to (\hloc, \hType)]
		\end{array}
		&\begin{array}{l}
		\getIndirection(*) = \hindir \crcr
		\hbyte = \EncodePtr(\hType*, [1, [(\hlocDefault, 0)], [1], \hindir])
		\crcr \hsigma_1 = \hsigma[\hloc \to (\hbyte, \hType, 0, \PtrPermL(\PermF, \hType, \Pub, 0))]
	\end{array}\end{array}}					
	{((\pid, \hgamma, \hsigma, \bsq, \bsq, \hType \hx) \Mid \hConfig)
		\Veval_{[\codeVS{dp}]} 
		((\pid, \hgamma_1, \hsigma_1, \bsq, \bsq, \Skip)\Mid \hConfig)}
\\ \\ 
% PTR  EVAL
% ARR  EVAL
Multiparty Array Read \\ %\\ 
\inferrule{\begin{array}{l}
		\hgamma(\hx) = (\hloc, \Const\ \hbtype*) \crcr
		((\pidA, \hgamma, \hsigma, \bsq, \bsq, \hExpr) \Mid ... \Mid 
		(\pidZ, \hgamma, \hsigma, \bsq, \bsq, \hExpr)) \Veval_{\codeVLL} 
			((\pidA, \hgamma, \hsigma_1, \bsq, \bsq, \hind)\Mid ... \Mid 
			(\pidZ, \hgamma, \hsigma_1, \bsq, \bsq, \hind)) 
		\crcr \hsigma_1(\hloc) = (\hbyte, \Const\ \hbtype*, 1, \PtrPermL(\PermF, \Const\ \hbtype*, \Pub, 1)) 
		\crcr \DecodePtr(\Const\ \hbtype*, 1, \hbyte) = [1, [(\hloc_1, 0)], [1], 1] 
		\crcr \hsigma_1(\hloc_1) = (\hbyte_1, \hbtype, \hnl, \ArrPermL(\PermF, \hbtype, \Pub, \hnl))  
		\qq 0 \leq \hind \leq \hnl - 1 
		\crcr \DecodeArr({\hbtype},\ \hind,\ {\hbyte_1}) = \hn_{\hind} 
	\end{array}}
	{\begin{array}{l}
	((\pidA, \hgamma, \hsigma,\ \bsq, \bsq, \hx[\hExpr]) \Mid ... \Mid 
	(\pidZ, \hgamma, \hsigma,\ \bsq, \bsq, \hx[\hExpr]))
		\Veval_{\codeVLL\addC\codeVM{mpra}} \crcr
		((\pidA, \hgamma, \hsigma_1, \bsq, \bsq, \hn_{\hind})\ \ \Mid ... \Mid 
		(\pidZ, \hgamma, \hsigma_1, \bsq, \bsq, \hn_{\hind}))
		\end{array}}
\end{tabular}
\caption{\vanillaC\ semantic rules.}
\label{Fig: \vanillaC sem rules}
\end{figure*}

\subsection{\vanillaC\ Semantics} \label{van C descr}

%%%%%%%%%%%%%%%%%%%%%%%
% 
%       Codes For Evaluations
% 
%%%%%%%%%%%%%%%%%%%%%%%
In order to facilitate the correspondence between the \vanillaC\ and \piccoC\ semantics, we model our semantics using big-step evaluation judgements and define our C semantics with respect
to multiple {\em non interacting} parties that evaluate the same program. 
These judgements are defined over a six-tuple configuration $\Config = ((\pid, \gamma, \sigma, \DMap, \Acc, \stmt) \Mid \Config_1)$, where each rule is a reduction from one configuration to a subsequent. 
In the semantics, we denote the party identifier $\pid$;
the environment as $\gamma$; 
memory as $\sigma$;
a mapping structure for location-based tracking of changes at each level of nesting of private-conditioned branches $\DMap$;
the level of nesting of private-conditioned branches as $\Acc$; and
a big-step evaluation of a statement $\stmt$ to a value $\val$ using $\eval$.  
We use a $\hat{\ }$ and $\Veval$ to distinguish the \vanillaC\ semantics from those we use in the next section for \piccoC, as well as $\bsq$ in \vanillaC\ as a placeholder for $\Acc$ and $\DMap$ to maintain the same shape of configurations as that of \piccoC\ used in the next section. 
We annotate each evaluation with party-wise lists of the evaluation codes of all rules that were used during the execution of the rule (i.e., $\Veval_{\codeVLL}$) to facilitate reasoning over evaluation trees. 
We extend the list concatenation operator $::$ to also work over party-wise lists such as $\codeVLL$, defining its behavior as concatenating the lists within $\codeVLL$ by party (i.e., $[(1, \codeVL^1_1), (2, \codeVL^2_1)]::[(1,\codeVL^1_2)] = [(1, \codeVL^1_1::\codeVL^1_2), (2, \codeVL^2_1)]$). 
The assertions in each semantic rule can be read in sequential order, from left to right and top to bottom. 

We present a subset of  the \vanillaC\ semantics  rules in Figure~\ref{Fig: \vanillaC
  sem rules}, focusing on the most interesting rules. Specifically, we present rules for arrays and pointers
which we will compare with the rules for the
\piccoC\ semantics in Section~\ref{smc C descr}. Because the semantics
rules are mostly standard, we will describe one rule to familiarize 
the reader with our notation. 

Rule Multiparty Binary Operation is an example of a rule for the evaluation of a binary operation (comparison operations are handled separately). In \vanillaC, multiparty rules occur at the same time in all parties, but without any communication between the parties. 
We have the starting state $((\pidA,$ $\hgamma,$ $\hsigma,$ $\bsq,$ $\bsq,$ $\hExpr_1\ \binop\ \hExpr_2)$ $\Mid ...\Mid $ $(\pidZ,$ $\hgamma,$ $\hsigma,$ $\bsq,$ $\bsq,$ $\hExpr_1\ \binop\ \hExpr_2))$, 
with all parties at current environment $\hgamma$, current memory $\hsigma$, and the starting statement $\hExpr_1\ \binop\ \hExpr_2$.
First, all parties evaluate expression $\hExpr_1$, using the
current environment and memory states, 
resulting in environment $\hgamma$, memory
$\hsigma_1$, and number $\hn_1$.  We repeat this for
$\hExpr_2$. 
We then evaluate $\hn_1\ \binop\ \hn_2 = \hn_3$ here, and we will return $\hn_3$.  
The end state is then
$((\pidA, \hgamma, \hsigma_2, \bsq, \bsq, \hn_3)$ $\Mid ...\Mid$ $(\pidZ, \hgamma, \hsigma_2, \bsq, \bsq, \hn_3))$.

%%%%%%%%%%%%%%%%%%%%%%%
%
%   Begin Picco C semantics description 
%
%%%%%%%%%%%%%%%%%%%%%%%

\begin{figure*}\footnotesize
\begin{tabular}{l}
% BASIC  EVAL
Multiparty Binary Operation	\\ 
\inferrule{\begin{array}{l l}
	\begin{array}{l}
		\qquad((\pidA, \gamma^{\pidA}, \sigma^{\pidA}, \DMap^{\pidA}, \Acc, \Expr_{1}) \Mid ...\Mid (\pidZ, \gamma^{\pidZ}, \sigma^{\pidZ}, \DMap^{\pidZ}, \Acc, \Expr_{1})) 
			\crcr\Deval{\locLL_1}{\codeLL_1} 
			((\pidA, \gamma^{\pidA}, \sigma^{\pidA}_{1}, \DMap^{\pidA}_{1}, \Acc, \n^{\pidA}_{1}) \Mid ...\Mid (\pidZ, \gamma^{\pidZ}, \sigma^{\pidZ}_{1}, \DMap^{\pidZ}_{1}, \Acc, \n^{\pidZ}_{1}))
		\crcr
		\qquad((\pidA, \gamma^{\pidA}, \sigma^{\pidA}_{1}, \DMap^{\pidA}_{1}, \Acc, \Expr_{2}) \Mid ...\Mid (\pidZ, \gamma^{\pidZ}, \sigma^{\pidZ}_{1}, \DMap^{\pidZ}_{1}, \Acc, \Expr_{2})) 
			\crcr\Deval{\locLL_2}{\codeLL_2} 
			((\pidA, \gamma^{\pidA}, \sigma^{\pidA}_{2}, \DMap^{\pidA}_{2}, \Acc, \n^{\pidA}_{2})\Mid ...\Mid (\pidZ, \gamma^{\pidZ}, \sigma^{\pidZ}_{2}, \DMap^{\pidZ}_{2}, \Acc, \n^{\pidZ}_{2}))
		\crcr
		\MPC{b}(\binop, [\n^\pidA_1, ..., \n^\pidZ_1], [\n^\pidA_2, ..., \n^\pidZ_2]) = (\n^{\pidA}_{3}, ..., \n^{\pidZ}_{3})
	\end{array}
	&\begin{array}{l}
		\{(\Expr_{1}, \Expr_{2}) \isPriv \gamma^\pid\}^{\pidZ}_{\pid = \pidA}
		\crcr \binop\in\{\cdot, +, -, \div\}
	\end{array}\end{array}}
	{\begin{array}{l}
	((\pidA, \gamma^{\pidA}, \sigma^{\pidA}, \DMap^{\pidA}, \Acc, {\Expr_{1}\ \binop\ \Expr_{2}})\Mid ...\Mid (\pidZ, \gamma^{\pidZ}, \sigma^{\pidZ}, \DMap^{\pidZ},\Acc, {\Expr_{1}\ \binop\ \Expr_{2}})) 
		\Deval{\locLL_1 \addL \locLL_2}{\codeLL_1 \addC \codeLL_2 \addC \codeMP{mpb}} 
		\crcr((\pidA, \gamma^{\pidA}, \sigma^{\pidA}_{2}, \DMap^{\pidA}_{2}, \Acc, \n^{\pidA}_{3}) 
			\qquad\-\ \-\ \-\ \Mid ...\Mid 
		(\pidZ, \gamma^{\pidZ}, \sigma^{\pidZ}, \DMap^{\pidZ}_{2}, \Acc, \n^{\pidZ}_{3})) \end{array}}
\\ \\
Write Private Variable Public Value		\\
\inferrule{\begin{array}{l l}
		\begin{array}{l}
		(\Expr) \isPub \gamma \crcr
		\crcr \gamma(\x) = (\loc, {\Priv\ \btype})  
		\end{array}
		& \begin{array}{l}
		((\pid, \gamma, \sigma, {\DMap}, \Acc, \Expr) \Mid  \Config)  
			\Deval{\locLL_1}{\codeLL_1}  ((\pid, \gamma, \sigma{_1}, {\DMap_1}, \Acc, \n) \Mid  \Config_1) 
		\crcr 
		\Update(\sigma{_1}, \loc, \Encrypt(\n), \Acc, \Priv\ \btype) = \sigma{_2}
	\end{array}\end{array}}
	{((\pid, \gamma,\ \sigma,\ {\DMap},\ \Acc,\ {\x = \Expr}) \Mid  \Config)\ 
		\Deval{\locLL_1 \addL (\pid, [(\loc, 0)])}{\codeLL_1 \addC \codeSP{w2}}  
		((\pid, \gamma,\ \sigma{_2},\ {\DMap_1},\ \Acc,\ \Skip) \Mid  \Config_1)}
\\ \\
Private Malloc \\
\inferrule{\begin{array}{l}
		(\Expr) \isPub \gamma \qq
		\Acc = \AccZ \qq
		(\Type = {\Priv\ \btype*}) \lor (\Type = {\Priv\ \btype}) \crcr
		\loc = \phi()	\qq
		((\pid, \gamma,\ \sigma,\ {\DMap},\ \Acc,\ \Expr) \Mid  \Config)\ 
			\Deval{\locLL_1}{\codeLL_1}  ((\pid, \gamma,\ \sigma{_1},\ {\DMap},\ \Acc,\ {n}) \Mid  \Config_1) 
		\crcr \sigma_2 = \sigma{_1} \big[\loc \to \big(\Null,\ \Void*,\ {n\cdot\tau(\Type)},\ \PermL(\PermF, \Void*, \Priv, \n\cdot\tau(\Type))\big)\big]	
	\end{array}}
	{\begin{array}{l}
	((\pid, \gamma,\ \sigma,\ {\DMap},\ \Acc,\ {\PMalloc (\Expr,\ \Type)}) \Mid  \Config) 
		\Deval{\locLL_1 \addL (\pid, [(\loc, 0)])}{\codeLL_1 \addC \codeSP{malp}} 
		((\pid, \gamma,\ \sigma{_2},\ {\DMap},\ \Acc,\ (\loc, 0)) \Mid  \Config_1)
	\end{array} }
\\ \\
% ALLOC / DEALLOC
Multiparty Private Free	\\
\inferrule{ \begin{array}{l}
		\{\gamma^\pid(\x) = (\loc^\pid,\ {\Priv\ \btype*})\}^{\pidZ}_{\pid = \pidA}  
		\qq \Acc = \AccZ  \qq
		(\btype = \Int) \lor (\btype = \Float) \crcr
		\{\sigma^\pid(\loc^\pid) = (\byte^\pid, \Priv\ \btype*, \nl, \PtrPermL(\PermF, \Priv\ \btype*, \Priv, \nl))\}^{\pidZ}_{\pid = \pidA} \crcr
		\{[\nl,\ \locL^\pid\ \tagbL^\pid,\ \indir] = \DecodePtr({\Priv\ \btype*},\ \nl,\ \byte^\pid)\}^{\pidZ}_{\pid = \pidA}
		\qq \{\nl > 1\}^{\pidZ}_{\pid = \pidA}  \crcr
		\If(\indir>1) \{\Type = \Priv\ \btype* \}\ \Else\ \{ \Type = \Priv\ \btype \}
		\crcr
		\{\SelectFreeable(\gamma^\pid, \locL^\pid, \tagbL^\pid, \sigma^\pid) = 1\}^{\pidZ}_{\pid = \pidA} 
		\crcr
		\{\forall (\loc^\pid_m, 0) \in \locL.\quad \sigma^\pid(\loc^\pid_m) = (\byte^\pid_m, \Type, n, \PermL(\PermF, \Type, \Priv, n))\}^{\pidZ}_{\pid = \pidA}
		\crcr
		\PFree([[\byte^\pidA_0, ..., \byte^\pidA_{\nl-1}], ..., [\byte^\pidZ_0, ..., \byte^\pidZ_{\nl-1}]], [\tagbL^\pidA, ...\tagbL^\pidZ]) 
			\crcr\qq= ([[\byte'^\pidA_0, ..., \byte'^\pidA_{\nl-1}], ..., [\byte'^\pidZ_0, ..., \byte'^\pidZ_{\nl-1}]], [\tagbL'^\pidA, ..., \tagbL'^\pidZ])
		\crcr
		\{\UpdateBytesFree(\sigma^\pid, \locL^\pid, [\byte'^\pid_0, ..., \byte'^\pid_{\nl-1}]) = \sigma^\pid_1\}^{\pidZ}_{\pid = \pidA}
		\crcr
		\{\sigma^\pid_2 = \UpdatePtrLocs(\sigma^\pid_1, \locL^\pid[1:\nl-1], \tagbL^\pid[1:\nl-1], \locL^\pid[0], \tagbL^\pid[0])\}^{\pidZ}_{\pid = \pidA}
	\end{array}}
	{\begin{array}{l}((\pidA, \gamma^{\pidA}, \sigma^{\pidA}, \DMap^\pidA, \Acc, {\pfree (\x)})\Mid ...\Mid (\pidZ, \gamma^{\pidZ}, \sigma^{\pidZ}, \DMap^\pidZ, \Acc, {\pfree (\x)})) 
		\Deval{(\pidA, \locL^\pidA) \Mid ... \Mid (\pidZ, \locL^\pidZ)}{\codeMP{mpfre}} 
		\crcr((\pidA, \gamma^{\pidA}, \sigma^{\pidA}_2, \DMap^\pidA, \Acc, \Skip)\quad\-\ \ \Mid ...\Mid 
		(\pidZ, \gamma^{\pidZ}, \sigma^{\pidZ}_2, \DMap^\pidZ, \Acc, \Skip)) \end{array}}
\\ \\ 
% PTR  EVAL
Private Pointer Declaration \\ 
\inferrule{\begin{array}{l}\begin{array}{l l}
		\getIndirection(*) = \indir\qq
		&((\Type = {\btype *}) \lor (\Type = {\Priv\ \btype*})) \land ((\btype = \Int) \lor (\btype = \Float))
		\end{array}\crcr \begin{array}{l l l} 
		\loc = \phi() \qq
		& \gamma{_1} = \gamma[\x \to (\loc, \Priv\ \btype*)] 	
		\end{array}\crcr\begin{array}{l}
		\byte = \EncodePtr(\Priv\ \btype*, [1, [(\locDefault, 0)], [1], \indir])	\crcr
		\sigma{_1} = \sigma[\loc \to (\byte, \Priv\ \btype*, 1, \PtrPermL(\PermF, \Priv\ \btype*, \Priv, 1))]
	\end{array}\end{array}}					
	{((\pid, \gamma,\ \sigma,\ {\DMap},\ \Acc,\ {\Type\ \x}) \Mid  \Config)\ 
		\Deval{(\pid, [(\loc, 0)])}{\codeSP{dp1}}  ((\pid, \gamma{_1},\ \sigma{_1},\ {\DMap},\ \Acc,\ \Skip) \Mid  \Config)}
% ARR  EVAL
\\ \\
Multiparty Array Read Private Index \\
	\inferrule{\begin{array}{l}
		\qquad((\pidA, \gamma^{\pidA}, \sigma^{\pidA}, \DMap^\pidA, \Acc, \Expr)\ \Mid ...\Mid (\pidZ, \gamma^{\pidZ}, \sigma^{\pidZ}, \DMap^\pidZ, \Acc, \Expr)) 
		\qquad\ \
		\{(\Expr) \isPriv \gamma^\pid\}^{\pidZ}_{\pid = \pidA} 
		\crcr		
			\crcr\Deval{\locLL_1}{\codeLL_1} ((\pidA, \gamma^{\pidA}_{}, \sigma^{\pidA}_{1}, \DMap^{\pidA}_{1}, \Acc, \ind^{\pidA})\Mid ...\Mid (\pidZ, \gamma^{\pidZ}_{}, \sigma^{\pidZ}_{1}, \DMap^{\pidZ}_{1}, \Acc, \ind^{\pidZ}))
		\qquad\ \{\gamma^\pid(\x) = (\loc^\pid, \Const\ \llabel\ \btype*)\}^{\pidZ}_{\pid = \pidA}
		\crcr
		\{\sigma^\pid_1(\loc^\pid) = (\byte^\pid,\ {\llabel\ \Const\ \btype*}, 1, 
			\PtrPermL(\PermF, {\llabel\ \Const\ \btype*}, \llabel, 1))\}^{\pidZ}_{\pid = \pidA} 
		\crcr \{\DecodePtr({\llabel\ \Const\ \btype*},\ 1,\ \byte^\pid) 
			= [1,\ [({\loc^\pid_1}, 0)],\ [1],\ 1]\}^{\pidZ}_{\pid = \pidA}  
		\crcr \{\sigma^\pid_1({\loc^\pid_1}) = ({\byte^\pid_1}, {\llabel\ \btype}, {\nl}, 
									\ArrPermL(\PermF, \llabel\ \btype, \llabel, {\nl}))\}^{\pidZ}_{\pid = \pidA}
		\crcr 
			\{\forall \ind \in \{0...\nl-1\} \quad \DecodeArr({\llabel\ \btype}, \ind, {\byte^\pid_1}) =  \n^\pid_\ind\}^{\pidZ}_{\pid = \pidA}
		\crcr
		\MPC{ar}((\ind^\pidA, [\n^{\pidA}_{0}, ..., \n^{\pidA}_{\nl-1}]), ..., (\ind^\pidZ, [\n^{\pidZ}_{0}, ..., \n^{\pidZ}_{\nl-1}])) = (\n^{\pidA}, ..., \n^{\pidZ})
		\qq \{(\n^\pid) \isPriv \gamma^\pid\}^{\pidZ}_{\pid = \pidA} 
		\crcr \locLL_2 = (\pidA, [(\loc^\pidA, 0), (\loc^\pidA_1, 0), ..., (\loc^\pidA_1, \nl-1)])\Mid ... \Mid (\pidZ, [(\loc^\pidZ, 0), (\loc^\pidZ_1, 0), ..., (\loc^\pidZ_1, \nl-1)])
	\end{array}}
	{\begin{array}{l}
		((\pidA, \gamma^{\pidA}, \sigma^{\pidA}, \DMap^\pidA, \Acc, \x[\Expr])\Mid ...\Mid (\pidZ, \gamma^{\pidZ}, \sigma^{\pidZ}, \DMap^\pidZ, \Acc, \x[\Expr])) 
			\Deval{\locLL_1 \addL \locLL_2}{\codeLL_1\addC \codeMP{mpra}} 
			\crcr((\pidA, \gamma^{\pidA}_{}, \sigma^{\pidA}_{1}, \DMap^{\pidA}_{1}, \Acc, \n^{\pidA})\-\ \Mid ...\Mid (\pidZ, \gamma^{\pidZ}_{}, \sigma^{\pidZ}_{1}, \DMap^{\pidZ}_{1}, \Acc, \n^{\pidZ})) \end{array}}
\end{tabular}
\caption{\piccoC\ semantic rules.}
\label{Fig: \piccoC sem rules}
\end{figure*}

\subsection{\piccoC\ Semantics} \label{smc C descr}

The \piccoC\ semantics are defined over multiple {\em interacting} parties.
The \piccoC\ semantics used to define the behavior of 
parties are mostly standard, with \emph{non-interactive} semantic rules identical to those of \vanillaC\ semantics aside from additional assertions over the privacy labels of data and properly managing the private data. 
A few notable exceptions are \emph{interactive} SMC operations (and in general operations over private values)
and the private-conditioned \Code{if else} statement, discussed in later in this section.    
To prevent leakage from within private-conditioned branches, we restrict all public side effects (i.e., the use of functions with public side effects, allocation and deallocation of memory, and any modifications to public variables). 
Additionally, in the case of pointer dereference write and array write statements, we have an additional check for when this occurs within a private-conditioned branch, as we need to perform additional analysis to ensure the location being written to is tracked properly due to the potential for the pointer's location being modified or an out-of-bounds array write. 
To enforce these restrictions, we use the assertion $\Acc = \AccZ$ within each restricted rule -- as the accumulator $\Acc$ is incremented at each level of nesting of a private-conditioned branch, this will result in a runtime failure. 
We annotate each evaluation with party-wise lists of the evaluation codes $\codeLL$ of all rules that were used during the execution of the rule (i.e., $\Deval{\locLL}{\codeLL}$) in order to keep an accurate evaluation tree, and party-wise lists of locations accessed $\locLL$ in order to show data-obliviousness (i.e., that given the same program and public data, we will always access the same set of locations).

%%%%%%%%%%%%%%%%%%%%%%
%
% 		Basic Evaluations
%
%%%%%%%%%%%%%%%%%%%%%%
\subsubsection{Basic Evaluation} \label{subsec: picco basic eval}
% Less Than True
To better illustrate the correspondence between \piccoC\  and \vanillaC\  let us consider the Multiparty Binary Operation rule.
\piccoC\ rule Multiparty Binary Operation asserts that one of the given binary operators ($\cdot, +,-,\div$) is used and additionally that either expression contains private data with relation to the environment. 
We use the notation $(\Expr_{1}, \Expr_{2}) \isPriv \gamma$ to show this relation, and  
notation $\{...\}^{\pidZ}_{\pid = \pidA}$ to show that all parties will ensure that property holds locally.
We then use the multiparty protocol $\MPC{b}$, passing the given binary operator and the current values of $\n^\pid_1$ and $\n^\pid_2$ for each party $\pid$. 
This protocol will dictate how communication occurs and what data is exchanged between parties. 
We receive $\n^\pid_3$ as the result for each party, which we then return appropriately.   We assume that the protocol is implemented correctly (i.e. provided by the underlying SMC cryptographic library) and define this assumption formally, its impact on our noninterferences proof, and how to reason if a library adheres to our assumption later in Section~\ref{sec: noninterference}. 
Within the multiparty rules, each party maintains control of their own data, only sharing it with other parties in the ways dictated by the multiparty protocols. We choose to show the execution of the entire computational process here in order to emphasize what data is involved, and that each of the parties will take part in this computation. 

% Write Priv x = Pub
In rule Write Private Variable Public Value, we assert that $\Expr$ contains only public data,  
then evaluate $\Expr$ to $n$. We look up $\x$ in $\gamma$, asserting that it is a $\Priv\ \btype$ at location $\loc$. 
When we call $\Update$ to store this value to memory, we must pass $\Encrypt(n)$ as $\x$ is private and we are assigning it the public value $n$.

%%%%%%%%%%%%%%%%%%%%%%
%
% 		Pointer Evaluations
%
%%%%%%%%%%%%%%%%%%%%%%
\subsubsection{Pointer Evaluations} \label{subsec: picco ptr eval}
In order to maintain data-oblivious execution, we need to allow storing multiple locations for pointers when they are modified within a private-conditioned branching statement. To achieve this, 
% Pointer structure
the structure of the data stored by pointers is as follows: the number $\nl$ of locations being pointed to; a list of $\nl$ locations being pointed to; a list of $\nl$ tags; and the level of indirection of the pointer. 
The privacy labels of the byte-wise permissions corresponding to the number $\nl$ of locations to which the pointer refers, the list of $\nl$ locations, and the level of indirection will always be public, as it is visible to an observer of memory the number and which locations are touched by a pointer, and the level of indirection is visible in the source program.
The privacy labels of the permissions corresponding to the tags of public pointers will always be public; for private pointers, they will be private, as these protect an observer of memory from being able to tell which location is the true location. 
%
%%
% Priv Ptr Decl
In rule Private Pointer Declaration, we assert that the type of the variable being declared is an int or float pointer that is either declared as private or missing a privacy label (i.e., $\btype*$) and therefore is assumed to be private. 
As with the \vanillaC\ pointer declaration, we check the level of indirection and add the appropriate mappings to the environment and memory.

%%%%%%%%%%%%%%%%%%%%%%
%
% 		Mem Alloc / Dealloc
%
%%%%%%%%%%%%%%%%%%%%%%
\subsubsection{Memory Allocation and Deallocation} \label{subsec: picco mem alloc/dealloc}
% PMalloc
When allocating private memory, we provide the \TT{pmalloc} builtin function to internally handle obtaining the size of the private type; the programmer to only needs to know how many elements of the given type they desire to allocate.
In rule Private Malloc, we assert that the given type is either private int or private float, as this function only handles those types, and that the accumulator $\Acc$ is $\AccZ$ (i.e., we are not inside an \Code{if else} statement branching on private data, as this function causes public side effects). 
Then we evaluate $\Expr$ to $n$ and obtain the next open memory location $\loc$ from $\phi$. We add to $\sigma{_1}$ the new mapping from $\loc$ to the tuple of a $\Null$ set of bytes; the type $\Type$; the size $n$; and a list of $\Priv$, $\PermF$ permissions. 
As with public \Code{malloc}, we return the new location, $(\loc, 0)$. 

% PFree
When deallocating private memory, we provide the \TT{pfree} builtin function to handle private pointers potentially having multiple locations. 
In the case of a single location, it behaves identically to Public Free; however, with multiple locations, we need to deterministically free a single location (which may or may not be the true location that was intended to be freed) to maintain data-obliviousness. We describe this case in more detail here.   
In rule Multiparty Private Free, we assert that $\x$ is a private pointer of type int or float, we are not inside a private-conditioned branch ($\Acc$ is $\AccZ$, as this rule causes public side effects), and that the number of locations the pointer refers to ($\nl$) is greater than 1 for all parties. 
We then assert that \emph{all} locations referred to by $\x$ are freeable (i.e., they are all memory blocks that were allocated via malloc) and proceed to retrieve the data that is stored for each of these locations.
This data and the tag lists are then passed to $\PFree$, as this is what we will need in order to privately free a location without revealing if it was the true location. 

To accomplish this, we must free one location based on publicly available information, regardless of the true location of the pointer. For that reason, and without loss of generality, we free the first location, $\loc_0$. 
Since $\loc_0$ may not be the true location and may be in use by other pointers, we need to do additional computation to maintain correctness without disclosing whether or not this was the true location. 
In particular, if $\loc_0$ is not the true location, we preserve the content of $\loc_0$ by obliviously copying it to the pointer's true location prior to freeing. 
This behavior is defined in function $\PFree$, and follows the strategy suggested in~\cite{Zhang18}.
$\PFree$ returns the modified bytes and tag lists. $\UpdateBytesFree$ then updates these in their corresponding locations in memory and marks the permissions of $\loc_0$ as $\PermN$ (i.e., this block has been freed). 
The remaining step is to update other pointers that stored $\loc_0$ on their lists to point to the updated location instead of $\loc_0$, which is accomplished by $\UpdatePtrLocs$.

%%%%%%%%%%%%%%%%%%%%%%
%
% 		Array Evaluations
%
%%%%%%%%%%%%%%%%%%%%%%
\subsubsection{Array Evaluations} \label{subsec: picco arr eval}
With array evaluations, all evaluations that use a public index behave nearly identically to those over public data. 
The difference is that we have an additional check within array writes at a public index to see if we are in a private-conditioned branch; if so, we must ensure we properly track the modification made (this is because a public index that is not hard-coded could have lead to an out-of-bounds array write). 
We will discuss this further in the following section. 
When we have a private index, it is necessary to hide which location we a reading from or writing to to maintain data-obliviousness. One such semantic rule is Multiparty Array Read Private Index.
% 
% PUB 1D ARR READ PRIV INDEX
Here, we assert that the privacy label of $\Expr$ is private as this rule handles private indexing into a public array. 
We assert that $\x$ is a public array of type int or float, as we must return a private value when using a private index. 
Then we evaluate $\Expr$ to the private index $\ind$, and perform lookups to obtain the data of the array. 
Now, because we have a private index, we must obtain the value without revealing which location we are taking the value from. 
To do this, we use multiparty protocol $\MPC{ar}$, which returns a private number containing the value from the desired location.  
It is important to note here that even if the private index is beyond the bounds of the array, we do not access beyond the elements within the array, as that would reveal information about the private index. 
An example of how this protocol can be implemented is to iterate over all values stored in the array; at each value, we encrypt the current index number $m$, privately compare it to $\ind$, and perform a bitwise \Code{and} operation over this and the encrypted value $\n_m$ stored at index $m$. We perform a bitwise \Code{or} operation over each such value obtained from the array to attain our final encrypted value $\n$, which is returned. 
%

%%%%%%%%%%%%%%%%%%%%%%%
%
%   picco C If Else description 
%
%%%%%%%%%%%%%%%%%%%%%%%

\subsubsection{\piccoC\ If Else} \label{subsec: \piccoC if else}

The public \Code{if else} rules 
are nearly identical to the \vanillaC\ rules, (e.g., \vanillaC\ rule If Else True shown in Figure~\ref{Fig: if else \vanillaC true}), 
with the added assertion that the guard of the conditional is public (i.e., does not contain private data): $(\Expr) \isPub \gamma$.
The private \TT{if else} rules, shown in Figures~\ref{Fig: iep vt} and~\ref{Fig: iep lt}, are more interesting. Our strategy for dealing with private-conditioned branches involves executing both branches as a sequence of statements (with some additional helper algorithms to aid in storing changes, restoration between branches, and resolution of true values). 
We chose to use big-step semantics to facilitate the comparison of the \piccoC\ semantics with the \vanillaC\ semantics, and for its proof of correctness that we will discuss in the next Section. 
% general description
We give also an example of \piccoC\ code in Figure~\ref{Fig: if else \piccoC code} and~\ref{Fig: if else \piccoC code dp}, and the corresponding execution in Figure~\ref{Fig: if else \piccoC expanded} and~\ref{Fig: if else piccoC expanded dp}. We use coloring throughout Figure~\ref{Fig: if else color} and~\ref{Fig: if else color dp} to highlight the corresponding sections of code and rule execution. 

% SPECIFICS  --  IF ELSE
The starting and ending states of the \piccoC\ Private If Else rules are essentially the same as the starting and ending states of the corresponding \vanillaC\ If Else rule; however, there are several additional assertions that guarantee that both of the private-conditioned branches are executed. 
The assertions of these semantic rules are listed sequentially, from top to bottom.
We have two different styles of tracking modifications within conditional code blocks that are used within these rules: variable tracking and location tracking. 
Variable tracking is used when there are only single-level changes within the private-conditioned branches, whereas location tracking is used when we have multi-level changes (i.e., a branch contains a pointer dereference write) or potential out-of-bounds changes (i.e., array write at a public index). 

The main idea of both styles is to first store the original value of each variable that is modified within either branch; execute the \TT{then} branch; save the resulting values from the \TT{then} branch and restore all modified variables to their original values; execute the \TT{else} branch; and finally, to securely resolve which values should be kept -- those from the \TT{then} branch or those from the \TT{else} branch. 
In the variable style of tracking, we utilize temporary variables to keep track of all modifications made during either branch -- initializing the \TT{else} temporary with the original value, storing the result of the \TT{then} branch in the \TT{then} temporary and using the \TT{else} temporary to restore the original value, and finally using the result of the private-conditional and what is stored in each variable at the end of the \TT{else} branch as well as it's corresponding \TT{then} temporary to securely resolve what values to continue evaluating the program with. 

This style of tracking is robust enough for many uses, however, there are two notable exceptions where we run into issues, both involving the potential of the location we track not being the location that is actually modified. 
The first exception involves pointer dereference writes -- these alone are not an issue, but when location the pointer refers to is modified and we also perform pointer dereference writes, it becomes clear that variable tracking cannot easily find and handle these cases. 
The second exception involves array writes at public indices -- these become problematic due to the potential for writing out-of-bounds. As most array indices are not hard-coded, it isn't obvious that the write will be within bounds until execution, and to ensure we catch all of these cases we must use a more robust style of tracking to catch out-of-bounds writes. We stress here that array writes at private indices do not fall within this exception, as this operation will securely update the array within its bounds (as updating beyond the bounds of the array would leak that this private value is larger than the size of the array), and as such we can simply track the entire array properly using variable tracking.  
It is possible to ensure that we find all of the locations that are modified in both of these cases by dynamically adding these types of modifications as they are evaluated, which is the goal of the location tracking. 
In the location style of tracking, we still follow a similar evaluation pattern as with variable tracking, storing the original values for locations we know will be modified first, then restoring between branches, and resolving at the end. As we evaluate each branch and come upon one of these special cases, we will check to see if we have already marked that location for tracking, and if not we add that location and its original value before the modification occurs. 
It is worthwhile to stress again the role of the accumulator here with respect to other statements. We increment it when we evaluate the \TT{then} and \TT{else} statements, so that if we attempt to evaluate a (sub)statement with public side effects or restricted operations, we have an (oblivious) runtime failure. It also facilitates scoping of temporary variables within nested private-conditioned \TT{if else} statements.
We proceed to further describe the different assertions and specifics of both styles next.

% insert IF ELSE figure

\begin{figure*} \footnotesize
\begin{tabular}{l}
\hspace{0.3cm}
\begin{tabular}{l l}
\begin{subfigure}{.36\textwidth}
\begin{lstlisting}
private int a=3,b=7,c=0;		
if ($\Code{\ExprC{a<b}}$) $\Code{\sC{c=a;}}$
else $\Code{\ssC{c=b;}}$
\end{lstlisting}
	\caption{\piccoC\ code.}
	\label{Fig: if else \piccoC code}	
\end{subfigure}		
&
\begin{subfigure}{.57\textwidth}
\begin{lstlisting}
private int a=3,b=7,c=0,$\Code{\initC{res=}\ExprC{a<b},\initC{c\_t=c,c\_e=c};}$
$\Code{\sC{c=a;}}$ $\Code{\restC{c\_t=c; c=c\_e;}}$
$\Code{\ssC{c=b;}}$ $\resoC{\Code{c=(res}\cdot\Code{c\_t)+((1-res)}\cdot\Code{c);}}$ 	
\end{lstlisting}
	\caption{Variable-tracking execution.}	
	\label{Fig: if else \piccoC expanded}
\end{subfigure} 
\end{tabular}
\\ \\
\begin{subfigure}{\textwidth}
	\inferrule{\begin{array}{l} 
		\qquad
			\ExprC{((\pidA, \gamma^\pidA, \sigma^\pidA, \DMap^\pidA, \Acc, \Expr) \ \ \Mid ... \Mid
					 (\pidZ, \gamma^\pidZ, \sigma^\pidZ, \DMap^\pidZ, \Acc, \Expr))} 
			\crcr\ExprC{\Deval{\locLL_1}{\codeLL_1} 
			((\pidA, \gamma^\pidA, \sigma^\pidA_1, \DMap^\pidA_1, \Acc, n^\pidA) \Mid ... \Mid
			 (\pidZ, \gamma^\pidZ, \sigma^\pidZ_1, \DMap^\pidZ_1, \Acc, n^\pidZ))}
		\qq
			\{(\ExprC\Expr) \isPriv \gamma^\pid\}^\pidZ_{\pid = \pidA}
		\crcr 
			\extC{\{\DynExtract(}\sC{\stmt_1}\extC{,}\ \ssC{\stmt_2} \extC{, \gamma^\pid) = (\x_\vl, 0)\}^\pidZ_{\pid = \pidA}}
		\crcr
			\initC{\{\Initialize(}\extC{\x_\vl}\initC{,\ \gamma^\pid, \sigma^\pid_1, \ExprC{\n^\pid},\ \AccPP) = (\gamma^\pid_1, \sigma^\pid_2, \locL^\pid_2)\}^\pidZ_{\pid = \pidA}}
		\crcr\qquad
			\sC{((\pidA, \gamma^\pidA_1, \sigma^\pidA_2, \DMap^\pidA_1, \AccPP, \stmt_1) \ \ \ \Mid ... \Mid
			 	  (\pidZ, \gamma^\pidZ_1, \sigma^\pidZ_2, \DMap^\pidZ_1, \AccPP, \stmt_1))}
			\crcr\sC{\Deval{\locLL_3}{\codeLL_2} 
				((\pidA, \gamma^\pidA_2, \sigma^\pidA_3, \DMap^\pidA_2, \AccPP, \Skip) \Mid ... \Mid
				 (\pidZ, \gamma^\pidZ_2, \sigma^\pidZ_3, \DMap^\pidZ_2, \AccPP, \Skip))}
		\crcr
			\restC{\{\Restore(}\extC{\x_\vl}\restC{,\ \gamma^\pid_1,\ \sigma^\pid_3,\ \AccPP) = (\sigma^\pid_4, \locL^\pid_4)\}^\pidZ_{\pid = \pidA}}
		\crcr\qquad
			\ssC{((\pidA, \gamma^\pidA_1, \sigma^\pidA_4, \DMap^\pidA_2, \AccPP, \stmt_2) \ \ \ \Mid ... \Mid
				   (\pidZ, \gamma^\pidZ_1, \sigma^\pidZ_4, \DMap^\pidZ_2, \AccPP, \stmt_2))}
				\crcr\ssC{\Deval{\locLL_5}{\codeLL_3} 
				((\pidA, \gamma^\pidA_3, \sigma^\pidA_5, \DMap^\pidA_3, \AccPP, \Skip) \Mid  ... \Mid
				 (\pidZ, \gamma^\pidZ_3, \sigma^\pidZ_5, \DMap^\pidZ_3, \AccPP, \Skip))}
		\crcr
			\resoC{\{\Resolve\_\mathrm{Retrieve}(\extC{\x_\vl},\ \AccPP, \gamma^\pid_1, \sigma^\pid_5) 
				= ([(\val^\pid_{t1}, \val^\pid_{e1}), ..., (\val^\pid_{tm}, \val^\pid_{em})], 
					\ExprC{\n^\pid}, \locL^\pid_6)\}^\pidZ_{\pid = \pidA}}
		\crcr
			\resoC{\MPC{resolve}([\ExprC{\n^\pidA}, ..., \ExprC{\n^\pidZ}], 
				[[(\val^\pidA_{t1}, \val^\pidA_{e1}), ..., (\val^\pidA_{tm}, \val^\pidA_{em})]], ..., 
				 [(\val^\pidZ_{t1}, \val^\pidZ_{e1}), ..., (\val^\pidZ_{tm}, \val^\pidZ_{em})]])} 
				\crcr\qquad\resoC{= [[\val^\pidA_1, ..., \val^\pidA_m], ... [\val^\pidZ_1, ..., \val^\pidZ_m]]}
		\crcr
			\resoC{\{\Resolve\_\mathrm{Store}(\extC{\x_\vl},\ \AccPP, \gamma^\pid_1, \sigma^\pid_5, 
				[\val^\pid_1, ..., \val^\pid_m]) = (\sigma^\pid_6, \locL^\pid_7)\}^\pidZ_{\pid = \pidA}}
		\crcr
			\locLL_6 = \locLL_1 \addL (\pidA, \locL^\pidA_2) \Mid ... \Mid (\pidZ, \locL^\pidZ_2) \addL \locLL_3 
						\addL (\pidA, \locL^\pidA_4) \Mid ... \Mid (\pidZ, \locL^\pidZ_4) \addL \locLL_5
						\addL (\pidA, \locL^\pidA_6) \Mid ... \Mid (\pidZ, \locL^\pidZ_6) 
		\crcr 
			\locLL_7 = \locLL_6 \addL (\pidA, \locL^\pidA_7) \Mid ... \Mid (\pidZ, \locL^\pidZ_7)
		\qq 
			\codeLL_4 = \codeLL_1 \addC \codeLL_2 \addC \codeLL_3
	\end{array} }
	{\begin{array}{l} 
	((\pidA, \gamma^\pidA, \sigma^\pidA, \DMap^\pidA, \Acc, \If\ (\ExprC\Expr)\ \sC{\stmt_1}\ \Else\ \ssC{\stmt_2}) 
	 \Mid ... \Mid 
	 (\pidZ, \gamma^\pidZ, \sigma^\pidZ, \DMap^\pidZ, \Acc, \If\ (\ExprC\Expr)\ \sC{\stmt_1}\ \Else\ \ssC{\stmt_2})) 
		\Deval{\locLL_7}{\codeLL_4 \addC \codeSP{iep}} 
		\crcr ((\pidA, \gamma^\pidA, \resoC{\sigma^\pidA_{6}}, \resoC{\DMap^\pidA_{3}}, \Acc, \Skip) 
		\qq \-\ \-\ \Mid ... \Mid 
		 (\pidZ, \gamma^\pidZ, \resoC{\sigma^\pidZ_{6}}, \resoC{\DMap^\pidZ_{3}}, \Acc, \Skip) )
		\end{array}}
	\caption{\piccoC\ rule Private If Else - Variable Tracking.}
	\label{Fig: iep vt}
\end{subfigure}
\\ \\
\begin{subfigure}{\textwidth}
	\inferrule{\begin{array}{l}
		\ExprC{((\pid, \hgamma, \hsigma,\ \bsq, \bsq, \hExpr)\ \ \Mid \hConfig)\ \ }
		\ExprC{\Veval_{\codeVLL_1}\ }
			\ExprC{((\pid, \hgamma,\ \hsigma_1, \bsq, \bsq, \hat{n})\ \ \ \Mid \hConfig_1)}
		\qq \ExprC{\hat{n}} \neq \ExprC{0}
		\crcr\sC{((\pid, \hgamma, \hsigma_1, \bsq, \bsq, \hstmt_1) \Mid \hConfig_1)\ }  
			\sC{\Veval_{\codeVLL_2}((\pid, \hgamma_1, \hsigma_2, \bsq, \bsq, \Skip)\Mid \hConfig_2)}
	\end{array}}
	{\begin{array}{l}
	((\pid, \hgamma, \hsigma, \bsq, \bsq, \If (\ExprC{\hExpr})\ \sC{\hstmt_1}\ \Else\ \ssC{\hstmt_2}) \Mid \hConfig)
		\Veval_{\codeVLL_1\addC\codeVLL_2\addC[\codeVS{iet}]} 
		((\pid, \hgamma, \hsigma_2, \bsq, \bsq, \Skip) \Mid \hConfig_2)\end{array}}
	\caption{\vanillaC\ rule If Else True.}	
	\label{Fig: if else \vanillaC true}
\end{subfigure}	
\end{tabular}
\caption{\Code{if else} branching on private data example (\ref{Fig: if else \piccoC code}, \ref{Fig: if else \piccoC expanded}) matching to the \piccoC\ variable-tracking (\ref{Fig: iep vt}) and \vanillaC\ (\ref{Fig: if else \vanillaC true}) semantic rules. Coloring in the rules highlight the corresponding code and rule execution.}
\label{Fig: if else color}
\end{figure*}

\paragraph{Conditional Code Block Variable Tracking}
For this style of tracking, we first evaluate expression $\Expr$ over environment $\gamma$, memory $\sigma$ and accumulator $\Acc$ 
to obtain some number $n$; the same environment, 
and a potentially updated memory (e.g. in the case $\Expr = x++$). 
% green / extract _vars
We then extract the non-local variables that are modified within either branch, and check whether multi-level modifications or array writes at a public index occur. This is achieved with Algorithm $\DynExtract$ by iterating through both statement $\stmt_1$ and $\stmt_2$ and storing the variable names in list $\vl_{\Acc+1}$, as well as updating and returning a tag to indicate whether we have found multi-level modifications (0 for false, 1 for true).
%
% red / initialize_vars
Next we call Algorithm $\Initialize$, which 
stores $n$ as the value of a temporary variable $\res_{\Acc+1}$, using $\Acc +1$ to denote the current level of nesting in the upcoming \Code{then} and \Code{else} statements. The variable $\res_{\Acc+1}$  is later used in the resolution phase, to select the result according to the branching condition. 
It then iterates through the list of variables, creating two temporary versions of each variable, named $\x\_{then\_\Acc}$ and $\x\_{else\_\Acc}$, and storing each in memory with the initial value of what $\x$ has in the memory $\sigma_1$. 
%
% light blue / stmt 1
Next is the evaluation of the \TT{then} statement, and afterwards 
%
%
% yellow / restore_vars
we must restore the original memory. %, such that $\sigma_1\subset\sigma_5$. 
To do this, we call $\Restore$, which iterates through each of the variables $\x$ contained within $\vl_{\Acc+1}$, storing their current value into their \Code{then} temporary (i.e., $\x\_\mathit{then}_{\Acc+1} = \x$) and restoring their original value from their \Code{else} temporary (i.e., $\x = \x\_\mathit{else}_{\Acc+1}$). 
Once we have completed this, 
%
% light purple / stmt 2
 the evaluation of the \TT{else} statement can occur.

% orange / resolve_vars
Finally, we need to perform the resolution of all changes made to variables in either branch. 
To enable this, we call Algorithm $\ResolveR$ to iterate through each of the variables within $\vl_{\Acc+1}$ and grab their values accordingly, as well as retrieving the result of the private condition (whose value we stored in $\res_{\Acc+1}$). 
We then use multiparty protocol $\MPC{resolve}$ to facilitate the resolution of the true values, as these computations require communication between parties. 
For variables that are not array or pointer variables (e.g., those in~\ref{Fig: if else \piccoC code}), we perform a series of binary operations over the byte values of the private variables as shown in~\ref{Fig: if else \piccoC expanded} (e.g., \Code{c=(res$\cdot$c\_t)+((1-res)$\cdot$c\_e)}). 
The process is similar for arrays, with some addition bookkeeping due to their structure as a const pointer referring to the location with the array data. 
For pointers, we must handle the different locations referred to from each branch, merging the two location lists and finding what the true location is. 
The resolved values are then returned, and Algorithm $\ResolveS$ stores all each back into memory for its respective variable. 
Notice that, in the conclusion, we revert to the original environment $\gamma$. In this way, all the temporary variables we used become out of scope and are discarded - in particular, this prevents reusing the same temporary variable mapping if we have multiple (not nested) private if else statements.

\begin{figure*} \footnotesize
\begin{tabular}{l}
\begin{tabular}{l l}
\hspace{0.3cm}
\begin{subfigure}{.27\textwidth}
\begin{lstlisting}
private int a=3,
		b=7,c=5,*p=&a;
if ($\Code{\ExprC{a>b}}$)
	$\Code{\sC{*p=c;}}$ 
else 
	$\Code{\ssC{p=\&b;}}$
\end{lstlisting}
	\caption{\piccoC\ code.}
	\label{Fig: if else \piccoC code dp}	
\end{subfigure}		
&
\begin{subfigure}{0.69\textwidth}
\begin{lstlisting}[emph={[2]res, resolve}, emphstyle={[2]\color{blue}}]
private int a=3,b=7,c=5,*p=&a,$\Code{\initC{res=}\ExprC{a>b}}$;
$\initC{\DMap[\Acc][\loc_\TT{p}]=([1,[(\loc_\TT{a}, 0)],[1],1],[],0, \Code{private int*});}$  
$\sC{\DMap[\Acc][\loc_\TT{a}]=(3,0,0,\Code{private int});}$   $\sC{\Code{*p=c;}}$ 
$\restC{\DMap[\Acc][\loc_\TT{p}]=([1,[(\loc_\TT{a}, 0)],[1],1],[1,[(\loc_\TT{a}, 0)],[1],1],1, \Code{private int*});}$
$\restC{\DMap[\Acc][\loc_\TT{a}]=(3,5,1, \Code{private int});}$ $\restC{\loc_\TT{p}=\DMap[\Acc][\loc_\TT{p}][0];}$  $\restC{\loc_\TT{a}=\DMap[\Acc][\loc_\TT{a}][0];}$
$\Code{\ssC{p=\&b;}}$
$\resoC{\loc_\TT{p}=\Code{resolve(res,}\DMap[\Acc][\loc_\TT{p}],\loc_\TT{p});}$   $\resoC{\loc_\TT{a}=\Code{resolve(res,}\DMap[\Acc][\loc_\TT{a}],\loc_\TT{a});}$
\end{lstlisting}
	\caption{Location-tracking execution.}	
	\label{Fig: if else piccoC expanded dp}
\end{subfigure} 
\end{tabular}
\\ \\
\begin{subfigure}{\textwidth}
\inferrule{\begin{array}{l} 
		\qquad
			\ExprC{((\pidA, \gamma^\pidA, \sigma^\pidA, \DMap^\pidA, \Acc, \Expr) \ \ \Mid ... \Mid
				 	 (\pidZ, \gamma^\pidZ, \sigma^\pidZ, \DMap^\pidZ, \Acc, \Expr))} 
			\crcr\ExprC{\Deval{\locLL_1}{\codeLL_1} 
			((\pidA, \gamma^\pidA, \sigma^\pidA_1, \DMap^\pidA_1, \Acc, n^\pidA) \Mid ... \Mid
			  (\pidZ, \gamma^\pidZ, \sigma^\pidZ_1, \DMap^\pidZ_1, \Acc, n^\pidZ))}
		\qq 
			\{(\ExprC\Expr) \isPriv \gamma^\pid\}^\pidZ_{\pid = \pidA}
		\crcr 
			\extC{\{\DynExtract(}\sC{\stmt_1}\extC{,}\ \ssC{\stmt_2} \extC{, \gamma^\pid) = (\x_\vl, 1)\}^\pidZ_{\pid = \pidA}}
		\crcr
			\initC{\{\DynInit(\DMap^\pid_1,}\ \extC{\x_\vl}\initC{,\ \gamma^\pid,\ \sigma^\pid_1,\ \ExprC{\n^\pid}, \AccPP) = (\gamma^\pid_1, \sigma^\pid_2, \DMap^\pid_2, \locL^\pid_2)\}^\pidZ_{\pid = \pidA}}
		\crcr\qquad 
			\sC{((\pidA, \gamma^\pidA_1, \sigma^\pidA_2, \DMap^\pidA_2, \AccPP, \stmt_1) \ \ \ \Mid  ... \Mid
				  (\pidZ, \gamma^\pidZ_1, \sigma^\pidZ_2, \DMap^\pidZ_2, \AccPP, \stmt_1))}
				\crcr\sC{\Deval{\locLL_3}{\codeLL_2} 
				((\pidA, \gamma^\pidA_2, \sigma^\pidA_3, \DMap^\pidA_3, \AccPP, \Skip) \Mid  ... \Mid
				 (\pidZ, \gamma^\pidZ_2, \sigma^\pidZ_3, \DMap^\pidZ_3, \AccPP, \Skip))}
		\crcr
			\restC{\{\DynRestore(\sigma^\pid_3, \DMap^\pid_3, \AccPP) = (\sigma^\pid_4, \DMap^\pid_4, \locL^\pid_4)\}^\pidZ_{\pid = \pidA}}
		\crcr\qquad
			\ssC{((\pidA, \gamma^\pidA_1, \sigma^\pidA_4, \DMap^\pidA_4, \AccPP, \stmt_2) \ \ \ \Mid  ... \Mid
				    (\pidZ, \gamma^\pidZ_1, \sigma^\pidZ_4, \DMap^\pidZ_4, \AccPP, \stmt_2))}
				\crcr\ssC{\Deval{\locLL_5}{\codeLL_3} 
				((\pidA, \gamma^\pidA_3, \sigma^\pidA_5, \DMap^\pidA_5, \AccPP, \Skip) \Mid  ... \Mid
				  (\pidZ, \gamma^\pidZ_3, \sigma^\pidZ_5, \DMap^\pidZ_5, \AccPP, \Skip))}
		\crcr
			\resoC{\{\DynResolve\_\mathrm{Retrieve}(\gamma^\pid_1, \sigma^\pid_5, \DMap^\pid_5, \AccPP) 
				= ([(\val^\pid_{t1}, \val^\pid_{e1}), ..., (\val^\pid_{tm}, \val^\pid_{em})], 
					\ExprC{\n^\pid}, \locL^\pid_6)\}^\pidZ_{\pid = \pidA}}
		\crcr
			\resoC{\MPC{resolve}([\ExprC{\n^\pidA}, ..., \ExprC{\n^\pidZ}], 
				[[(\val^\pidA_{t1}, \val^\pidA_{e1}), ..., (\val^\pidA_{tm}, \val^\pidA_{em})]], ..., 
				 [(\val^\pidZ_{t1}, \val^\pidZ_{e1}), ..., (\val^\pidZ_{tm}, \val^\pidZ_{em})]])} 
				\crcr\qquad\resoC{= [[\val^\pidA_1, ..., \val^\pidA_m], ... [\val^\pidZ_1, ..., \val^\pidZ_m]]}
		\crcr
			\resoC{\{\DynResolve\_\mathrm{Store}(\DMap^\pid_5, \sigma^\pid_5, \AccPP, [\val^\pid_1, ..., \val^\pid_m]) 
				= (\sigma^\pid_6, \DMap^\pid_6, \locL^\pid_7)\}^\pidZ_{\pid = \pidA}}
		\crcr
			\locLL_6 = \locLL_1 \addL (\pidA, \locL^\pidA_2) \Mid ... \Mid (\pidZ, \locL^\pidZ_2) \addL \locLL_3 
						\addL (\pidA, \locL^\pidA_4) \Mid ... \Mid (\pidZ, \locL^\pidZ_4) \addL \locLL_5
						\addL (\pidA, \locL^\pidA_6) \Mid ... \Mid (\pidZ, \locL^\pidZ_6) 
		\crcr 
			\locLL_7 = \locLL_6 \addL (\pidA, \locL^\pidA_7) \Mid ... \Mid (\pidZ, \locL^\pidZ_7)
		\qq 
			\codeLL_4 = \codeLL_1 \addC \codeLL_2 \addC \codeLL_3
	\end{array} }
	{\begin{array}{l} 
	((\pidA, \gamma^\pidA, \sigma^\pidA, \DMap^\pidA, \Acc, \If\ (\ExprC\Expr)\ \sC{\stmt_1}\ \Else\ \ssC{\stmt_2}) 
	 \Mid ... \Mid 
	 (\pidZ, \gamma^\pidZ, \sigma^\pidZ, \DMap^\pidZ, \Acc, \If\ (\ExprC\Expr)\ \sC{\stmt_1}\ \Else\ \ssC{\stmt_2})) 
		 \Deval{\locLL_7}{\codeLL_4 \addC \codeMP{iepd}} 
		 \crcr ((\pidA, \gamma^\pidA, \resoC{\sigma^\pidA_{6}}, \resoC{\DMap^\pidA_{6}}, \Acc, \Skip) 
		 	 \qq \-\ \-\ \Mid ... \Mid
		  	(\pidZ, \gamma^\pidZ, \resoC{\sigma^\pidZ_{6}}, \resoC{\DMap^\pidZ_{6}}, \Acc, \Skip))
		\end{array}}
	\caption{\piccoC\ rule Private If Else - Location Tracking.}
	\label{Fig: iep lt}
\end{subfigure}
\\ \\
\begin{subfigure}{\textwidth}
Multiparty If Else False  		\\
\inferrule{\begin{array}{l}
	\qquad 
		\ExprC{((\pidA, \hgamma,\ \hsigma,\ \bsq, \bsq, \hExpr)\ \ \ \Mid ... \Mid
		  (\pidZ, \hgamma,\ \hsigma,\ \bsq, \bsq, \hExpr))} 
	 \crcr \ExprC{\Veval_{\codeVLL_1} 
			((\pidA, \hgamma,\ \hsigma_1, \bsq, \bsq, \hat{n})\ \ \ \Mid ... \Mid
			  (\pidZ, \hgamma,\ \hsigma_1, \bsq, \bsq, \hat{n}))}
	\qq \ExprC\hn = 0
	\crcr\qquad \sC{((\pidA, \hgamma,\ \hsigma_1, \bsq, \bsq, \hstmt_1)\ \ \Mid ... \Mid
			   (\pidZ, \hgamma,\ \hsigma_1, \bsq, \bsq, \hstmt_1))} 
	\crcr \sC{\Veval_{\codeVLL_2} 
			((\pidA, \hgamma_1, \hsigma_2, \bsq, \bsq, \Skip) \Mid ... \Mid
			 (\pidZ, \hgamma_1, \hsigma_2, \bsq, \bsq, \Skip))}
	\crcr\qquad \ssC{((\pidA, \hgamma,\ \hsigma_1, \bsq, \bsq, \hstmt_2)\ \ \Mid ... \Mid
			   (\pidZ, \hgamma,\ \hsigma_1, \bsq, \bsq, \hstmt_2))} 
	\crcr \ssC{\Veval_{\codeVLL_3} 
			((\pidA, \hgamma_2, \hsigma_3, \bsq, \bsq, \Skip) \Mid ... \Mid
			 (\pidZ, \hgamma_2, \hsigma_3, \bsq, \bsq, \Skip))}
	\end{array}}
	{\begin{array}{l}
	((\pidA, \hgamma, \hsigma,\ \bsq, \bsq, \If (\hExpr)\ \hstmt_1\ \Else\ \hstmt_2)\Mid ... \Mid
	  (\pidZ, \hgamma, \hsigma,\ \bsq, \bsq, \If (\hExpr)\ \hstmt_1\ \Else\ \hstmt_2)) 
		\Veval_{\codeVLL_1\addC\codeVLL_2\addC\codeVLL_3\addC[\codeVS{mpief}]} 
		\crcr
		((\pidA, \hgamma, \ssC{\hsigma_3}, \bsq, \bsq, \Skip)
		\qq\ \Mid ... \Mid
		 (\pidZ, \hgamma, \ssC{\hsigma_3}, \bsq, \bsq, \Skip))
		 \end{array}}
	\caption{\vanillaC\ rule Multiparty If Else False.}
	\label{Fig: van mpief}
\end{subfigure}
\end{tabular}
\caption{\Code{if else} branching on private data example (\ref{Fig: if else \piccoC code dp}, \ref{Fig: if else piccoC expanded dp}), \piccoC\ location-tracking (\ref{Fig: iep lt}), and \vanillaC\ Multiparty If Else False (\ref{Fig: van mpief}) rules. Coloring in the rules highlight the corresponding code and rule execution.}
\label{Fig: if else color dp}
\end{figure*}

\paragraph{Conditional Code Block Location Tracking}
\label{sec: priv if lt desc}
Here we track modifications during private-conditioned branches at the level of memory blocks and offsets, which ensures that we do not update any data in memory inaccurately, as is shown in Figure~\ref{Fig: simple pointer challenge ex} using variable tracking SMC techniques.  
To facilitate this, we use the mapping structure $\DMap$ to track changes to each location at each level of nesting. This structure maps locations to a four-tuple of the original value, the \TT{then} branch value, a tag to notate whether the \TT{then} branch value was updated during the restoration phase, and the type of value stored (i.e., $(\loc, \offset) \to$ $(\val_1, \val_2$, $\tagb$, $\Type$)). The tag is used to allow us to add to $\DMap$ as we encounter pointer dereference writes and array writes at public indices without needing to track which branch we are in. It is always initialized as 0, and updated to 1 when we enter the restoration phase and store a value into the \TT{then} position. This way, if a location was added in the \TT{else} branch (i.e., was not modified in the \TT{then} branch), we know to use the original value as the \TT{then} value when we resolve the true value of that location at the end. 

The overall structure of the location tracking rule is similar to the variable tracking rule. 
We first evaluate $\Expr$ to $\n$, then call $\DynExtract$ to find variables that are modified during the execution of either branch and that there are multi-level modifications within at least one branch. 
We then call $\DynInit$, which stores the result of the private conditional and uses the variables we found to create the initial mapping $\DMap$.  
Next, we proceed to evaluate the \TT{then} branch, and call $\DynRestore$ to update $\DMap$ with the ending \TT{then} values for all locations that are tracked and restore the original values back into memory. 
After, we evaluate the \TT{else} branch and, once complete, call $\DynResolveR$ to retrieve the result of the conditional and the \TT{then} and \TT{else} values for each location. 
As with variable tracking, we use multiparty protocol $\MPC{resolve}$ to obtain the true values, and then store them back into their respective locations using Algorithm $\DynResolveS$. 
It is important to note that when we evaluate a pointer dereference write or array write at a public index inside a branch, we check to see if the given location is in $\DMap[\Acc]$.  If it is not, we add a mapping to store the original data (i.e., $(\loc, \offset) \to$ (\TT{orig}, $\Null$, $0$, $\Type$)). Notice that the data can either be a regular value (i.e., for a memory block storing a private int) or a pointer data structure representing a private pointer (i.e., for a memory block storing a private int*). 

In Figure~\ref{Fig: if else piccoC expanded dp}, we show an approximation of the execution of the pointer challenge example shown in Figure~\ref{Fig: simple pointer challenge ex}. 
When we reach the private-conditioned branching statement, we first store the result of the condition \TT{a < b}. As we execute the \TT{then} branch, we add the entry for $\loc_\TT{a}$ to $\DMap$, as \TT{p} refers to \TT{a}. We restore between branches by resetting $\loc_\TT{a}$ to its original value stored in $\DMap[\loc_\TT{a}][0]$. As we execute the \TT{else} branch, we add the entry for $\loc_\TT{p}$ to $\DMap$, as we are modifying which location \TT{p} points to. Finally, we resolve the true values for each modified location in $\DMap$. This approach eliminates the issues shown in Figure~\ref{Fig: simple pointer challenge ex}, as we do not rely on the pointer's current location to appropriately resolve the true values.

\subsection{Overshooting Memory Bounds} \label{Sec: Overshooting}
It is possible to overshoot memory bounds in both \vanillaC\ and \piccoC. 
When overshooting occurs, we read the bytes of data as the type we expected it to be (i.e., bytes containing private data accessed by a public variable would be decoded as though they are public - no encryption or decryption occurs, but computations using the variable beyond that point will be garbage). 
This ensures that no information about private data can be leaked when overshooting. 
This is discussed further in Appendix~\ref{app: array oob} for the interested reader. 
We can only prove correctness over well-aligned accesses (i.e., those that iterate only over aligned elements of the same type, as with one array spilling into a subsequent array), as these would still produce readable data that is not garbage. 
When proving noninterference, we must prove that these cases (particularly those involving private data) cannot leak any information about the private data that is affected. We discuss this in more detail in the following section.

\section{Metatheory}\label{Sec: Metatheory}

In this section we present the main methatheoretic results, 
with proof sketches and important definitions given in Appendix~\ref{app: metatheory}.
We will begin by discussing how we leverage multiparty protocols, then proceed to discuss 
the most challenging result, which is correctness. 
Once correctness is proven, noninterference follows from a standard argument, with some adaptations needed to deal with the fact that private data is encrypted and 
that we want to show indistinguishability of evaluation traces.

\subsection{Multiparty Protocols}
In our semantics, we leverage multiparty protocols to compartmentalize
the complexity of handling private data. In the formal treatment this
corresponds to using Axioms in our proofs to reason about
protocols. These Axioms allow us to guarantee the desired properties
of correctness and noninterference for the overall model, to provide
easy integration with new, more efficient protocols as they become
available, and to avoid re-proving the formal guarantees for the
entire model when new protocols are added.  Proving that these Axioms
hold is a responsibility of the library implementor in order to have
the system fully encompassed by our formal model.  Secure
multiparty computation protocols that already come with guarantees
of correctness and security are the only ones worth considering, so the implementor would only need to
ensure that these guarantees match our definitions of correctness and
noninterference.

For example, if private values are represented using Shamir secret sharing~\cite{Shamir79}, Algorithm~\ref{algo: mpc mult}, $\MPC{mult}$, represents
a simple multiparty protocol for multiplying
private values from~\cite{Gennaro98}.
In Algorithm~\ref{algo: mpc mult}, lines 2 and
3 define the protocol, while lines 1, 4, and 5 relate the
protocol to our semantic representation.

\begin{algorithm*}\footnotesize
\caption{$\n^\pid_3 \gets \MPC{mult}(\n^\pid_1, \n^\pid_2)$}
\label{algo: mpc mult}
\begin{algorithmic}[1]
	\STATE Let $f_a(\pid) = \n^\pid_1$ and $f_b(\pid) = \n^\pid_1$.
	\STATE Party $\pid$ computes the value $f_a(\pid) \cdot f_b(\pid)$ and creates its shares by choosing a random polynomial $h_\pid(x)$ of degree $t$, such that $h_\pid(0)=f_a(\pid) \cdot f_b(\pid)$. Party $\pid$ sends to each party $i$ the value $h_\pid(i)$. 
	\STATE After receiving shares from all other parties, party $\pid$ computes their share of $a \cdot b$ as the linear combination $H(\pid) = \sum^{\pidZ}_{i=1} \lambda_i h_i(\pid)$.
	\STATE Let $n^\pid_3 = H(\pid)$
	\RETURN $n^\pid_3$
\end{algorithmic}
\end{algorithm*}

When computation is performed by $q$ parties, at most $t$ of whom may collude ($t < q/2$), Shamir secret sharing encodes a private integer $a$ by choosing a polynomial $f(x)$ of degree $t$ with random coefficients such that $f(0) = a$ (all computation takes place over a finite field). Each participant obtains evaluation of $f$ on a unique non-zero point as their representation of private $a$; for example, party $\pid$ obtains $f(\pid)$. This representation has the property that combining $t$ or fewer shares reveals no information about $a$ as all values of $a$ are equally likely; however, possession of $t+1$ or more shares permits recovering of $f(x)$ via polynomial interpolation and thus learning $f(0) = a$. 

Multiplication in Algorithm~\ref{algo: mpc mult} corresponds to each party locally multiplying shares of inputs $a$ and $b$, which computes the product, but raises the polynomial degree to $2t$. The parties consequently re-share their private intermediate results to lower the polynomial degree to $t$ and re-randomize the shares. Values $\lambda_\pid$ refer to interpolation coefficients which are derived from the computation setup and party $\pid$ index.

In order to preserve the correctness and noninterference guarantees of our
model when such an algorithm is added, a library developer will need
to guarantee that the implementation of this algorithm is correct, meaning that it has the expected input output behavior, and it guarantees noninterference on what is observable.

\subsection{Correctness} \label{sec: erasure} 
We first show the correctness of the \piccoC\ semantics with respect
to the \vanillaC\ semantics. As usual we will do this by establishing
a simulation relation between a \piccoC\ program and a corresponding
\vanillaC\ program. To do so we face two main challenges.

First, we need to guarantee that
the private operations in a \piccoC\ program are reflected in the
corresponding \vanillaC\ program and that the evaluation steps between the two programs correspond. 
To address the former issue, we define an \emph{erasure function} $\bm{\erasure}$ which
translates a \piccoC\ program into a \vanillaC\ program by erasing all
labels and replacing all functions specific to \piccoC\ with their public equivalents. This function also translates memory.
As an example, let us consider
pmalloc; in this case, we have
$\bm{\erasure}({\PMalloc(\Expr,\ \Type)} 
= {(\Malloc(\bm{\erasure}(\Expr) \cdot \sizeof(\bm{\erasure}(\Type))))})$.
That is, pmalloc is rewritten to use malloc, and since the given private type is now public we can use the sizeof function to find the size we will need to allocate. 
To address the latter issue, we have defined our operational semantics in terms of big-step evaluation judgments which allow the evaluation trees of the two programs to have a corresponding structure. In particular, notice how we designed
the Private If Else rule to perform multiple operations in one step, guaranteeing that we have similar ``synchronization points'' in the two evaluation trees. 

Second, we need to guarantee that at each evaluation step the memory
used by a \piccoC\ program corresponds to the one used by the
\vanillaC\ program. 
Given that we simulate multiparty execution over $\pidZ$ parties in \piccoC, we will also use $\pidZ$ parties in \vanillaC. This allows us to easily reason about both local and global semantic rules, as each \piccoC\ party has a corresponding \vanillaC\ party at an identical position in the evaluation trace.
Unfortunately, just applying the function $\bm{\erasure}$ to the \piccoC\ memories in the evaluation trace is not enough. 
In our setting, with
explicit memory management, manipulations of pointers, and array overshooting, guaranteeing a correspondence between the memories becomes particularly
challenging. To better understand the issue here, let us consider
the rule Private Free, discussed in
Section~\ref{subsec: picco mem alloc/dealloc}. 
Remember that our semantic model associates a pointer
with a list of locations, and the Private Free rule frees the
first location in the list, and relocates the content of that location
if it is not the true location. 
Essentially, this rule may swap the content of two locations if the first location
in the list is not the location intended to be freed and 
make the \piccoC\ memory and the \vanillaC\ memory look quite
different. To address this challenge in the proof of correctness, we use a \emph{map}, denoted $\psi$,
to track the swaps that happen when the rule
Private Free is used. The simulation uses and modifies this map to
guarantee that the two memories correspond.
Another related challenge comes from array overshooting. If, by
overshooting an array, a program goes over or into memory blocks of
different types, we may end up in a situation where the locations in
the \piccoC\ memory are significantly different from the ones in the
\vanillaC\ memory. This is mostly due to the size of private types
being larger than their public counterpart. One option to address this
problem would be to keep a more complex map between the two
memories. However, this can result in a much more complex proof, for
capturing a behavior that is faulty, in principle. Instead, we prefer
to focus on situations where overshooting arrays are \emph{well-aligned}, in the sense that they access only memory locations and blocks of the right type and size. 
An illustration of this is given in the Appendix, Figure~\ref{fig: overshooting alignment}. 

Before stating our correctness, we need to introduce some notation.  We
use party-wise lists of codes $\codeLL = (\pidA, [\code_1,\ldots,\code_n])\Mid ... \Mid(\pidZ, [\code_1,\ldots,\code_n]),$ $\codeVLL = (\pidA, [\codeV_1,\ldots,\codeV_m])\Mid$ $...$ $\Mid(\pidZ, [\codeV_1,\ldots,\codeV_m])$ in evaluations (i.e.,
$\eval_{\codeLL})$ to describe the rules of the semantics that
are applied in order to derive the result.  We write
$\codeLL\cong \codeVLL$ to state that the \piccoC\
codes are in correspondence with the \vanillaC\
codes, $\codeLL^\pid$ to denote the list of codes for a specific party $\pid$, and $\codeLL_1::\codeLL_2$ to denote concatenation of the party-wise evaluation code lists. 
We write $\{...\}^\pidZ_{\pid =1}$ to show that an assertion holds for all parties. 
Almost every \piccoC\ rule is in one-to-one
correspondence with a single \vanillaC\ rule within an execution trace (exceptions being private-conditioned branches, \TT{pmalloc}, and multiparty comparison operations).

 We write ${\stmt}\cong \hstmt$ to state that the \vanillaC\ configuration statement $\hstmt$ can be obtained by applying the erasure function to the \piccoC\ statement ${\stmt}$. Similarly, we can extend this notation to configuration by also using the map $\psi$. That is, we write
$(\pid,$ $\gamma{},$ $\sigma{},$ $\DMap$, $\Acc,$ ${\stmt})$ $\cong_\psi$ $(\pid,$ $\hgamma{},$ $\hsigma{},$ $\bsq,$ $\bsq,$ $\hstmt)$ to state that the \vanillaC\ configuration  $(\pid,$ $\hgamma{},$ $\hsigma{},$ $\bsq,$ $\bsq,$ $\hstmt)$ can be obtained by applying the erasure function to the \piccoC\ configuration  $(\pid,$ $\gamma{},$ $\sigma{},$ $\DMap$, $\Acc,$ ${\stmt})$, and memory $\hsigma$ can be obtained from $\sigma{}$ by using the map $\psi$.

We state correctness in terms of  evaluation trees, since we will use evaluation trees to prove a strong form of noninterference  in the next subsection. We use capital Greek letters $\Pi, \Sigma$ to denote evaluation trees.  In the \piccoC\ semantics, we write $\Pi \deriv ((\pidA, \gamma^{{\pidA}}_{},$ $\sigma^{{\pidA}}_{},$ $\DMap^{{\pidA}}_{}$, $\Acc^{{\pidA}}_{},$ $\stmt^{{\pidA}})\ \Mid ...\Mid$ 
	$(\pidZ, \gamma^{{\pidZ}}_{},$ $\sigma^{{\pidZ}}_{},$ $\DMap^{{\pidZ}}_{}$, $\Acc^{{\pidZ}}_{},$ $\stmt^{{\pidZ}}))$ 
	$\Deval{\locLL}{\codeLL}$ 
	$((\pidA, {\gamma^{{\pidA}}_{1}},$ ${\sigma^{{\pidA}}_{1}},$ $\DMap^{{\pidA}}_{1}$, $\Acc^{{\pidA}}_{1},$ ${\val^{{\pidA}}_{}}) \Mid ... \Mid$ 
	$(\pidZ, {\gamma^{{\pidZ}}_{1}},$ ${\sigma^{{\pidZ}}_{1}},$ $\DMap^{{\pidZ}}_{1}$, $\Acc^{{\pidZ}}_{1},$ $\val^{{\pidZ}}))$, to stress that the evaluation tree $\Pi$ proves as conclusion that, for each party $\pid$, configuration  $(\pid,$ $\gamma^\pid,$ $\sigma^\pid,$ $\DMap^\pid$, $\Acc^\pid,$ $\stmt^\pid)$ evaluates to configuration
$(\pid,$ $\gamma^\pid_1,$ $\sigma^\pid_1,$ $\DMap^\pid_1$, $\Acc^\pid_1,$ $\val^\pid)$ by means of the codes in $\codeLL^\pid$. Similarly, for the \vanillaC\ semantics. We then write $\Pi\cong_\psi \Sigma$ for the extension to evaluation trees of the congruence relation with map $\psi$.

In order to properly reason about global multiparty rules, we must assert that all parties are executing from the same original program with corresponding start states and input. 
To do this, we first show that the non-determinism of the semantics will always bring all parties to the same outcome: given $\pidZ$ parties with corresponding start states, if we reach intermediate states that are not corresponding for one or more parties, then there exists a set of steps that will bring all parties to corresponding states again.

\begin{theorem}[Confluence]
\label{Thm: confluence}
Given ${\Config^\pidA} \Mid ... \Mid {\Config^\pidZ}$ such that $\{{\Config^\pidA} \sim {\Config^\pid}\}^{\pidZ}_{\pid = \pidA}$ 
\\
if $({\Config^\pidA} \Mid ... \Mid {\Config^\pidZ})$ $\Deval{{\locLL_1}}{{\codeLL_1}}$ $({\Config^\pidA_1} \Mid ... \Mid {\Config^\pidZ_1})$ such that $\exists \pid\in\{\pidA...\pidZ\} {\Config^\pidA_1} \not\sim {\Config^\pid_1}$, 
\\
then $\exists$ $({\Config^\pidA_1} \Mid ... \Mid {\Config^\pidZ_1})$ $\Deval{{\locLL_2}}{{\codeLL_2}}$ $({\Config^\pidA_2} \Mid ... \Mid {\Config^\pidZ_2})$
\\ 
such that $\{{\Config^\pidA_2} \sim {\Config^\pid_2}\}^{\pidZ}_{\pid = \pidA}$, 
$\{({\locLL^\pidA_1}\addL{\locLL^\pidA_2}) = ({\locLL^\pid_1}\addL{\locLL^\pid_2})\}^{\pidZ}_{\pid = \pidA}$, 
and $\{({\codeLL^\pidA_1}\addC{\codeLL^\pidA_2}) = ({\codeLL^\pid_1}\addC{\codeLL^\pid_2})\}^{\pidZ}_{\pid = \pidA}$.
\end{theorem}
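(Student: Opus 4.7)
The plan is to proceed by case analysis on the rule used in the derivation $({\Config^\pidA} \Mid ... \Mid {\Config^\pidZ}) \Deval{\locLL_1}{\codeLL_1} ({\Config^\pidA_1} \Mid ... \Mid {\Config^\pidZ_1})$. The relation $\sim$ is intended to capture agreement on all publicly observable state (the statement being evaluated, the environment shape, $\Acc$, the shape of $\DMap$, pointer indirection levels and public location lists, and all public bytes with their permissions) while allowing disagreement on the concrete bytes that encode private data. Rules of the \piccoC\ semantics naturally split into two classes: global multiparty rules, whose premise and conclusion explicitly list all of $\pidA,\ldots,\pidZ$ and thus advance every party in lockstep; and local rules, whose premise refers to only one party together with a tail $\Config$ and which can therefore be applied to one party at a time.

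For a global multiparty rule, the conclusion is immediate: if the inputs of all parties are related by $\sim$ then the rule applies uniformly and its outputs remain related by $\sim$. In this case we take $\locLL_2 = \epsilon$ and $\codeLL_2 = \epsilon$, and the concatenation equalities hold trivially. So the only way a derivation of length one can produce a non-corresponding configuration is if the rule applied was a local rule, fired on some subset of parties while the others remain in their original configurations. The key observation is that local rule applicability depends only on public information (variable types, permissions, public expression values, and $\Acc$), which by $\sim$ agrees across all parties. Consequently, for each party whose configuration has not yet advanced, the very same local rule can be fired with the same public premises; the resulting per-party bytes may differ where private data is written, but $\sim$ is preserved. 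I construct $\locLL_2$ and $\codeLL_2$ by chaining one local-rule application per not-yet-advanced party, producing $({\Config^\pidA_2} \Mid \ldots \Mid {\Config^\pidZ_2})$ with every party past the step. The concatenation equalities $(\locLL^\pidA_1 \addL \locLL^\pidA_2) = (\locLL^\pid_1 \addL \locLL^\pid_2)$ and $(\codeLL^\pidA_1 \addC \codeLL^\pidA_2) = (\codeLL^\pid_1 \addC \codeLL^\pid_2)$ then hold because every party's combined trace is the same multiset of rule applications with the same location annotations, merely placed either in the first slot (for parties that stepped in $\codeLL_1$) or in the second slot (for parties that step in $\codeLL_2$).

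For rules whose outcome depends on the content of private data, I would argue that even when the concrete private bytes diverge, the selection of subsequent rules and the set of memory locations touched are data-oblivious (they are determined by public metadata such as $\nl$, $\indir$, the location list, and $\Acc$). This guarantees that the step-wise restoration above can always be performed symmetrically across parties. A secondary point to verify is that rules entering a private-conditioned branch move $\Acc$ and $\DMap$ identically on all parties, so nested occurrences preserve the hypotheses of the theorem inductively; this lets the argument compose over sequences of steps, not just a single rule.

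The hard part will be the detailed, rule-by-rule verification that local rules truly preserve $\sim$ on public state. The most delicate cases are Multiparty Private Free, where the swap between the selected location and the true location modifies memory contents, and the private pointer dereference and private array operations, where the list of accessed locations is public but the tags are private. For these, I must show that the choice of which concrete location is written (e.g., the first slot $\loc_0$ in Private Free) is fixed by publicly available information, so that every party performs the corresponding update and the resulting $\sim$ relation continues to hold. Once these cases are discharged, the confluence closure $\locLL_2, \codeLL_2$ for local rules is obtained uniformly, and the theorem follows.
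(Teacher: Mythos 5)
Your proposal is correct and lands on the same underlying insight as the paper -- that a party's evaluation is determined entirely by the public part of its configuration, so non-correspondence after a step can only be scheduling lag that the remaining parties repair by performing the very same rule applications -- but it is packaged differently. The paper's proof is a three-line argument that pivots on the Unique party-wise transitions lemma (Lemma~\ref{lem: smc unique rules}): per-party determinism plus identical statements in $\sim$-corresponding start states forces every party onto the \emph{same} execution trace, from which eventual re-correspondence and equality of the concatenated $\locLL$/$\codeLL$ lists follow at once. You instead do a case analysis on rule class (global multiparty versus local), observe that the global case is vacuous for the hypothesis (it preserves $\sim$, so it cannot produce the assumed divergence -- ``immediate'' is better read as ``cannot occur''), and for local rules you explicitly construct the catch-up derivation by firing the same rule on each lagging party, justified by the claim that applicability and the code/location annotations depend only on public data that $\sim$ equates. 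That constructive route actually suffices for the existential conclusion without invoking uniqueness, which is a mild simplification; the price is exactly the rule-by-rule verification you flag (Private Free's deterministic choice of $\loc_0$, data-oblivious location lists for private pointers/arrays, and preservation of $\Acc$ and $\DMap$ across branch entry), which in the paper is absorbed into the induction behind the determinism lemma and the correctness/noninterference machinery. One small precision point: the required equalities are of concatenated \emph{lists}, not multisets; your construction does deliver list equality because each lagging party replays the same rules in the same order with the same public annotations, so state that directly rather than appealing to a multiset of rule applications.
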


We can now state our correctness result showing that if an \piccoC\ program $\stmt$ can be evaluated  to a value $\val$, and the evaluation is well-aligned (it is an evaluation where all the overshooting of arrays are well-aligned), then the \vanillaC\  program $\hat{\stmt}$ obtained by applying the erasure function to $\stmt$, i.e., $\stmt\cong\hat{\stmt}$, can be evaluated to $\hat{\val}$ where $\val\cong\hat{\val}$. This property can be formalized in terms of congruence: 
\begin{theorem}[Correctness]
\label{Thm: erasure}
For every configuration $\{(\pid,\ \gamma^{{\pid}}_{},$ $\sigma^{{\pid}}_{},$ $\DMap^{{\pid}}_{}$, $\Acc^{{\pid}}_{},$ $\stmt^{\pid})\}^{\pidZ}_{\pid = \pidA}$, 
\\ $\{(\pid,$ $\hgamma^\pid,$ $\hsigma^\pid,$ $\bsq,$ $\bsq,$ $\hstmt^\pid)\}^{\pidZ}_{\pid = \pidA}$ and \LocMap\ $\psi$ 
\\ such that $\{(\pid, \gamma^{{\pid}}_{},$ $\sigma^{{\pid}}_{},$ $\DMap^{{\pid}}_{}$, $\Acc^{{\pid}}_{},$ $\stmt^{{\pid}})$ $\Pcong$ 
$(\pid,$ $\hgamma^\pid,$ $\hsigma^\pid,$ $\bsq,$ $\bsq,$ $\hstmt^\pid)\}^{\pidZ}_{\pid = \pidA}$, 
\\ % a 
if $\Pi \deriv ((\pidA, \gamma^{{\pidA}}_{},$ $\sigma^{{\pidA}}_{},$ $\DMap^{{\pidA}}_{}$, $\Acc^{{\pidA}}_{},$ $\stmt^{{\pidA}})\ \Mid ...\Mid$ 
	$(\pidZ, \gamma^{{\pidZ}}_{},$ $\sigma^{{\pidZ}}_{},$ $\DMap^{{\pidZ}}_{}$, $\Acc^{{\pidZ}}_{},$ $\stmt^{{\pidZ}}))$ 
	\\ \-\ \-\ \-\ $\Deval{\locLL}{\codeLL}$ 
	$((\pidA, {\gamma^{{\pidA}}_{1}},$ ${\sigma^{{\pidA}}_{1}},$ $\DMap^{{\pidA}}_{1}$, $\Acc^{{\pidA}}_{1},$ ${\val^{{\pidA}}_{}}) \Mid ... \Mid$ 
	$(\pidZ, {\gamma^{{\pidZ}}_{1}},$ ${\sigma^{{\pidZ}}_{1}},$ $\DMap^{{\pidZ}}_{1}$, $\Acc^{{\pidZ}}_{1},$ $\val^{{\pidZ}}))$ 
\\ for codes $\codeLL \in {\piccoCodes}$,
then there exists a derivation 
\\ % b
$\Sigma \deriv ((\pidA,$ $\hgamma^\pidA,$ $\hsigma^\pidA,$ $\bsq,$ $\bsq,$ $\hstmt^\pidA)\Mid ...\Mid $
	$(\pidZ,$ $\hgamma^\pidZ,$ $\hsigma^\pidZ,$ $\bsq,$ $\bsq,$ $\hstmt^\pidZ))$ 
	\\ $\Deval{}{\codeVLL}$ 
	$((\pidA,$ $\hgamma^\pidA_1,$ $\hsigma^\pidA_1,$ $\bsq,$ $\bsq,$ $\hval^\pidA)\Mid...\Mid$
	$(\pidZ,$ $\hgamma^\pidZ_1,$ $\hsigma^\pidZ_1,$ $\bsq,$ $\bsq,$ $\hval^\pidZ))$ 
\\ for codes $\codeVLL \in \vanillaCodes$ 
and 
% c
a \LocMap\ $\psi_1$ 
% d
such that 
\\ % f
$\codeLL \cong \codeVLL$, 
% g
$\{(\pid, {\gamma^{{\pid}}_{1}},$ ${\sigma^{{\pid}}_{1}},$ $\DMap^{{\pid}}_{1}$, $\Acc^{{\pid}}_{1},$ $\val^{{\pid}}_{})$ $\cong_{\psi_1}$ 
$(\pid,$ $\hgamma^\pid_1,$ $\hsigma^\pid_1,$ $\bsq,$ $\bsq,$ $\hval^\pid)\}^{\pidZ}_{\pid = \pidA}$, 
% h
and $\Pi \cong_{\psi_1} \Sigma$.
\end{theorem}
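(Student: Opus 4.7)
The plan is to proceed by structural induction on the \piccoC\ derivation tree $\Pi$. For each rule that can appear at the root of $\Pi$, I would exhibit a matching \vanillaC\ derivation $\Sigma$ (using either the directly corresponding rule from Figure~\ref{Fig: \vanillaC sem rules}, or, for \piccoC\ rules without a one-to-one counterpart, a composition of several \vanillaC\ rules) and, if necessary, produce an updated location map $\psi_1$ so that the conclusion $\Pi\cong_{\psi_1}\Sigma$ holds. Before entering the case analysis, I would invoke Theorem~\ref{Thm: confluence} to ensure that whenever a multiparty rule is dispatched all parties are in corresponding configurations, letting us assemble global results from party-by-party reasoning. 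The congruence $\Pcong$ must be carried as an invariant: environments and types agree under $\erasure$, public bytes are pointwise identical, and private bytes in $\sigma$ have logical content ($\sigma\ell$) matching the bytes in $\hsigma$ at the location dictated by $\psi$.

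The routine cases (local reads, writes of public values, arithmetic, declarations, statement sequencing, public \Code{if}, \While, function calls without private guards) follow by applying the induction hypothesis to each subderivation and then applying the syntactically matching \vanillaC\ rule with $\psi_1 = \psi$. For rules involving private data with no direct public analog (\PMalloc, \TT{smcinput}, encode/decode pairs around private values), I would use the erasure function explicitly, e.g.\ $\erasure(\PMalloc(\Expr, \Type)) = \Malloc(\erasure(\Expr)\cdot\sizeof(\erasure(\Type)))$, and show that the resulting block agrees with the \piccoC\ block up to privacy-label stripping and the size expansion recorded in $\psi$. Interactive multiparty operations such as $\MPC{b}$ and $\MPC{ar}$ are handled by invoking an axiom stating that when the protocol is supplied with shares whose logical values are $\hn_1,\hn_2$, it returns shares whose logical value is $\hn_1\binop\hn_2$; this axiomatization is exactly the library-implementor obligation discussed for Algorithm~\ref{algo: mpc mult}.

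The genuinely difficult cases are Private If Else (both the variable-tracking and location-tracking variants of Figures~\ref{Fig: iep vt} and~\ref{Fig: iep lt}) and Private Free. For Private If Else, a single \piccoC\ step must be simulated by one \vanillaC\ If Else step that takes only the branch selected by the decrypted condition $\hn$. I would prove a key lemma stating that after executing both branches separated by $\Restore$ (resp.\ $\DynRestore$) and then running $\Resolve$ (resp.\ $\DynResolve$) on top of $\MPC{resolve}$, the resulting concrete memory $\sigma_6$ has the property that, for every tracked variable or location, its logical content equals that of $\sigma_3^\pid$ when $\hn\neq 0$ and that of $\sigma_5^\pid$ otherwise, while non-tracked locations are pointwise unchanged. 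The last clause is enforced by the $\Acc\neq\AccZ$ restriction which forbids public side effects inside private-conditioned code; this is exactly why $\DynExtract$ must be complete on locations modifiable by either branch. For Private Free, $\psi_1$ is obtained from $\psi$ by recording the swap between $\loc_0$ and the true location produced by $\PFree$, and I would verify that $\UpdatePtrLocs$ makes every aliasing pointer consistent with the updated map so that subsequent reads in the \piccoC\ memory project to the corresponding \vanillaC\ bytes.

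The main obstacle will be the Private If Else cases: the induction hypothesis applied to the then- and else-branch subderivations gives \vanillaC\ derivations of \emph{both} branches, but the \vanillaC\ If Else rule takes only one. I would therefore need a separate \emph{confinement lemma}, proved by a nested induction on statements executed at accumulator level $\Acc>\AccZ$, stating that evaluation starting from a state that differs from the ``true'' state only at tracked locations produces a final state whose projection on untracked locations coincides with what the true state would have produced; this plugs directly into the $\MPC{resolve}$ axiom to discharge the if-else case. A final subtlety is array overshooting: the theorem restricts us to well-aligned executions, and I would observe that well-alignment is preserved by every reduction so that $\psi$-translated offsets in $\sigma$ always land on valid aligned offsets in $\hsigma$, thereby keeping the congruence invariant well-defined throughout the trace.
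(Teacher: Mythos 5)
Most of your plan tracks the paper's actual proof: induction on $\Pi$, confluence to justify party-wise reasoning before multiparty rules, axioms for $\MPC{b}$/$\MPC{ar}$, erasing $\PMalloc$ to the composite $\Malloc(\sizeof(\cdot)\cdot\cdot)$ derivation, and threading the swap produced by $\PFree$ through $\psi_1$ together with $\UpdatePtrLocs$. The genuine gap is in your treatment of the Private If Else case. You propose to simulate one \piccoC\ private-branch step by a \emph{single-branch} \vanillaC\ \Code{if else} step (taking only the branch selected by the decrypted guard) and to discharge the mismatch with a new confinement lemma. That strategy cannot deliver the theorem as stated: the conclusion requires $\codeLL \cong \codeVLL$ and $\Pi \cong_{\psi_1} \Sigma$, and under Definition~\ref{Def: code cong} the codes $\mathit{iep}/\mathit{iepd}$ must correspond to $\mathit{mpiet}/\mathit{mpief}$, i.e.\ to the \vanillaC\ \emph{Multiparty} If Else rules (Figure~\ref{Fig: van mpief}) which themselves evaluate \emph{both} branches. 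Moreover, the \piccoC\ tree contains subderivations (and evaluation codes) for the guard, the \TT{then} branch, and the \TT{else} branch; a single-branch \vanillaC\ If Else tree has no counterpart for the untaken branch, so the code lists and tree structures cannot be congruent no matter what auxiliary lemma you prove about final memories. Your approach would at best establish correspondence of final configurations, which is strictly weaker than the evaluation-tree congruence the theorem (and the subsequent noninterference argument) needs.

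The paper resolves exactly this tension by design rather than by a confinement lemma: the \vanillaC\ semantics is equipped with multiparty If Else True/False rules that execute both branches but keep only the taken branch's memory, so the two big-step trees share ``synchronization points'' and both branch subderivations are matched by the induction hypothesis. The remaining burden in the paper's proof is then to show that the \piccoC\ bookkeeping --- completeness of $\DynExtract$, correctness of $\Initialize$/$\Restore$ (resp.\ $\DynInit$/$\DynUpdate$/$\DynRestore$), and the $\MPC{resolve}$ axiom feeding $\ResolveS$/$\DynResolveS$ --- reconstructs precisely the memory that the \vanillaC\ multiparty rule retains, with untracked locations untouched (which follows from the $\Acc = \AccZ$ guards forbidding public side effects inside branches, not from a separate nested induction). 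Your confinement lemma is a cousin of that argument, but as the load-bearing step of a single-branch simulation it is aimed at the wrong target; to repair your proof you should simulate against the multiparty \vanillaC\ if-else rules and convert your lemma into the statement that resolution yields the taken branch's memory on tracked locations and the identity elsewhere.
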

The proof 
proceeds by induction on the evaluation tree $\Pi$ and the challenges are the ones 
discussed above related to memory management, and proving that the control flow of the erased program (and the corresponding memory) is correct with respect to the one of the \piccoC\ program.
\subsection{Noninterference} \label{sec: noninterference}
\piccoC\ satisfies a strong form of noninterference guaranteeing that two execution traces are indistinguishable up to differences in private values. This stronger version entails data-obliviousness. Instead of using execution traces, we will work directly with evaluation trees in the \piccoC\ semantics -- equivalence of evaluation trees up to private values implies equivalence of execution traces based on the \piccoC\ semantics. This guarantee is provided at the semantics level, we do not consider here compiler optimizations.   

For noninterference, it is convenient to introduce a notion of equivalence requiring that the two memories agree on publicly observable values. Because we assume that private data in memories are encrypted, and so their encrypted value is publicly observable, it is sufficient to consider syntactic equality of memories. Notice that if $\sigma_1=\sigma_2$ we can still have $\sigma_1\ell \neq \sigma_2\ell$, i.e., two executions starting from the same configuration can actually differ with respect to private data.

We can now state our main noninterference result. 
\begin{theorem}[Noninterference over evaluation trees]
\label{Thm: strong noninterference}
For every environment $\{\gamma^{\pid}_{},$ $\gamma^{\pid}_{1},$ $\gamma'^{\pid}_{1}\}^{\pidZ}_{\pid = \pidA}$; 
memory $\{\sigma^{\pid}_{}$, $\sigma^{\pid}_{1}$, $\sigma'^{\pid}_{1} \}^{\pidZ}_{\pid = \pidA}\in\Mem$; 
\changeMap $\{\DMap^{\pid}_{}$, $\DMap^{\pid}_{1}$, $\DMap'^{\pid}_{1}\}^{\pidZ}_{\pid = \pidA}$;
accumulator $\{\Acc^{\pid}_{}$, $\Acc^{\pid}_{1}$, $\Acc'^{\pid}_{1}\}^{\pidZ}_{\pid = \pidA}\in\N$; 
statement $\stmt$, values $\{\val^{\pid}_{}$, $\val'^{\pid}_{}\}^{\pidZ}_{\pid = \pidA}$; 
step evaluation code lists $\codeLL,\codeLL'$ and their corresponding lists of locations accessed $\locLL,\locLL'$; 
\\
if 
$\Pi \deriv\ ((\pidA, \gamma^{\pidA}_{},$ $\sigma^{\pidA}_{},$ $\DMap^{\pidA}_{}$, $\Acc^{\pidA}_{},$ $\stmt)\ \ \Mid ...\Mid (\pidZ, \gamma^{\pidZ}_{},$ $\sigma^{\pidZ}_{},$ $\DMap^{\pidZ}_{}$, $\Acc^{\pidZ}_{},$ $\stmt))$ 
\\ \-\ \quad $\Deval{\locLL}{\codeLL}$ $((\pidA, \gamma^{\pidA}_{1},$ $\sigma^{\pidA}_{1},$ $\DMap^{\pidA}_{1}$, $\Acc^{\pidA}_{1},$ $\val^{\pidA}_{})\Mid ...\Mid (\pidZ, \gamma^{\pidZ}_{1},$ $\sigma^{\pidZ}_{1},$ $\DMap^{\pidZ}_{1}$, $\Acc^{\pidZ}_{1},$ $\val^{\pidZ}_{}))$ 
\\ and   
$\Sigma \deriv ((\pidA, \gamma^{\pidA}_{},$ $\sigma^{\pidA}_{},\ $ $\DMap^{\pidA}_{},\ $ $\Acc^{\pidA}_{},$ $\stmt)\ \ \Mid ...\Mid (\pidZ, \gamma^{\pidZ}_{},\ $ $\sigma^{\pidZ}_{},\ $ $\DMap^{\pidZ}_{}$, $\Acc^{\pidZ}_{},$ $\stmt))$ 
\\ $\-\ \quad \Deval{\locLL'}{\codeLL'}$ $((\pidA, \gamma'^{\pidA}_{1},$ $\sigma'^{\pidA}_{1},$ $\DMap'^{\pidA}_{1}$, $\Acc'^{\pidA}_{1},$ $\val'^{\pidA}_{})\Mid ...\Mid (\pidZ, \gamma'^{\pidZ}_{1},$ $\sigma'^{\pidZ}_{1},$ $\DMap'^{\pidZ}_{1}$, $\Acc'^{\pidZ}_{1},$ $\val'^{\pidZ}_{}))$
\\ then $\{\gamma^{\pid}_{1}=\gamma'^{\pid}_{1}\}^{\pidZ}_{\pid = \pidA}$, 
$\{\sigma^{\pid}_{1}=\sigma'^{\pid}_{1}\}^{\pidZ}_{\pid = \pidA}$, 
$\{\DMap^{\pid}_{1} =\DMap'^{\pid}_{1}\}^{\pidZ}_{\pid = \pidA}$, 
$\{\Acc^{\pid}_{1}=\Acc'^{\pid}_{1}\}^{\pidZ}_{\pid = \pidA}$, 
$\{\val^{\pid}_{}=\val'^{\pid}_{}\}^{\pidZ}_{\pid = \pidA}$, 
$\codeLL=\codeLL'$, 
$\locLL = \locLL'$, 
$\Pi \loweq \Sigma$.
\end{theorem}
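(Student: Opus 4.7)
The plan is to proceed by structural induction on the evaluation tree $\Pi$, showing at each step that the other tree $\Sigma$ must apply the identical rule at its root, and then invoking the inductive hypothesis on the subderivations to conclude equality of all final states, codes, and location lists. Because both trees start from syntactically identical configurations $(\pid, \gamma^\pid, \sigma^\pid, \DMap^\pid, \Acc^\pid, \stmt)$, the key invariant to maintain is that after every step of evaluation the concrete configurations remain syntactically equal across the two derivations, even though their underlying logical content $\sigma\ell$ may differ.

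The inductive analysis splits on the outermost rule of $\Pi$. For non-interactive rules whose premises depend only on syntactic matching of the statement and lookups in $\gamma$, $\sigma$, and $\DMap$, determinism is immediate: given identical inputs these rules have exactly one applicable conclusion, and the returned memory, environment, and accessed locations are fixed functions of the inputs. The subtle cases are: (i) rules where control flow depends on a value produced by evaluation of a public expression, which by the inductive hypothesis yields the same value in both trees; (ii) rules involving the multiparty protocols $\MPC{b}$, $\MPC{ar}$, $\MPC{resolve}$, $\PFree$, etc., which act on concrete (encrypted) bytes; and (iii) the two Private If Else rules, which execute both branches in a fixed syntactic order and use $\Initialize/\DynInit$, $\Restore/\DynRestore$, and $\Resolve/\DynResolve$ whose behavior is determined entirely by the syntactic form of $\stmt_1, \stmt_2$ and by the concrete values in $\sigma$ and $\DMap$.

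For the multiparty protocol cases, the argument relies on the Axioms from Section~\ref{sec: noninterference} on protocols: given identical concrete inputs (shares, ciphertexts, tags) the protocol produces identical concrete outputs and accesses a fixed set of locations independent of the underlying plaintexts. This is exactly the data-obliviousness guarantee we assumed of the library; combined with the inductive hypothesis on subderivations that compute the protocol's arguments, it gives syntactic equality of the resulting memories and location lists. For the private-conditioned branches, because both branches are always evaluated and the selection of the final value is performed through $\MPC{resolve}$ on encrypted values, the public-visible trace is determined by the syntactic shape of the branches and by the concrete bytes stored in the tracked locations; applying the inductive hypothesis to the evaluations of $\Expr$, $\stmt_1$, and $\stmt_2$ in turn, and using the determinism of the helper algorithms, yields identical $\sigma_6, \DMap_6, \locLL_7$ and identical code lists $\codeLL_4$.

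The main obstacle will be the private-conditioned branch cases, specifically the location-tracking variant, where the mapping $\DMap$ evolves dynamically through pointer dereference writes and potentially out-of-bounds array writes encountered during the branches. Here we must argue that the sequence of additions to $\DMap[\Acc]$ depends only on the syntactic shape of the branches and on the concrete (encrypted) pointer data and public indices, never on the logical plaintext. Since these bytes are by assumption syntactically equal in the two executions, the inductive hypothesis applied to each sub-evaluation suffices to conclude that $\DMap^\pid_1 = \DMap'^\pid_1$, and then by determinism of $\DynRestore$ and $\DynResolve$ the final memories agree. A secondary technical point is the handling of the global consistency across parties: confluence (Theorem~\ref{Thm: confluence}) guarantees that the non-determinism inherent in interleaving local and global steps cannot make the two trees diverge, so local equality established by induction lifts to the tree-level equivalence $\Pi \loweq \Sigma$.
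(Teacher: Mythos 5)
Your proposal matches the paper's proof in essentially every respect: induction over the evaluation tree (i.e., over the \piccoC\ semantic rules), the invariant of syntactic equality of configurations while the logical contents $\sigma\ell$ may differ, determinism and data-obliviousness of the helper algorithms, the protocol Axioms to discharge the interactive steps, and the special care for the two Private If Else rules and the dynamic growth of $\DMap$ under location tracking. The only cosmetic differences are that the paper places the memory accesses (and hence the data-obliviousness of $\locLL$) in the rules and helper algorithms rather than inside the protocols, and it does not need Confluence in this proof since both derivations start from the very same configurations; neither point affects the soundness of your argument.
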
 
Notice that low-equivalence of evaluation trees already implies the equivalence of the resulting configurations. We repeated them to make the meaning of the theorem clearer. 
This also proves data-obliviousness over memory accesses: in any two executions of the same program with the same public data, which locations in memory are accessed will be based on public data, and therefore identical between the two executions. 
As we proceed to prove Theorem~\ref{Thm: strong noninterference}, we leverage our Axioms reasoning about noninterference of the multiparty protocols.

\section{Implementation}
\label{Sec:implementation}

We implement our semantics in the PICCO compiler.
Our implementation is available at \url{https://github.com/applied-crypto-lab/formal-picco} and we plan to merge it into the main PICCO branch as a submitted patch. 
The main modifications to PICCO revolve around a {\em conditional code block tracking} scheme and resolution mechanism for private variables, support for pointer operations and computation within private-conditioned branches, and the addition of tracking structures for pointers based on our semantics in Section~\ref{Sec: Semantics}.
PICCO uses single-statement resolution to manage modifications to private variables made within a private-conditioned branch, which can prove to be more costly when a single variable is modified multiple times within a private-conditioned branch, as shown in the example in Figure~\ref{Fig: resolution cost ex}. 
Here, we show how PICCO conditionally updates the true value of a variable after each statement, resulting in a more costly operation for each statement, as obtaining the true value requires communication between the various computational parties (i.e. this is a distributed secure computation). 
In \piccoC, we provide a conditional code block tracking scheme, where we store the original value for each modified variable before the execution of the \TT{then} branch, execute the \TT{then} branch as normal, store the updated values, restore the original values, execute the \TT{else} branch as normal, and perform resolution of all modified variables once at the end of both branches.
The cost of executing each statement remains the same, however, in \piccoC, the communication overhead is greatly reduced due to only needing to communicate between computational parties to resolve a single variable once at the end of both branches. 
In Figures~\ref{Fig: resolution cost ex PICCO} and~\ref{Fig: resolution cost ex \piccoC}, this amounts to six fewer times the program needs to communicate between computational parties to resolve.

\begin{figure*}\footnotesize
\hspace{0.3cm} \begin{tabular}{l l l}
\begin{subfigure}{0.21\textwidth}
\begin{lstlisting}
private int c, 
		a=1,b=2;
if(a < b) {		
	c = a; 
	a = a + b;
	c = c * b; 
	a = c + a; 
} else {						
	c = b; 
	a = a - b;
	c = c * a; 
	a = c - a; 
}
\end{lstlisting}
\caption{Example code.}
\label{Fig: resolution cost ex code}
\end{subfigure} &
\-\ 
\begin{subfigure}{0.36\textwidth}
\begin{lstlisting}[emph={[2]res,c_e,c_t},emphstyle={[2]\color{blue}}]
private int c,a=1,b=2;
private int res=a<b;
{
c=(res$\cdot$a)$+$((1-res)$\cdot$c); 
a=(res$\cdot$(a+b))$+$((1-res)$\cdot$a);
c=(res$\cdot$(c*b))$+$((1-res)$\cdot$c); 
a=(res$\cdot$(c+a))$+$((1-res)$\cdot$a);	
}	
{c=((1-res)$\cdot$b)$+$(res$\cdot$c);
a=((1-res)$\cdot$(a-b))$+$(res$\cdot$a);
c=((1-res)$\cdot$(c*a))$+$(res$\cdot$c);
a=((1-res)$\cdot$(c-a))$+$(res$\cdot$a);
}
\end{lstlisting}
\caption{How PICCO executes part a.}
\label{Fig: resolution cost ex PICCO}
\end{subfigure} &
\-\ 
\begin{subfigure}{0.37\textwidth}
\begin{lstlisting}[emph={[2]res,c_e,c_t,a_e,a_t},emphstyle={[2]\color{blue}}]
private int c,a=1,b=2,c_t,
	res=a<b,c_e=c,a_t,a_e=a; 
{c = a; 
 a = a + b;
 c = c * b; 
 a = c + a;}
c_t=c;c=c_e;a_t=a;a=a_e;
{c = b; 
 a = a - b;
 c = c * a; 
 a = c - a;}
c=(res$\cdot$c_t)$+$((1-res)$\cdot$c); 
a=(res$\cdot$a_t)$+$((1-res)$\cdot$a); 
\end{lstlisting}
\caption{How \piccoC\ executes part a.}
\label{Fig: resolution cost ex \piccoC}
\end{subfigure}
\end{tabular}
\caption{Example illustrating why single-statement resolution is more costly when modifying variables multiple times in both branches.}
\label{Fig: resolution cost ex}
\end{figure*}

\subsection{Conditional Code Block Tracking}

To implement data-oblivious execution of branches on private data, SMC implementations typically execute both branches, but privately apply the effects of only the relevant branch. Figure~\ref{Fig: simple correct ex} shows the standard handling of private-conditioned branches. 
The top left shows the original C-code block, with annotations for private data;  the top right shows how compilers would flatten the branch to hide the private data used in the condition. 
Here, the only variable modified within either branch is \texttt{c}, so we insert temporary variables to assist in tracking the results of both branches.
Notation $l_\texttt{x}$ in the table indicates the (private) value stored at the location of variable \texttt{x}. For every variable we list the value in the \texttt{initial} state, before executing the conditional; the one after executing the \texttt{then} branch; the one after the initial state has been restored; the one after the execution of the \texttt{else} branch; and the one after the true value of \texttt{c} has been resolved.

To guarantee data-oblivious executions we also need to guarantee that when we branch on a private condition, the two branches do not have different public side effects. A way to address this is to reject programs that contain public side effects in the body of private-conditioned branching statements. This has an impact on other language constructs such as functions. That is, to be able to call a function from the body of such a branching statement, it must have no public side effects. To address this challenge, in our formalism we evaluate each declared function for public side effects and mark it with a flag that indicates whether it is allowed to be called from within a private-conditioned branch.

PICCO is the only compiler we are aware of that treats pointers to private values, or private pointers.  For private pointers, the location being pointed to might be private as well. That is, if a pointer is assigned a new value inside a private-conditioned branch, we cannot reveal whether the original or new location is to be used. For that reason, a private pointer is associated with multiple locations and a private tag that indicates which location is the true location. In particular, a private pointer is represented as a data structure which tracks the number of locations $\nl$ being pointed to; a list of these $\nl$ (known) locations; a list of $\nl$ (private) tags, one of which is set to 1 and all others set to 0; and the level of pointer indirection. Because locations associated with pointers can now be private, there might be additional limitations on what programs can contain within private-conditioned branches to guarantee data-oblivious execution. 
To understand why multiple locations may potentially be stored for a pointer consider Figure~\ref{Fig: simple correct pointer ex}, whose code will result in the pointer \texttt{p} storing two locations, $l_\texttt{a}$ and $l_\texttt{b}$, with $l_\texttt{a}$ being the true location.

\begin{figure*}  \small
\begin{subfigure}{\textwidth} \small
\begin{minipage}{.35\textwidth}
\begin{lstlisting}[emph={[2]if,else}, emphstyle={[2]\color{blue}}]
private int a=3,b=7,c=0;
if(a<b) {c=a;}
else {c=b;}
\end{lstlisting}
\end{minipage}
\hfill
\begin{minipage}{.6\textwidth}
\begin{lstlisting}[emph={[2]res,c_e,c_t},emphstyle={[2]\color{blue}}]
private int a=3,b=7,c=0,res=a<b,c_e=c,c_t;				
{c=a;} c_t=c; c=c_e; 						
{c=b;} c=(res$\cdot$c_t)$+$((1-res)$\cdot$c);
\end{lstlisting}
\end{minipage}
    \\
    \begin{tabular}[t]{| c | c | c | c | c | c |} \hline
        	\TT{location}		&\TT{init} 	& {\TT{then}} &\TT{restore} 	& {\TT{else}}	&\TT{resolve}
        \\ \hline
        $l_\texttt{c}$ 		& {0} 		&	\teal{3}	&	\teal{0}		&	\teal{7}		&\teal{3}
        \\ \hline 
        $l_\texttt{c\_t}$ 	& {} 		&	{}	&	\teal{3}		&	{3}		&{3}
        \\ \hline
    \end{tabular}
    \hfill
   \begin{tabular}[t]{| c | c |} \hline
   		\TT{location} 	& \TT{value} 	\\ \hline
    		$l_\TT{res}$ 	& 1 			\\ \hline
		$l_\TT{c\_e}$ 	& 0				\\ \hline
    \end{tabular}
    \caption{Regular variable handling within private-conditioned branches.}
    \label{Fig: simple correct ex}
\end{subfigure}
\\ \\ \\
{
\begin{subfigure}{\textwidth}
\begin{minipage}{0.35\textwidth}
\begin{lstlisting}[emph={[2]if,else}, emphstyle={[2]\color{blue}}]
private int a=3,b=7,*p;		
if(a$<$b) {p=&a;}	
else {p=&b;}
\end{lstlisting}	
\end{minipage}
\hfill
\begin{minipage}{0.6\textwidth}
\begin{lstlisting}[emph={[2]res,p_t,p_e}, emphstyle={[2]\color{blue}}]
private int a=3,b=7,*p,res=a$<$b,*p_e=p,*p_t;	
{p=&a;} p_t=p; p=p_e;		
{p=&b;} p=$\purple{\TT{resolve}}$(res,p_t,p);	
\end{lstlisting}	
\end{minipage}\small
    \\
    \begin{tabular}[t]{| c | c | c | c | c | c |} \hline
    	\TT{location}		&\TT{init} 	& {\TT{then}} 	& 	\TT{restore} 	& {\TT{else}} 	&\TT{resolve}
    \\ \hline
    $l_\texttt{p}$ 		& {(\ ),\ (\ )}	& {($l_\texttt{a}$),\ (1)}	& 	\teal{(\ ),\ (\ )}	& \teal{($l_\texttt{b}$),\ (1)}	&\teal{($l_\texttt{a}, l_\texttt{b}$),}
    \teal{(1, 0)}
    \\ \hline 
    $l_\texttt{p\_t}$ 	& {}	& {}	& \teal{($l_\texttt{a}$),\ (1)}	& {($l_\texttt{a}$),\ (1)}	&{($l_\texttt{a}$),\ (1)}
    \\ \hline
    \end{tabular}
    \hfill
   \begin{tabular}[t]{| c | c |} \hline
   		\TT{location} 	& \TT{value} 	\\ \hline
		$l_\TT{res}$ 	& 1				\\ \hline
		$l_\TT{p\_e}$ 	& (\ ),\ (\ )		\\ \hline
    \end{tabular}
    \caption{Pointer handling within private-conditioned branches.}
    \label{Fig: simple correct pointer ex}
\end{subfigure}}
\\ \\ \\
\begin{subfigure}{\textwidth}
\begin{minipage}{.32\textwidth}
\begin{lstlisting}[emph={[2]if,else}, emphstyle={[2]\color{blue}}]
private int a=3,
	b=7,c=5,*p=&a;
if (a$<$b) {
	*p=c; }
else {
	p=&b; }
\end{lstlisting}
\end{minipage}
\hfill
\begin{minipage}{.66\textwidth}
\begin{lstlisting}[emph={[2]res,dp_t,p_e,p_t,dp_e}, emphstyle={[2]\color{blue}}]
private int a=3,b=7,c=5,*p=&a,res=a$<$b,
	dp_e=*p,dp_t,*p_e=p,*p_t; 
{*p=c;} 
dp_t=*p; p_t=p; *p=dp_e; p=p_e; 
{p=&b;} 
*p=(res$\cdot$dp_t)$+$((1-res)$\cdot$*p); p=$\purple{\TT{resolve}}$(res,p_t,p);	
\end{lstlisting}
\end{minipage}\small
    \\
    \begin{tabular}[t]{| c | c | c | c | c | c |} \hline
   \TT{location}	& \TT{initial} 		&  \TT{then} 	& \TT{restore} 		& \TT{else} 	& \TT{resolve} 				\\ \hline
    $l_\TT{a}$ 	& {3}       		& \teal{5}   		& \teal{3}			& {3}			& \red{3}					\\ \hline
    $l_\TT{b}$ 	& {7} 	     		& {7}			& {7}				& {7}			& \red{5}					\\ \hline
    $l_\TT{p}$ 	& {($l_\TT{a}$), (1)}	& {($l_\TT{a}$), (1)}	& \teal{($l_\TT{a}$), (1)}	& {($l_\TT{b}$), (1)}	& \teal{($l_\TT{a}$, $l_\TT{b}$),\ (1, 0)}\\ \hline
    $l_\TT{dp\_t}$& {}			&  				& \teal{5} 			& {5}			& {5}					\\ \hline
    $l_\TT{p\_t}$	& {}			& {} & \teal{($l_\TT{a}$), (1)} 	& {($l_\TT{a}$), (1)} & {($l_\TT{a}$), (1)}		\\ \hline
    \end{tabular}
    \hfill
   \begin{tabular}[t]{| c | c |} \hline
   		\TT{location} 	& \TT{value} 		\\ \hline
		$l_\TT{res}$ 	& 1					\\ \hline
		$l_\TT{dp\_e}$ 	& 3					\\ \hline
		$l_\TT{p\_e}$ 	& ($l_\TT{a}$), (1)	\\ \hline
    \end{tabular}
    \caption{Challenges of pointer manipulations within private-conditioned branches.}
    \label{Fig: simple pointer challenge ex}
\end{subfigure}
\\ \\ \\
\begin{subfigure}{\textwidth}
\begin{minipage}[t]{0.32\textwidth}
\begin{lstlisting}[emph={[2]if,else}, emphstyle={[2]\color{blue}}]
public int i=1,j=2;
private int a[j]={0,0},
	b=7,c=3,d=4;
if (c$<$d) { 
	a[i]=c; }
else {
	a[j]=d; }
\end{lstlisting}
\end{minipage}
\hfill
\begin{minipage}[t]{0.66\textwidth}
\begin{lstlisting}[emph={[2]a_t,a_e,b_t,b_e,res}, emphstyle={[2]\color{blue}}]
public int i=1,j=2;
private int a[j]={0,0},b=7,c=3,d=4;
private int res=c$<$d,a_t,a_e=a;
a[i]=c; 
a_t=a; a=a_e; 
a[j]=d;
a=(res$\cdot$a_t)$+$((1-res)$\cdot$a);
\end{lstlisting}
\end{minipage}
\\
    \begin{tabular}[t]{| c | c | c | c | c | c |} \hline
    \TT{location}	& \TT{initial} &  \TT{then} 	& \TT{restore} 		& \TT{else} 	& \TT{resolve} 		\\ \hline
    $l_\TT{a}$ 	& {0, 0}     	& \teal{0, 3}   	& \teal{0, 0}			& {0, 0}			& \teal{0, 3}		\\ \hline
    $l_\TT{b}$ 	& {7} 	     	& {7}			& {7}				& \red{4}		& \red{4}		\\ \hline
    $l_\TT{a\_t}$& {}			&  				& \teal{0, 3} 		& {0, 3}			& {0, 3}			\\ \hline
    \end{tabular}
    \hfill
   \begin{tabular}[t]{| c | c |} \hline
   		\TT{location} 	& \TT{value} 		\\ \hline
		$l_\TT{res}$ 	& 1					\\ \hline
		$l_\TT{a\_e}$ 	& 0, 0				\\ \hline
    \end{tabular}
\caption{Challenges of writing at a public index in a private array within private-conditioned branches.}
\label{Fig: array challenge}
\end{subfigure}
\caption{Private-conditioned branching examples. Simple examples are shown in~\ref{Fig: simple correct ex} and~\ref{Fig: simple correct pointer ex}, and challenging examples are shown in~\ref{Fig: simple pointer challenge ex} and ~\ref{Fig: array challenge}. We show values in memory that change in the table to the left, and values for temporary variables that do not change in the table to the right. We indicate correct updates in memory in \teal{green}, and problematic values in memory in \red{red}.}
\label{Fig: challenge ex}
\end{figure*}

Unfortunately, directly adopting this approach exhibits problematic behavior when more complex operations are considered.
The first level of locations that the pointer refers to is managed, as shown in Figure~\ref{Fig: simple correct pointer ex}, but dereferencing the pointer and modifying the value stored at the location that is pointed to can result in incorrect program behavior and ultimately information leakage. 
Any SMC system that supports both pointers and branches, using PICCO style pointers and classic branch resolution cannot handle these cases.
This occurs because the approach we discussed above relies on assignment statements and single-level, constant location changes to properly restore and resolve the changes made inside the private-conditioned branch. 
One can try to modify this approach to support tracking changes made using pointers at a higher level of indirection  (i.e., tracking \texttt{*p=c} using temporary variables \TT{dp\_t} and \TT{dp\_e}, as shown in Figure~\ref{Fig: simple pointer challenge ex}). 
However, this modification can lead to the incorrect resolution of data when multiple levels of indirection of a pointer are modified within the private-conditioned branches. 
An example of this is shown in Figure~\ref{Fig: simple pointer challenge ex}, where we modify \texttt{*p} in the \TT{then} branch, and then change the location that is pointed to by \texttt{p} in the \TT{else} branch. 
The \TT{then} branch, restoration, and the \TT{else} branch will execute correctly, however, resolving the variables after the completion of the \TT{else} branch will not. 
Given that we have modified the location pointed to by \TT{p}, when we attempt to resolve the modification we made using \texttt{*p}, we will read from and modify the value in $l_\texttt{b}$ (where \TT{p} currently points) instead of the value in $l_\texttt{a}$ (where \TT{p} pointed and wrote to in the \TT{then} branch).

We encounter a similar issue if we write to a public index within a private array during the execution of a private-conditioned branch, and that index happens to be out of bounds. 
We give an example of this in Figure~\ref{Fig: array challenge}. Here, we assume that \TT{b} is assigned the location directly after the array data's location, thus giving us a \emph{well-aligned} out-of-bounds write to illustrate why simple variable tracking is not enough here. 
In any given implementation, an out-of-bounds access is not guaranteed to be \emph{well-aligned}, and therefore unpredictable behavior can occur. 
In this example, we have a zero-indexed array \TT{a} of two elements. In the \TT{then} branch, we modify index 1 (or the second element), then store this updated array and return the array back to it's initial state. 
In the \TT{else} branch, we modify index 2 (out-of-bounds of the array), which updates the value stored for \TT{b}. Given that we were not tracking \TT{b}, this value does not get resolved, and any further uses of \TT{b} will result in incorrect results. 
By using location tracking, we would catch that the location $\loc_\TT{b}$ was modified, and in turn properly restore it to it's original value (as long as the out-of-bounds access is \emph{well-aligned}).

To be able to handle pointers and arrays correctly, 
we must use a location based tracking instead of a variable based tracking.  
However, as our semantics illustrated, this requires additional tracking structures and dynamically checking to ensure that the locations modified during pointer dereference write statements are tracked.  
We, thus, propose a small optimization to full location tracking, which analyzes each top-level private-conditioned branching statement to see if it contains a pointer dereference writes or array writes at public indices in the \TT{then} or \TT{else} clauses, as well as in any nested branching statements present in those clauses.
If no such writes occur, we are able to use simple variable tracking, as shown in rule Private If Else Variable Tracking in Figure~\ref{Fig: iep vt}.

When any such write statement occurs in either branch, we switch to tracking by location, as shown in rule Private If Else Location Tracking in Figure~\ref{Fig: iep lt}. 
For example, consider a program using a pointer to iterate through and modify elements of an array. Allowing pointer dereference writes enables us to perform a different operation on the array depending on whether a private condition holds. 
For location based conditional code block tracking, we create a map to store the \TT{original} and \TT{then} values for each location that is modified within each branch, as well as a tag to indicate whether this location has a value stored for the \TT{then} branch. 
This tag ensures that even when a location is modified for the first time in the \TT{else} branch, we are still able to properly resolve the value for that location by using the original value stored at that location.  This corresponds to the rules and explanation given in subsection~\ref{sec: priv if lt desc}.

As with the {\em conditional code block} variable  tracking scheme, first we find a list of all modified variables, excluding those only used for pointer dereference writes or array writes at a public index. 
We exclude pointer dereference writes as we will grab the location that is pointed to dynamically to ensure we are tracking the modification at the correct location.  
We use this list to store the original values at the location referred to by each of these variables before the execution of the \TT{then} branch. 
Between branches, our restoration is similar to that formalized for \piccoC, just by location. 
We iterate through our map, storing the current value of each location as the \TT{then} value, and restoring the value at the location as the original value. 
We set the tag associated with each location to be 1, as we have added \TT{then} values for each of these locations. 
When executing the \TT{then} and \TT{else} branches, we check before the execution of a pointer dereference write to see if the location we will modify is already being tracked. 
If it is not, we store the current value as the original value for that location, and then continue to execute as normal; if it is already tracked, we proceed as we do not need to store anything additional. 
We set the tag for each new location to 0, as we do not currently have a \TT{then} value stored for those locations.
After the execution of the \TT{else} branch, we proceed to resolve similarly to \piccoC, just by location. 
For each location in the map, we securely compute whether to keep the \TT{then} value (or \TT{original} value, if the tag is 0) stored in the map or the current value at that location based on the private condition.

\section{Evaluation}
\label{Sec:evaluation}
To highlight the feasibility of our approach we provide preliminary performances numbers over both microbenchmarks and real-world SMC programs.  All experiments were run in a local and distributed manner.  We leverage local runs, where all participants in the SMC program execute on the same machine, to analyze overheads and benefits of our approach. We also  provide distributed deployment of the same benchmarks to illustrate real-world performance. In the distributed configuration, each participant in the SMC program is executed on a separate machine.
We ran our experiments using single-threaded execution on three 2.10GHz machines running CentOS-7. The machines were connected via 1Gbps Ethernet.

% short description of programs run
Benchmark pay-gap is the program shown in Figure~\ref{Fig: salary vs gender}, where the average female salary and average male salary is computed.  
Benchmark LR-parser is a program that will parse and execute a mathematical function over private data. This program is given as input the function as a list of tokens and the private data to execute the function over.
Benchmark h\_analysis is a program that takes two sets of input data and calculates the percent difference between each element of the two data sets.
All of the "pb" benchmarks are programs designed to emphasize the difference in executing using single-statement resolution, as PICCO does, and when using the conditional code block tracking, as in \piccoC. These programs are run through a large number of iterations of branches where variables are modified several times in each branch, as in the example in Figure~\ref{Fig: resolution cost ex}.

When executing locally, we ran all three computational parties on a single machine. 
When executing distributed, one computational party is run on each machine. 
All programs were run for 50 times both distributed and locally. 
LR-parser has the shortest execution time of our benchmarks (seconds), and its timing is more influenced by small differences in communication overhead during the computation, resulting in a greater standard deviation. 
Program pb-reuse has the longest runtime, executing in about 5 and 3.5 minutes locally and 15 and 8 minutes distributed for PICCO and \piccoC, respectively. 
In our computations, we first took the average time of each of the three computational parties, then the total average and standard deviation of all runs. We calculated the percent speedup in Figure~\ref{Fig: percent diff} using PICCO's timings as a baseline.   We can see from our micro-benchmarks which stress private-conditioned branches that our approach provides significant runtime improvement.  This is not surprising as our resolution mechanism requires less
communication between parties.  However, this benefit is reduced in real-world SMC programs and is proportional to the number of private-conditioned branches as well as their complexity (i.e. the number of private variables they use and modify). Similarly we observe that performance improvement or reduction is dilated when communication overheads increase (e.g. we execute in a distributed setting).  This is also not surprising as the communication cost as a percentage of total runtime increases in a distributed execution.

\begin{figure*}
\begin{subfigure}{0.48\textwidth}\footnotesize
\centering{\includegraphics[width=\textwidth]{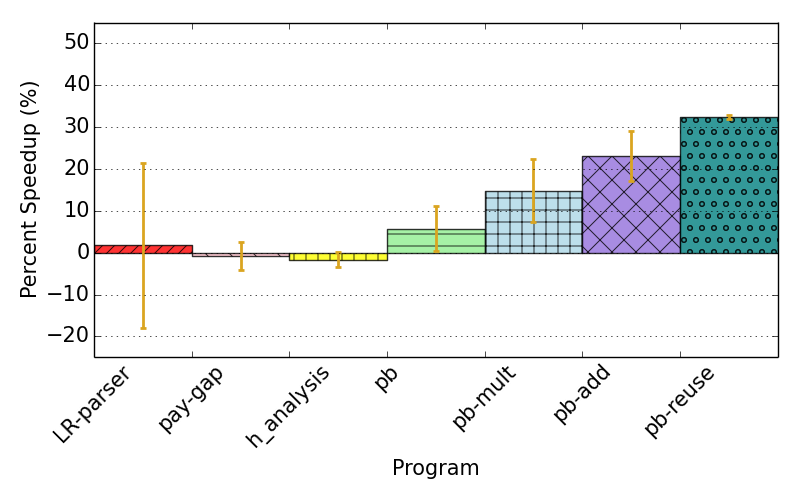}}
\caption{Local}
\label{Fig: local percent diff}
\end{subfigure}
\quad
\begin{subfigure}{0.48\textwidth}\footnotesize
\centering{\includegraphics[width=\textwidth]{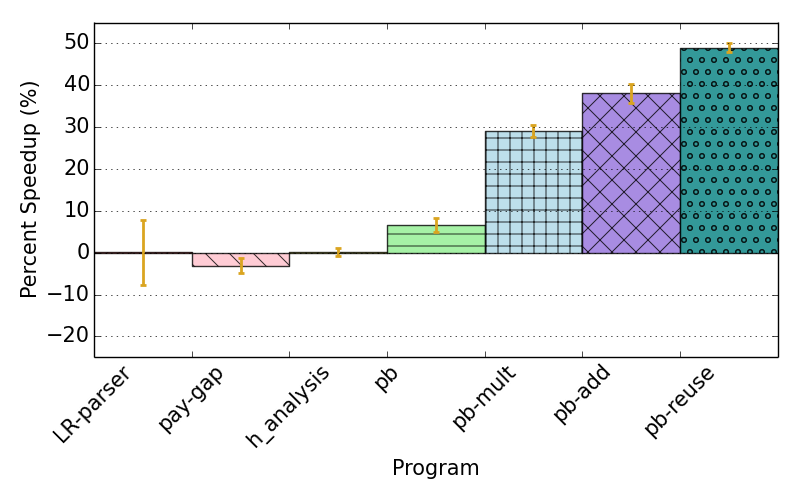}}
\caption{Distributed}
\label{Fig:dist percent diff}
\end{subfigure}
\caption{Percentage Runtime Differences}
\label{Fig: percent diff}
\vspace{-0.3cm}
\end{figure*}

\section{Conclusions}
\label{sec:conclusions}

\begin{figure*}  \footnotesize
\begin{lstlisting}[firstnumber = 22]
avgFemaleSalPub=smcopen(avgFemaleSalary); 
femaleCountPub=smcopen(femaleCount);
avgMaleSalPub=smcopen(avgMaleSalary); maleCountPub=smcopen(maleCount);
avgFemaleSalPub=(avgFemaleSalPub/femaleCountPub)/2+historicFemaleSalAvg/2; 
avgMaleSalPub=(avgMaleSalPub/maleCountPub)/2+historicMaleSalAvg/2; 

for (i=1; i<numParticipants+1; i++) 
	smcoutput(avgFemaleSalPub, i);  smcoutput(avgMaleSalPub, i); 
\end{lstlisting}
\caption{Securely calculating the gender pay gap for 100 organizations with additional information released.}
\label{Fig: salary vs gender smcopen}
\end{figure*}

In this paper we have presented a formal model for a general SMC compiler, supporting both safe and unsafe features of C.  
Our model does not artificially restrict what C features can be present in private branches -- restrictions are instead guided by which operations
our model has shown to be unsafe. 
Our extension supports additional tracking meta-data to provide support for features unsafe in
current SMC techniques.  The intuition, shown in our motivation, is that state-of-the-art SMC techniques cannot track complex memory indirections that can occur when using pointers.  By providing this tracking, these operations can be made safe.
As future work we plan on extending our model to support explicit declassification, through a primitive PICCO calls \texttt{smcopen}.   Consider Figure~\ref{Fig: salary vs gender smcopen} which highlights a modification to our original
gender based salary computation (lines 16-17) from Figure~\ref{Fig: salary vs gender}.  Explicitly declassifying the sum and count earlier in the program, allows us to change the average computation to a public computation.  This reduces the number of high cost communications and cryptographic computations in the program.  To support explicit declassification in our model we would need to extend our semantics with gradual 
release~\cite{GR}.

\section*{Acknowledgements}

This work was supported in part by a Google Faculty Research Award and US National Science Foundation grants 1749539, 1845803, 2040249, and 2213057.
Any opinions, findings, and conclusions or recommendations expressed in this publication are those of the authors and do not necessarily reflect the views of the funding sources.

%%
%% Bibliography
%%
\bibliography{main}

\begin{thebibliography}{10}

\bibitem{AbadiBHR99}
M.~Abadi, A.~Banerjee, N.~Heintze, and J.~Riecke.
\newblock A core calculus of dependency.
\newblock In {\em {ACM} {SIGPLAN-SIGACT} Symposium on Principles of Programming
  Languages (POPL)}, pages 147--160, 1999.

\bibitem{AbateBCD0HPTT20}
C.~Abate, R.~Blanco, {\c{S}}.~Ciob{\^{a}}c{\u{a}}, A.~Durier, D.~Garg,
  C.~Hritcu, M.~Patrignani, {\'{E}}.~Tanter, and J.~Thibault.
\newblock Trace-relating compiler correctness and secure compilation.
\newblock In {\em European Symposium on Programming(ESOP)}, pages 1--28, 2020.

\bibitem{GR}
Aslan Askarov and Andrei Sabelfeld.
\newblock Gradual release: Unifying declassification, encryption and key
  release policies.
\newblock In {\em 2007 {IEEE} Symposium on Security and Privacy (S{\&}P 2007),
  20-23 May 2007, Oakland, California, {USA}}, pages 207--221. {IEEE} Computer
  Society, 2007.

\bibitem{AustinF09}
T.~Austin and C.~Flanagan.
\newblock Efficient purely-dynamic information flow analysis.
\newblock In {\em Workshop on Programming Languages and Analysis for Security
  (PLAS)}, pages 113--124, 2009.

\bibitem{BenDavid08}
A.~Ben-David, N.~Nisan, and B.~Pinkas.
\newblock Fairplay{MP}: {A} system for secure multi-party computation.
\newblock In {\em ACM Conference on Computer and Communications Security
  (CCS)}, pages 257--266, 2008.

\bibitem{BendlinDOZ11}
R.~Bendlin, I.~Damgard, C.~Orlandi, and S.~Zakarias.
\newblock Semi-homomorphic encryption and multiparty computation.
\newblock In {\em Advances in Cryptography -- EUROCRYPT}, pages 169--188, 2011.

\bibitem{Bogdanov08}
D.~Bogdanov, S.~Laur, and J.~Willemson.
\newblock Sharemind: A framework for fast privacy-preserving computations.
\newblock In {\em European Symposium on Research in Computer Security
  (ESORICS)}, pages 192--206, 2008.

\bibitem{DamgardGKN09}
I.~Damg{\aa}rd, M.~Geisler, M.~Kr{\o}igaard, and J.~Nielsen.
\newblock Asynchronous multiparty computation: {T}heory and implementation.
\newblock In {\em Public Key Cryptography (PKC)}, pages 160--179, 2009.

\bibitem{DamgardPSZ12}
I.~Damgard, V.~Pastro, N.~Smart, and S.~Zakarias.
\newblock Multiparty computation from somewhat homomorphic encryption.
\newblock In {\em Advances in Cryptology -- CRYPTO}, pages 643--662, 2012.

\bibitem{Demmler15a}
D.~Demmler, T.~Schneider, and M.~Zohner.
\newblock {ABY} -- {A} framework for efficient mixed-protocol secure two-party
  computation.
\newblock In {\em Network and Distributed System Secuity Symposium (NDSS)},
  2015.

\bibitem{DevrieseP10}
D.~Devriese and F.~Piessens.
\newblock Noninterference through secure multi-execution.
\newblock In {\em {IEEE} Symposium on Security and Privacy (S{\&}P)}, pages
  109--124, 2010.

\bibitem{ElGamal85}
T.~ElGamal.
\newblock A public key cryptosystem and a signature scheme based on discrete
  logarithms.
\newblock {\em IEEE Transactions on Information Theory}, 31(4):469--472, 1985.

\bibitem{Ford12}
B.~Ford.
\newblock Plugging side-channel leaks with timing information flow control.
\newblock In {\em {USENIX} Workshop on Hot Topics in Cloud Computing
  (HotCloud)}, 2012.

\bibitem{Gennaro98}
R.~Gennaro, M.~Rabin, and T.~Rabin.
\newblock Simplified {VSS} and fast-track multiparty computations with
  applications to threshold cryptography.
\newblock In {\em ACM Symposium on Principles of Distributed Computing (PODC)},
  pages 101--111, 1998.

\bibitem{HastingsHNZ19}
M.~Hastings, B.~Hemenway, D.~Noble, and S.~Zdancewic.
\newblock So{K}: General purpose compilers for secure multi-party computation.
\newblock In {\em IEEE Symposium on Security and Privacy (S\&P)}, pages
  1220--1237, 2019.

\bibitem{Henecka10}
W.~Henecka, A.-R. Sadeghi, T.~Schneider, and I.~Wehrenberg.
\newblock {TASTY}: {T}ool for automating secure two-party computations.
\newblock In {\em ACM Conference on Computer and Communications Security
  (CCS)}, pages 451--462, 2010.

\bibitem{Holzer12}
A.~Holzer, M.~Franz, S.~Katzenbeisser, and H.~Veith.
\newblock Secure two-party computations in {ANSI} {C}.
\newblock In {\em ACM Conference on Computer and Communications Security
  (CCS)}, pages 772--783, 2012.

\bibitem{jagomagis2010secrec}
R.~Jagom{\"a}gis.
\newblock Secre{C}: A privacy-aware programming language with applications in
  data mining.
\newblock Master's thesis, University of Tartu, 2010.

\bibitem{Kreuter13}
B.~Kreuter, a.~shelat, B.~Mood, and K.~Butler.
\newblock {PCF}: A portable circuit format for scalable two-party secure
  computation.
\newblock In {\em USENIX Security Symposium}, pages 321--336, 2013.

\bibitem{Lapets18}
A.~Lapets, F.~Jansen, K.~Albab, R.~Issa, L.~Qin, M.~Varia, and A.~Bestavros.
\newblock Accessible privacy-preserving web-based data analysis for assessing
  and addressing economic inequalities.
\newblock In {\em ACM SIGCAS Conference on Computing and Sustainable Societies
  (COMPASS)}, 2018.

\bibitem{LaudP16}
P.~Laud and A.~Pankova.
\newblock Optimizing secure computation programs with private conditionals.
\newblock In {\em International Conference on Information and Communications
  Security (ICICS)}, pages 418--430, 2016.

\bibitem{leroy2012compcert}
X.~Leroy, A.~Appel, S.~Blazy, and G.~Stewart.
\newblock The {C}omp{C}ert memory model, {V}ersion 2.
\newblock Technical Report RR-7987, INRIA, 2012.

\bibitem{Liu14}
C.~Liu, Y.~Huang, E.~Shi, J.~Katz, and M.~Hicks.
\newblock Automating efficient {RAM}-model secure computation.
\newblock In {\em IEEE Symposiym on Security and Privacy (S\&P)}, pages
  623--638, 2014.

\bibitem{Liu15}
C.~Liu, X.~Wang, K.~Nayak, Y.~Huang, and E.~Shi.
\newblock Obli{VM}: {A} programming framework for secure computation.
\newblock In {\em IEEE Symposium on Security and Privacy (S\&P)}, 2015.

\bibitem{Malkhi04}
D.~Malkhi, N.~Nisan, B.~Pinkas, and Y.~Sella.
\newblock Fairplay -- {S}ecure two-party computation system.
\newblock In {\em USENIX Security Symposium}, 2004.

\bibitem{Mitchell0SZ12}
J.~Mitchell, R.~Sharma, D.~Stefan, and J.~Zimmerman.
\newblock Information-flow control for programming on encrypted data.
\newblock In {\em {IEEE} Computer Security Foundations Symposium (CSF)}, pages
  45--60, 2012.

\bibitem{Mood16}
B.~Mood, D.~Gupta, H.~Carter, K.~Butler, and P.~Traynor.
\newblock Frigate: {A} validated, extensible, and efficient compiler and
  interpreter for secure computation.
\newblock In {\em IEEE European Symposium on Security and Privacy (Euro S\&P)},
  2016.

\bibitem{NielsenNOB12}
J.~Nielsen, P.~Nordholt, C.~Orlandi, and S.~Burra.
\newblock A new approach to practical active-secure two-party computation.
\newblock In {\em Advances in Cryptology -- CRYPTO}, pages 681--700, 2012.

\bibitem{Paillier99}
P.~Paillier.
\newblock Public-key cryptosystems based on composite degree residuosity
  classes.
\newblock In {\em Advances in Cryptology -- EUROCRYPT}, pages 223--238, 1999.

\bibitem{PatrignaniG17}
M.~Patrignani and D.~Garg.
\newblock Secure compilation and hyperproperty preservation.
\newblock In {\em IEEE Computer Security Foundations Symposium (CSF)}, pages
  392--404, 2017.

\bibitem{PettaiL15}
M.~Pettai and P.~Laud.
\newblock Automatic proofs of privacy of secure multi-party computation
  protocols against active adversaries.
\newblock In {\em {IEEE} Computer Security Foundations Symposium ({CSF})},
  pages 75--89, 2015.

\bibitem{PlanulM13}
J.~Planul and J.~Mitchell.
\newblock Oblivious program execution and path-sensitive non-interference.
\newblock In {\em {IEEE} Computer Security Foundations Symposium (CSF)}, pages
  66--80, 2013.

\bibitem{RastogiHH14}
A.~Rastogi, M.~Hammer, and M.~Hicks.
\newblock Wysteria: {A} programming language for generic, mixed-mode multiparty
  computations.
\newblock In {\em {IEEE} Symposium on Security and Privacy}, pages 655--670,
  2014.

\bibitem{RastogiSH17}
A.~Rastogi, N.~Swamy, and M.~Hicks.
\newblock {WYS}*: A {DSL} for verified secure multi-party computations.
\newblock In {\em International Conferece on Principles of Security and Trust
  (POST)}, pages 99--122, 2019.

\bibitem{amys-dissertation}
Amy Rathore.
\newblock {\em A Formal Model for a General-Purpose Compiler for Secure
  Multiparty Computations}.
\newblock PhD thesis, University at Buffalo, 2022.

\bibitem{SabelfeldR09}
A.~Sabelfeld and A.~Russo.
\newblock From dynamic to static and back: Riding the roller coaster of
  information-flow control research.
\newblock In {\em Perspectives of Systems Informatics (PSI)}, pages 352--365,
  2009.

\bibitem{Shamir79}
A.~Shamir.
\newblock How to share a secret.
\newblock {\em Communications of the ACM}, 22(11):612--613, 1979.

\bibitem{sokk16}
V.~Sokk.
\newblock An improved type system for a privacy-aware programming language and
  its practical applications.
\newblock Master's thesis, University of Tartu, 2016.

\bibitem{Songhori15}
E.~Songhori, S.~Hussain, A.-R. Sadeghi, T.~Schneider, and F.~Koushanfar.
\newblock Tiny{G}arble: {H}ighly compressed and scalable sequential garbled
  circuits.
\newblock In {\em IEEE Symposium on Security and Privacy}, pages 411--428,
  2015.

\bibitem{VolpanoS97}
D.~Volpano and G.~Smith.
\newblock A type-based approach to program security.
\newblock In {\em International Joint Conference CAAP/FASE on Theory and
  Practice of Software Development (TAPSOFT)}, pages 607--621, 1997.

\bibitem{Yao86}
A.~Yao.
\newblock How to generate and exchange secrets.
\newblock In {\em IEEE Symposium on Foundations of Computer Science}, pages
  162--167, 1986.

\bibitem{Zhang18}
Y.~Zhang, M.~Blanton, and G.~Almashaqbeh.
\newblock Implementing support for pointers to private data in a
  general-purpose secure multi-party compiler.
\newblock {\em ACM Transactions on Privacy and Security (TOPS)}, 21(2), 2018.

\bibitem{Zhang13}
Y.~Zhang, A.~Steele, and M.~Blanton.
\newblock {PICCO}: {A} general-purpose compiler for private distributed
  computation.
\newblock In {\em ACM Conference on Computer and Communications Security
  (CCS)}, pages 813--826, 2013.

\end{thebibliography}

\appendix

%%%%%%%%%%%%%%%%%
%								 %
% 		Multiparty Protocols	 %
%								 %
%%%%%%%%%%%%%%%%%
\section{Multiparty Protocols}
\label{app: mpc protocols}

Algorithm~\ref{algo: mpc binop}, $\MPC{b}$, is a selection control algorithm that directs the evaluation to the relevant multiparty computation algorithm based on the given binary operation $\binop\in\{\cdot,\div,+,-\}$. 

\begin{algorithm}[H]\footnotesize
\caption{$(\n^\pidA_3, ..., \n^\pidZ_3) \gets \MPC{b}(\binop, [\n^\pidA_1, ..., \n^\pidZ_1], [\n^\pidA_2, ..., \n^\pidZ_2])$}
\label{algo: mpc binop}
\begin{algorithmic}
 	\IF{$(\binop = \cdot)$}
		\FORALL{$\pid\in\{\pidA...\pidZ\}$} 
			\STATE $\n^\pid_3 = \MPC{mult}(\n^\pid_1, \n^\pid_2)$
		\ENDFOR
		\RETURN $(\n^\pidA_3, ..., \n^\pidZ_3)$
	\ELSIF{$(\binop = \div)$}
		\FORALL{$\pid\in\{\pidA...\pidZ\}$} 
			\STATE $\n^\pid_3 = \MPC{div}(\n^\pid_1, \n^\pid_2)$
		\ENDFOR
		\RETURN $(\n^\pidA_3, ..., \n^\pidZ_3)$
	\ELSIF{$(\binop = -)$}
		\FORALL{$\pid\in\{\pidA...\pidZ\}$} 
			\STATE $\n^\pid_3 = \MPC{sub}(\n^\pid_1, \n^\pid_2)$
		\ENDFOR
		\RETURN $(\n^\pidA_3, ..., \n^\pidZ_3)$
	\ELSIF{$(\binop = +)$}
		\FORALL{$\pid\in\{\pidA...\pidZ\}$} 
			\STATE $\n^\pid_3 = \MPC{add}(\n^\pid_1, \n^\pid_2)$
		\ENDFOR
		\RETURN $(\n^\pidA_3, ..., \n^\pidZ_3)$
	\ENDIF
\end{algorithmic}
\end{algorithm}

Each of the given multiparty protocols in Algorithm~\ref{algo: mpc binop} (i.e., $\MPC{mult},$ $\MPC{sub},$ $\MPC{add},$ $\MPC{div}$) must be defined using protocols that have been proven to uphold the desired properties within our proofs (i.e., correctness and noninterference). 
We give an example definition for $\MPC{mult}$ in Algorithm~\ref{algo: mpc mult}, but this definition can be swapped out with any protocol for the secure multiparty computation of multiplication that maintains the properties of correctness and noninterference. 
We defer the definition of all other SMC binary operations, rely on assertions that the protocols chosen to be used with this model will maintain both correctness and noninterference in our proofs. 
We chose this strategy as SMC implementations of such protocols will be proven to hold our desired properties on their own, and this allows us to not only leverage those proofs, but to also improve the versatility of our model by allowing such algorithms to be easily swapped out as newer, improved versions become available.

%%%%%%%%%%%%%%%%%
%								 %
% 		Semantics				 %
%								 %
%%%%%%%%%%%%%%%%%
\section{Semantics}
\label{app: semantics}

\begin{figure*}[h]\footnotesize
\begin{tabular}{l}
\phantomsection\label{rule: da1}
Private Array Declaration \\
	\inferrule{\begin{array}{l}\begin{array}{l l}
		({\Expr}) \isPub \rrgamma \qq\qquad&
		((\Type = {\Priv\ \btype}) \lor (\Type = {\btype})) \land
			((\btype = \Int) \lor (\btype = \Float))
		\crcr
		&((\pid, \rrgamma,\ \sigma,\ {\DMap},\ \Acc,\ \Expr) \Mid  \Config) 
			\Deval{\locLL_1}{\codeLL_1} 
			((\pid, \rrgamma,\ \sigma{_1},\ {\DMap},\ \Acc,\ \nl) \Mid  \Config_1)
		\crcr \nl > 0
		& \byte = \EncodePtr({\Priv\ \Const\ \btype *},\ [1,\ [({\loc_1}, 0)],\ [1],\ 1]) 
		\crcr {\loc} = \phi() 
		& \byte_1 = \EncodeArr(\Priv\ \btype, \nl, \Null)   
		\crcr  {\loc_1} = \phi()
		&\RT{\gamma{_1} = \gamma[\x\ \to\ (\loc,\ {\Priv\ \Const\ \btype*})]} 
		\end{array}\crcr \begin{array}{l}
		\sigma{_2} = \sigma{_1}[{\loc} \to (\byte, {\Priv\ \Const\ \btype *}, 1, 
			\PtrPermL(\PermF, {\Priv\ \Const\ \btype*}, \Priv, 1))]
		\crcr 
		\sigma{_3} = \sigma{_2}[{\loc_1} \to (\byte_1,\ {\Priv\ \btype},\ \nl,\ 
			\ArrPermL(\PermF, \Priv\ \btype, \Priv, \nl))]
	\end{array}\end{array}}					
	{((\pid, \rrgamma,\ \sigma,\ {\DMap},\ \Acc,\ {\Type\ \x[\Expr]}) \Mid  \Config)\ 
		\Deval{\locLL_1 \addL (\pid, [(\loc, 0), ({\loc_1}, 0)])}{\codeLL_1 \addC \codeSP{da1}}  
		((\pid, \rgamma{_1},\ \sigma{_3},\ {\DMap},\ \Acc,\ \Skip) \Mid  \Config_1)}
\\ \\ 
\phantomsection\label{rule: ss}
Statement Sequencing \\
  	\inferrule{\begin{array}{l}
		((\pid,\ \rrgamma,\ \sigma,\ {\DMap},\ \Acc, {\stmt_1}) \Mid  \Config)\ 
			\Deval{\locLL_1}{\codeLL_1}  ((\pid, \rgamma{_1}, \sigma{_1}, {\DMap_1}, \Acc, \val_1) \Mid  \Config_1)
		\crcr
		((\pid, \rgamma{_1}, \sigma{_1}, {\DMap_1}, \Acc, {\stmt_2}) \Mid  \Config_1) 
			\Deval{\locLL_2}{\codeLL_2} ((\pid, \rgamma{_2}, \sigma{_2}, {\DMap_2}, \Acc, \val_2) \Mid  \Config_2)
	\end{array}}
	{((\pid, \rrgamma,\ \sigma,\ {\DMap},\ \Acc,\ {\stmt_1;\ \stmt_2}) \Mid  \Config)
		\Deval{\locLL_1 \addL \locLL_2}{\codeLL_1 \addC \codeLL_2 \addC \codeSP{ss}} 
		((\pid, \RT{\gamma{_2}},\ \sigma{_2},\ {\DMap_2},\ \Acc,\ {\val}) \Mid  \Config_2)}
\\ \\
\phantomsection\label{rule: iet}
Public If Else True \\
  	\inferrule{\begin{array}{l l}
		(\Expr) \isPub \rrgamma \qq &
		((\pid, \rrgamma,\ \sigma,\ {\DMap},\ \Acc, \Expr)\ \ \Mid  \Config)\ 
			\Deval{\locLL_1}{\codeLL_1}  ((\pid, \rrgamma,\ \sigma{_1}, {\DMap_1}, \Acc, \n)\quad\ \Mid  \Config_1) 
		\crcr \n\ \neq\ 0 
		& ((\pid, \rrgamma, \sigma{_1}, {\DMap_1}, \Acc, {\stmt_1}) \Mid  \Config_1) 
			\Deval{\locLL_2}{\codeLL_2} ((\pid, \rgamma{_1}, \sigma{_2}, {\DMap_2}, \Acc, \Skip) \Mid  \Config_2)
	\end{array}}
	{((\pid, \rrgamma,\ \sigma,\ {\DMap},\ \Acc,\ {\If\ (\Expr)\ \stmt_1\ \Else\ \stmt_2}) \Mid  \Config)\ 
		\Deval{\locLL_1\addL\locLL_2}{\codeLL_1 \addC \codeLL_2 \addC \codeSP{iet}}  ((\pid, \rrgamma,\ \sigma{_2},\ {\DMap_2},\ \Acc,\ \Skip) \Mid  \Config_2)}
\end{tabular}
\caption{An illustration of scoping within \DynamicPicco\ rules. We highlight the environment $\RT\gamma$ and its modifications in \red{red}.}
\label{Fig: scoping}
\end{figure*}

\begin{figure*}\footnotesize
\begin{tabular}{l}
\phantomsection\label{rule: bm}
Public Multiplication \\
\inferrule{\begin{array}{l}
		({\Expr_1}, {\Expr_2}) \isPub \gamma \qq\quad
		((\pid, \gamma,\ \sigma,\ {\DMap},\ \Acc,\ {\Expr_1}) \Mid  \Config) \ 
			\Deval{\locLL_1}{\codeLL_1}  ((\pid, \gamma,\ \sigma{_1},\ {\DMap_1},\ \Acc,\ {n_1}) \Mid  \Config_1) 
		\crcr 
		((\pid, \gamma,\ \sigma{_1},\ {\DMap_1},\ \Acc,\ {\Expr_2}) \Mid  \Config_1) \ 
			\Deval{\locLL_2}{\codeLL_2}  ((\pid, \gamma,\ \sigma{_2},\ {\DMap_2},\ \Acc,\ {n_2}) \Mid  \Config_2) 
		\qq {n_1} \cdot {n_2} = {n_3}
	\end{array}}
	{((\pid, \gamma,\ \sigma,\ {\DMap},\ \Acc,\ {\Expr_1 \cdot \Expr_2}) \Mid  \Config)\ 
		\Deval{\locLL_1\addL\locLL_2}{\codeLL_1\addC \codeLL_2 \addC \codeSP{bm}} ((\pid, \gamma,\ \sigma{_2},\ {\DMap_2},\ \Acc,\ {n_3}) \Mid  \Config_2) }
\\ \\
\phantomsection\label{rule: r1}
Read Private Variable \\
  	\inferrule{\begin{array}{l l} 
		\gamma(\x) = (\loc,\ \Priv\ \btype)\ \qquad 
		&\sigma(\loc) = (\byte,\ \Priv\ \btype,\ 1,\ \VarPermL(\PermF, \Priv\ \btype, \Priv, 1)) 
		\crcr &\Decode(\Priv\ \btype,\ \byte) = \n
	\end{array}}
	{((\pid, \gamma,\ \sigma,\ {\DMap},\ \Acc,\ \x) \Mid  \Config)\ 
		\Deval{(\pid, [(\loc, 0)])}{\codeSP{r1}}  
		((\pid, \gamma,\ \sigma,\ {\DMap},\ \Acc,\ \n) \Mid  \Config)}		
\\ \\
\phantomsection\label{rule: mal}
Public Malloc \\
	\inferrule{\begin{array}{l l l}
		\Acc = \AccZ\qq
		& (\Expr) \isPub \gamma\qq
		& ((\pid, \gamma,\ \sigma,\ {\DMap},\ \Acc,\ \Expr) \Mid  \Config)\ 
			\Deval{\locLL_1}{\codeLL_1}  ((\pid, \gamma,\ \sigma{_1},\ {\DMap},\ \Acc,\ \n) \Mid  \Config_1)
		\crcr \loc = \phi()
		&& \sigma_2 = \sigma_1\big[\loc \to \big(\Null, \Void*, \n, \PermL(\PermF, \Void*, \Pub, \n)\big)\big]
	\end{array}}
	{\begin{array}{l}
		((\pid, \gamma,\ \sigma,\ {\DMap},\ \Acc,\ \Malloc (\Expr)) \Mid  \Config) 
			\Deval{\locLL_1 \addL (\pid, [(\loc, 0)])}{\codeLL_1 \addC \codeSP{mal}} 
			((\pid, \gamma,\ \sigma{_2},\ {\DMap},\ \Acc,\ (\loc, 0)) \Mid  \Config_1) 
	\end{array}}
\\ \\
\phantomsection\label{rule: fre}
Public Free \\ 
  	\inferrule{\begin{array}{l}\begin{array}{l l}	
		\gamma(\x) = (\loc,\ {\Pub\ \btype*})\quad\
		&	\sigma(\loc) = (\byte, \Pub\ \btype*, 1, \PermL(\PermF, \Pub\ \btype*, \Pub, 1)) 
		\crcr \Acc = \AccZ \qq 
		& 	\DecodePtr({\Pub\ \btype*},\ 1,\ \byte) = [1,\ [({\loc_1}, 0)],\ [1],\ 1]
		\end{array}\crcr \begin{array}{l l}
		\SelectFreeable(\gamma, [({\loc_1}, 0)], [1], \sigma) = 1\qq
		&	 \Free(\sigma,\ \loc_1) = (\sigma{_1}, (\loc_1, 0))
	\end{array}\end{array}}
	{((\pid, \gamma,\ \sigma,\ {\DMap},\ \Acc,\ {\free (\x)}) \Mid  \Config)\ 
		\Deval{(\pid, [(\loc, 0), (\loc_1, 0)])}{\codeSP{fre}}  
		((\pid, \gamma,\ \sigma{_1},\ {\DMap},\ \Acc,\ \Skip) \Mid  \Config)}
\\ \\
\phantomsection\label{rule: inp}
SMC Input Public Value \\
	\inferrule{ \begin{array}{l l}
		({\Expr}) \isPub \gamma \qq
		&((\pid, \gamma, \sigma, {\DMap}, \Acc, {\Expr} ) \Mid  \Config)\ 
			\Deval{\locLL_1}{\codeLL_1} ((\pid, \gamma, \sigma{_1}, {\DMap_1}, \Acc, {n})\quad \Mid  \Config_1) 	\qquad
		\crcr
		\gamma(\x) = (\loc, \Pub\ \btype)
		& \InputVal(\x, {n}) = {n_1}	\qq
		\crcr 
		\Acc =\AccZ \qq
		&((\pid, \gamma, \sigma{_1}, {\DMap_1}, \Acc, {\x = n_1} ) \Mid  \Config_1) 
			\Deval{\locLL_2}{\codeLL_2} ((\pid, \gamma, \sigma{_2}, {\DMap_2}, \Acc, \Skip) \Mid  \Config_2)
	\end{array}}
	{((\pid, \gamma,\ \sigma,\ {\DMap},\ \Acc,\ {\smcinput(\x,\ \Expr)}) \Mid  \Config) 
		\Deval{\locLL_1 \addL \locLL_2}{\codeLL_1 \addC \codeLL_2 \addC \codeSP{inp}}  
		((\pid, \gamma,\ \sigma{_2},\ {\DMap_2},\ \Acc,\ \Skip) \Mid  \Config_2)}
\\ \\ 
\phantomsection\label{rule: out3}
SMC Output Private Array \\
	\inferrule{ \begin{array}{l}\begin{array}{l l}
		({\Expr_1}, {\Expr_2}) \isPub \gamma \qquad
		&	((\pid, \gamma,\ \sigma,\ {\DMap},\ \Acc, {\Expr_1} ) \Mid  \Config)\ 
				\Deval{\locLL_1}{\codeLL_1} ((\pid, \gamma, \sigma{_1}, {\DMap_1}, \Acc, {n}) \Mid  \Config_1) 
		\crcr \gamma(\x) = ({\loc}, {\Priv\ \Const\ \btype*}) \qquad
		&	((\pid, \gamma, \sigma{_1}, {\DMap_1}, \Acc, {\Expr_2} ) \Mid  \Config_1) 
				\Deval{\locLL_2}{\codeLL_2} ((\pid, \gamma, \sigma{_2}, {\DMap_2}, \Acc, {\nl}) \Mid  \Config_2)
		\end{array}\crcr\begin{array}{l}
			\sigma{_2}({\loc}) = ({\byte}, {\Priv\ \Const\ \btype*}, 1, 
					\PtrPermL(\PermF, {\Priv\ \Const\ \btype*}, \Priv, 1))
		\crcr \DecodePtr({\Priv\ \Const\ \btype*},\ 1,\ {\byte}) = [1,\ [({\loc_1}, 0)],\ [1],\ \Priv\ \btype,\ 1]
		\crcr \sigma{_2}({\loc_1}) = ({\byte_1},\ {\Priv\ \btype},\ {\nl},\ \ArrPermL(\PermF, \Priv\ \btype, \Priv, {\nl}))
		\crcr \forall \ind \in \{0, ..., \nl-1\} \qquad \DecodeArr({\Priv\ \btype},\ {\ind},\ {\byte_1})  = {m_\ind}
		\crcr \OutputArr(\x,\ {n},\ [{m_0},\ {...},\ {m_{\nl-1}}])
	\end{array}\end{array}}
	{\begin{array}{l}
	((\pid, \gamma,\ \sigma,\ {\DMap},\ \Acc,\ {\smcoutput(\x,\ \Expr_1,\ \Expr_2)}) \Mid  \Config) 
		\crcr\Deval{\locLL_1 \addL \locLL_2 \addL (\pid, [(\loc, 0), (\loc_1, 0), ..., (\loc_1, n_1-1)])}{\codeLL_1 \addC \codeLL_2 \addC \codeSP{out3}}  
		((\pid, \gamma,\ \sigma{_2},\ {\DMap_2},\ \Acc,\ \Skip) \Mid  \Config_2)
		\end{array}}
\\ \\ 
\phantomsection\label{rule: pin}
Pre-Increment Public Variable \\
  	\inferrule{\begin{array}{l l}
		\gamma (\x) = (\loc,\ \Pub\ \btype) \qquad
		&	\sigma(\loc) = (\byte,\ \Pub\ \btype,\ 1,\ \VarPermL(\PermF, \Pub\ \btype, \Pub, 1))
		\crcr \Acc =\AccZ \qquad
		&	\Decode(\Pub\ \btype,\ \byte) = {\n}	
		\crcr {\n_1} = {\n} + 1 
		&	\Update(\sigma,\ \loc,\ {\n_1},\ {\DMap},\ \Acc, \Pub\ \btype) = (\sigma{_1},\ {\DMap})
	\end{array}} 
	{((\pid, \gamma,\ \sigma,\ {\DMap},\ \Acc,\ {\plpl\x}) \Mid  \Config)\ 
		\Deval{(\pid, [(\loc, 0)])}{\codeSP{pin}}  
		((\pid, \gamma,\ \sigma{_1},\ {\DMap},\ \Acc,\ {\n_1}) \Mid  \Config)}
\\ \\
\phantomsection\label{rule: mprdp}
Multiparty Private Pointer Dereference Single Level Indirection\\ 
	\inferrule{\begin{array}{l}
		\{(\x) \isPriv \gamma^\pid\}^{\pidZ}_{\pid = \pidA} 
		\qq \{ \gamma^\pid(\x) = (\loc^\pid, \Priv\ \btype*)\}^{\pidZ}_{\pid = \pidA}
		\qq  \nl > 1  \crcr
		\{\sigma^\pid(\loc^\pid) = (\byte^\pid,\ {\Priv\ \btype*},\ \nl,\ \PtrPermL(\PermF, {\Priv\ \btype*}, \Priv, \nl))\}^{\pidZ}_{\pid = \pidA}
		\crcr 
		\{\DecodePtr({\Priv\ \btype*},\ \nl,\ \byte^\pid) = [\nl,\ \locL^\pid,\ \tagbL^\pid, 1] \}^{\pidZ}_{\pid = \pidA}
		\crcr 
		\{\Retrieve(\nl, \locL^\pid, \Priv\ \btype, \sigma^\pid) = ([\n^\pid_0, ...\n^\pid_{\nl-1}], 1)\}^{\pidZ}_{\pid = \pidA}
		\crcr \MPC{dv}([[\n^{\pidA}_{0}, ..., \n^{\pidA}_{\nl-1}], ..., [\n^{\pidZ}_{0}, ..., \n^{\pidZ}_{\nl-1}]], [\tagbL^\pidA, ..., \tagbL^\pidZ]) = (\n^{\pidA}, ..., \n^{\pidZ})
	\end{array}}
	{\begin{array}{l}
	((\pidA, \gamma^{\pidA}, \sigma^{\pidA}, \DMap^\pidA, \Acc, {* \x})\Mid ...\Mid 
	(\pidZ, \gamma^{\pidZ}, \sigma^{\pidZ}, \DMap^\pidZ, \Acc, {* \x}))
		\Deval{(\pidA, (\loc^\pidA, 0)\addL\locL^{\pidA}) \Mid ... \Mid (\pidZ, (\loc^\pidZ, 0)\addL\locL^{\pidZ})}{\codeMP{mprdp}}  
		\crcr((\pidA, \gamma^{\pidA}, \sigma^{\pidA}, \DMap^\pidA, \Acc, \n^{\pidA}_{})\Mid ...\Mid 
		(\pidZ, \gamma^{\pidZ}, \sigma^{\pidZ}, \DMap^\pidZ, \Acc, \n^{\pidZ}_{}))
		\end{array}}
\end{tabular}
\caption{Additional \DynamicPicco\ semantic rules.}
\label{Fig: sem app}
\end{figure*}

\begin{figure*}\footnotesize
\begin{tabular}{l}
\phantomsection\label{rule: ra1}
Private Array Read Public Index \\
  	\inferrule{
	\begin{array}{l}
	(\Expr) \isPub \gamma \qq
	((\pid, \gamma,\ \sigma,\ {\DMap},\ \Acc,\ \Expr) \Mid  \Config)\ 
		\Deval{\locLL_1}{\codeLL_1}  ((\pid, \gamma,\ \sigma{_1},\ {\DMap_1},\ \Acc,\ \ind) \Mid  \Config_1) \crcr
	\gamma(\x) = (\loc,\ {\Priv\ \Const\ \btype*}) \crcr
	\sigma{_1}(\loc) = (\byte,\ {\Priv\ \Const\ \btype*}, 1, 
		\PtrPermL(\PermF, {\Priv\ \Const\ \btype*}, \Priv, 1))
	\crcr \DecodePtr({\Priv\ \Const\ \btype*},\ 1,\ \byte) = [1,\ [({\loc_1}, 0)],\ [1],\ 1] 
	\qquad 0 \leq \ind \leq {\nl} -1 
	\crcr \sigma{_1}({\loc_1}) = ({\byte_1}, {\Priv\ \btype}, {\nl}, 
		\ArrPermL(\PermF, \Priv\ \btype, \Priv, {\nl})) 
	\crcr \DecodeArr({\Priv\ \btype},\ \ind,\ {\byte_1}) = \n_\ind 
	\end{array}}
	{ ((\pid, \gamma,\ \sigma,\ {\DMap},\ \Acc,\ {\x[\Expr]}) \Mid  \Config)\ 
		\Deval{\locLL_1 \addL (\pid, [(\loc, 0), ({\loc_1}, \ind)])}{\codeLL_1 \addC \codeSP{ra1}}  ((\pid, \gamma,\ \sigma{_1},\ {\DMap_1},\ \Acc,\ {\n_\ind}) \Mid  \Config_1)}
\\ \\ 
\phantomsection\label{rule: wa}
Public Array Write Public Value Public Index \\
  	\inferrule{\begin{array}{l}
		({\Expr_1}, {\Expr_2}) \isPub \gamma \qquad\quad
		((\pid, \gamma,\ \sigma,\ {\DMap},\ \Acc, {\Expr_1}) \Mid  \Config)\ 
			\Deval{\locLL_1}{\codeLL_1}  ((\pid, \gamma,\ \sigma{_1},\ {\DMap_1},\ \Acc,\ \ind) \Mid  \Config_1) \crcr
		\Acc = \AccZ \qq\ \
		((\pid, \gamma, \sigma{_1}, {\DMap_1}, \Acc, {\Expr_2}) \Mid  \Config_1) 
			\Deval{\locLL_2}{\codeLL_2}  ((\pid, \gamma,\ \sigma{_2},\ {\DMap_2},\ \Acc,\ \n) \Mid  \Config_2) \crcr
		\gamma(\x) = (\loc,\ {\Pub\ \Const\ \btype*})	\crcr
%		\val \neq \Skip \qquad
		\sigma{_2}(\loc) = (\byte, {\Pub\ \Const\ \btype*}, 1, \PtrPermL(\PermF, {\Pub\ \Const\ \btype*}, \Pub, 1)) 
		\crcr \DecodePtr({\Pub\ \Const\ \btype*}, 1, \byte) = [1,\ [({\loc_1, 0})],\ [1],\ 1]
		\crcr \sigma{_2}({\loc_1}) = ({\byte_1}, \Pub\ \btype, \nl, \ArrPermL(\PermF, \Pub\ \btype, \Pub, \nl))
		\qquad 0 \leq {\ind} \leq {\nl-1} 
		\crcr 
		 \UpdateArr(\sigma{_2},\ ({\loc_1}, \ind),\ \n,\ \Pub\ \btype) = \sigma{_3}
	\end{array}}
	{((\pid, \gamma,\ \sigma,\ {\DMap},\ \Acc,\ {\x[\Expr_1]\ = \Expr_2}) \Mid  \Config)\ 
		\Deval{\locLL_1 \addL \locLL_2 \addL (\pid, [(\loc, 0), (\loc_1, \ind)])}{\codeLL_1 \addC \codeLL_2 \addC \codeSP{wa}}  
		((\pid, \gamma,\ \sigma{_3},\ {\DMap_2},\ \Acc,\ \Skip) \Mid  \Config_2)}
\\ \\ 
\phantomsection\label{rule: rp}
Pointer Read Single Location \\
	\inferrule{\begin{array}{l l}
		\gamma(\x) = (\loc,\ {\llabel\ \btype*}) \qquad 
		& \sigma(\loc) = (\byte,\ {\llabel\ \btype*},\ 1,\ 
			\PtrPermL(\PermF, {\llabel\ \btype*}, \llabel, 1))
		\crcr &\DecodePtr({\llabel\ \btype*},\ 1,\ \byte) = [1,\ [({\loc_1}, \offset_1)],\ [1],\ \indir] 
	\end{array}}
	{((\pid, \gamma,\ \sigma,\ {\DMap},\ \Acc,\ \x) \Mid  \Config)\ 
		\Deval{(\pid, [(\loc, 0)])}{\codeSP{rp}}  ((\pid, \gamma,\ \sigma,\ {\DMap},\ \Acc,\ {(\loc_1, \offset_1)}) \Mid  \Config)}
\\ \\
\phantomsection\label{rule: wp1}
Private Pointer Write \\
  	\inferrule{\begin{array}{l}
		\gamma(\x) = ({\loc}, {\Priv\ \btype*}) \qquad\
		((\pid, \gamma, \sigma, {\DMap}, \Acc, \Expr) \Mid  \Config) 
			\Deval{\locLL_1}{\codeLL_1} ((\pid, \gamma, \sigma{_1}, {\DMap_1}, \Acc, ({\loc_e}, \offset_e)) \Mid  \Config_1) \crcr
		(\Expr) \isPub \gamma \qq
		\sigma{_1}({\loc}) = (\byte,\ {\Priv\ \btype*},\ \nl, \PtrPermL(\PermF, {\Priv\ \btype*}, \Priv, \nl))
		\crcr \DecodePtr({\Priv\ \btype*},\ \nl,\ \byte) = [\nl,\ \locL,\ \tagbL,\ \indir]
		\crcr \UpdatePtr(\sigma{_1},\ {(\loc, 0)},\ [1,\ [({\loc_e}, \offset_e)],\ [1],\ \indir],\ 
			{\DMap_1},\ \Acc, \Priv\ \btype*) = (\sigma{_2},\ {\DMap_2}, 1) 
	\end{array}} 
	{((\pid, \gamma,\ \sigma,\ {\DMap},\ \Acc,\ {\x = \Expr}) \Mid  \Config)\ 
		\Deval{\locLL_1 \addL (\pid, [(\loc, 0)])}{\codeLL_1\addC\codeSP{wp1}}  ((\pid, \gamma,\ \sigma{_2},\ {\DMap_2},\ \Acc,\ \Skip) \Mid  \Config_1)}
\\ \\ 
\phantomsection\label{rule: wdp3}
Private Pointer Dereference Write Single Location Private Value \\ 
	\inferrule{
	\begin{array}{l}
		(\Expr) \isPriv \gamma	\qq
		((\pid, \gamma,\ \sigma,\ {\DMap},\ \Acc,\ \Expr) \Mid  \Config) 
			\Deval{\locLL_1}{\codeLL_1} ((\pid, \gamma,\ \sigma{_1},\ {\DMap_1},\ \Acc,\ \n) \Mid  \Config_1) 
		\crcr \gamma(\x) = (\loc,\ {\Priv\ \btype*}) 
		\qq (\btype = \Int) \lor (\btype = \Float) 
		\crcr \sigma{_1}(\loc) = (\byte,\ {\Priv\ \btype*},\ 1,\ \PtrPermL(\PermF, {\Priv\ \btype*}, \Priv, 1)) 
		\crcr \DecodePtr({\Priv\ \btype*},\ 1,\ \byte) = [1,\ [(\loc_1, \offset_1)],\ [1],\ 1]
		\crcr \DynUpdate(\DMap_1, \sigma, [(\loc_1, \offset_1)],\ \Acc, \Priv\ \btype) = \DMap_2
		\crcr \UpdateOffset(\sigma{_1},\ ({\loc_1}, \offset_1),\ \n, \Priv\ \btype) = (\sigma{_2}, 1)
	\end{array}}
	{((\pid, \gamma,\ \sigma,\ {\DMap},\ \Acc,\ {* \x = \Expr}) \Mid  \Config) 
		\Deval{\locLL_1 \addL (\pid, [(\loc, 0), ({\loc_1}, \offset_1)])}{\codeLL_1\addC\codeSP{wdp3}}  
		((\pid, \gamma,\ \sigma{_2},\ {\DMap_2},\ \Acc,\ \Skip) \Mid  \Config_1) }
\\ \\
\phantomsection\label{rule: pin1}
Pre-Increment Public Pointer Single Location \\
  	\inferrule{\begin{array}{l}
		\gamma (\x) = (\loc,\ {\Pub\ \btype*}) \crcr
		\sigma(\loc) = (\byte,\ {\Pub\ \btype*},\ 1,\ \PtrPermL(\PermF, {\Pub\ \btype*}, \Pub, 1))
		\crcr \DecodePtr({\Pub\ \btype*},\ 1,\ \byte) = [1,\ [({\loc_1}, \offset_1)],\ [1],\ 1]
		\crcr (({\loc_2}, \offset_2), 1) = \GetLoc(({\loc_1}, \offset_1), \tau({\Pub\ \btype}), \sigma)
		\crcr \UpdatePtr(\sigma,\ (\loc, 0),\ [1,\ [({\loc_2}, \offset_2)],\ [1],\ 1],\ {\DMap},\ \Acc, 
			\Pub\ \btype*) = (\sigma{_1},\ {\DMap_1}, 1)
	\end{array}} 
	{((\pid, \gamma,\ \sigma,\ {\DMap},\ \Acc,\ {\plpl\x}) \Mid  \Config)\ 
		\Deval{(\pid, [(\loc, 0)])}{\codeSP{pin1}}  
		((\pid, \gamma,\ \sigma{_1},\ {\DMap_1},\ \Acc,\ ({\loc_2}, \offset_2)) \Mid  \Config)}
\\ \\
\phantomsection\label{rule: rdp}
Pointer Dereference Single Location  \\ 
	\inferrule{\begin{array}{l}
		\gamma(\x) = (\loc,\ {\llabel\ \btype*}) 
		\qq  \sigma({\loc}) = ({\byte},\ {\llabel\ \btype*},\ 1,\ \PtrPermL(\PermF, {\llabel\ \btype*}, \llabel, 1)) 
		\crcr \DecodePtr({\llabel\ \btype*},\ 1,\ {\byte}) = [1,\ [({\loc_1}, \offset_1)],\ [1],\ 1] 
		\qquad\quad \DerefPtrPub(\sigma, \llabel\ \btype, ({\loc_1}, \offset_1)) = (\n, 1)
	\end{array}}
	{((\pid, \gamma,\ \sigma,\ {\DMap},\ \Acc,\  {* \x}) \Mid  \Config)\ 
		\Deval{(\pid, [(\loc, 0), ({\loc_1}, \offset_1)])}{\codeSP{rdp}}  ((\pid, \gamma,\ \sigma,\ {\DMap},\ \Acc,\ \n) \Mid  \Config)}
\end{tabular}
\caption{Additional \DynamicPicco\ semantic rules for arrays and pointers.}
\label{Fig: sem app arr}
\end{figure*}

We show selected additional semantic rules in Figures~\ref{Fig: sem app} and~\ref{Fig: sem app arr} to give a more encompassing view of our semantic model.
In particular, we show a larger subset of array and pointer rules, as well as input/output and public allocation and deallocation. 
We use several algorithms within the rules in order to increase the readability of the rules and compartmentalize common functionalities between rules; we will discuss a few further here. 

The variations of $\PermL$ are all used to map the information given as arguments into the appropriate byte-wise permission tuples. This allows us to show the important information about the byte-wise permissions without the repetition of showing the list of permission tuples and getting into the more intricate details, particularly for pointer data structures which have a mix of public and private data and therefore a mix of permissions. 
Algorithms such as $\InputVal$ and $\OutputArr$ are, respectively, handlers for reading data in and writing data to a file. 
Algorithm $\SelectFreeable$ is used to evaluate whether the locations a pointer refers to are indeed freeable (i.e., all locations were allocated through the use of \TT{malloc} or \TT{pmalloc} and not the default location or a location allocated during a variable declaration. 
Algorithm $\Free$ is used to modify the permissions of the location that is being freed to be $\PermN$. 
All algorithms such as $\Update$, $\UpdateArr$, and $\UpdatePtr$ contain the intricacies of updating the specific type of data within memory. 

Algorithm $\Retrieve$ will go through memory and pull all values that are referred to at the $\nl$ locations of that pointer. This is a helper function to allow up to show the exact information that will be used within the multiparty protocol. 
Algorithm $\DynUpdate$ ensures we are tracking every location that is modified within a private-conditioned branch by checking if the current location we are modifying with the pointer dereference write is already tracked in $\DMap$, and if it is not, adding it and the original value for that location to $\DMap$. 
Algorithm $\GetLoc$ handles incrementing the given location and offset by the appropriate size in bytes. It returns the new location and offset that the pointer will refer to, without complicating the rule with all the redundancies of handling locations of different sizes and finding the appropriate position. 
Algorithm $\DerefPtrPub$ will take the location and offset that the pointer refers to as well as the expected type, and give back the value that is obtained from reading from that position. This handles all the intricacies of pointers with non-zero offsets, and ensures that we will always read data from memory as the expected type. 

We also handle pointers of higher levels of indirection; we chose not to show those rules here as the main concepts behind the rules are fairly similar to those shown. 
The full semantic model for \piccoC\ is available at \cite{amys-dissertation} (Chapter 5), including the algorithms used within the semantics and the full \vanillaC\ semantics.

\subsection{Scoping}
In our semantics, we implement standard C scoping through our use of the environment $\gamma$. 
To illustrate this, Figure~\ref{Fig: scoping} contains a few rules with $\gamma$ and its additions highlighted in \red{red}. 
First, we show the Private Array Declaration rule, where we add a new mapping for the array variable and its newly allotted location to the environment and return the updated environment. 
Next, we have the standard Statement Sequencing rule, where we pass along all additions to the environment that were made within each statement. 
Finally we show the Public If Else True rule. Here, we have an updated environment $\RT{\gamma_1}$ returned from the evaluation of the \TT{then} branch $\stmt_1$; any new mappings introduced within $\stmt_1$ become out of scope once we exit this rule, so we return the original environment $\RT\gamma$. This way, any further references to local variables declared within $\stmt_1$ will no longer be found in the environment and cannot execute, as is expected.

\begin{algorithm*}\footnotesize
\caption{$(\x_\mathit{mod}, \tagb) \gets \DynExtract(\stmt_1,\ \stmt_2, \gamma)$}
\label{algo: dyn extract}
\begin{algorithmic}
	\STATE $\tagb = 0$
 	\STATE $\x_\mathit{local} = [\ ] $
	\STATE $\x_\mathit{mod} = [\ ] $		
       \FORALL {$\stmt \in \{\stmt_1;\ \stmt_2\}$}												
		 \IF{$((\stmt ==\ \Type\ \x) \lor (\stmt ==\ \Type\ \x[\Expr]))$}
            			\STATE $\x_\mathit{local}.\mathit{append}(\x)$ 								
		\ELSIF{$((\stmt ==\ \x = \Expr) \land (\lnot\x_\mathit{local}.\mathit{contains}(\x)))$}			
            			\STATE $\x_\mathit{mod} = \x_\mathit{mod} \cup [\x]$
				\FORALL{$\Expr_1 \in \Expr$}
					\IF{$((\Expr_1 ==\ \plpl\x_1) \land (\lnot\x_\mathit{local}.\mathit{contains}(\x_1)))$}
						\STATE $\x_\mathit{mod} = \x_\mathit{mod} \cup [\x_1]$
					\ENDIF
				\ENDFOR	
		\ELSIF{$((\stmt ==\ \x[\Expr_1] = \Expr_2) \land (\lnot\x_\mathit{local}.\mathit{contains}(\x)))$}
				\IF{$(\Expr_1)\isPriv \gamma$}
					\STATE $\x_\mathit{mod} = \x_\mathit{mod} \cup [\x]$
				\ELSE
					\STATE $\tagb = 1$
				\ENDIF
				\FORALL{$\Expr \in \{\Expr_1, \Expr_2\}$}
					\IF{$((\Expr ==\ \plpl\x_1) \land (\lnot\x_\mathit{local}.\mathit{contains}(\x_1)))$}
						\STATE $\x_\mathit{mod} = \x_\mathit{mod} \cup [\x_1]$
					\ENDIF
				\ENDFOR
		\ELSIF{$((\stmt ==\ \plpl\x) \land (\lnot\x_\mathit{local}.\mathit{contains}(\x)))$}
				\STATE $\x_\mathit{mod} = \x_\mathit{mod} \cup [\x]$
		\ELSIF{$(\stmt == *\x = \Expr)$}
				\STATE $\tagb = 1$
				\FORALL{$\Expr_1 \in \Expr$}
					\IF{$((\Expr_1 ==\ \plpl\x_1) \land (\lnot\x_\mathit{local}.\mathit{contains}(\x_1)))$}
						\STATE $\x_\mathit{mod} = \x_\mathit{mod} \cup [\x_1]$
					\ENDIF
				\ENDFOR
		\ENDIF
	\ENDFOR
         \RETURN $(\x_\mathit{mod}, \tagb)$
\end{algorithmic}
\end{algorithm*}

\begin{algorithm*}\footnotesize
\caption{$(\gamma_1, \sigma_1, \locL) \gets \Initialize(\x_{\vl},\ \gamma, \sigma, n, \Acc)$}
\label{algo: initialize}
\begin{algorithmic}
	\STATE $\loc_\res = \phi(\mathit{temp})$
	\STATE $\gamma_1 = \gamma[\res\_\Acc \to (\loc_\res, \Priv\ \Int)]$
	\STATE $\byte_\res = \Encode(\Priv\ \Int, n)$
	\STATE $\sigma_1 = \sigma[\loc_\res \to (\byte_\res, \Priv\ \Int, 1, \VarPermL(\PermF, \Priv\ \Int, \Priv, 1))]$
	\STATE $\locL = [(\loc_\res, 0)]$
	\FORALL{$\x \in \x_{\vl}$}
		\STATE $(\loc_\x, \Type) = \gamma(\x)$	
		\STATE $\loc_t = \phi(\mathit{temp})$
		\STATE $\loc_e = \phi(\mathit{temp})$
		% record touched locs
		\STATE $\locL = \locL \addL[(\loc_\x, 0), (\loc_t, 0), (\loc_e, 0)]$
		\STATE $\gamma_1 = \gamma_1[\x\_t\_\Acc \to (\loc_t, \Type)][\x\_e\_\Acc \to (\loc_e, \Type)]$
		\STATE $(\byte_\x, \Type, \nl, \VarPermL(\PermF, \Type, \Priv, \nl)) = \sigma_1(\loc_\x)$
		% Special case for if entire array is modified
		\IF{$(\Type = \Priv\ \Const\ \btype*)$} 
			% temp locs for array data
			\STATE $\loc_{ta} = \phi(\mathit{temp})$
			\STATE $\loc_{ea} = \phi(\mathit{temp})$
			% Decode x
			\STATE $[1, [(\loc_{xa}, 0)], [1], 1] = \DecodePtr(\Type, 1, \byte_{x})$
			% Look up x array data 
			\STATE $(\byte_{xa}, \Priv\ \btype, \nl, \PtrPermL(\PermF, \Priv\ \btype, \Priv, \nl)) = \sigma_1(\loc_{xa})$
			% add mappings for array data
			\STATE $\sigma_1 = \sigma_1[\loc_{ta} \to (\byte_{xa}, \Priv\ \btype, \nl, \VarPermL(\PermF, \Priv\ \btype, \Priv, \nl))]$
			\STATE $\sigma_1 = \sigma_1[\loc_{ea} \to (\byte_{xa}, \Priv\ \btype, \nl, \VarPermL(\PermF, \Priv\ \btype, \Priv, \nl))]$
			% Encode ptr repr.
			\STATE $\byte_{t} = \EncodePtr(\Type, [1, [(\loc_{t}, 0)], [1], 1])$
			\STATE $\byte_{e} = \EncodePtr(\Type, [1, [(\loc_{e}, 0)], [1], 1])$
			% add mappings for const
			\STATE $\sigma_1 = \sigma_1[\loc_t \to (\byte_{t}, \Type, 1, \PtrPermL(\PermF, \Type, \Priv, 1))]$
			\STATE $\sigma_1 = \sigma_1[\loc_e \to (\byte_{e}, \Type, 1, \PtrPermL(\PermF, \Type, \Priv, 1))]$
			% add locations touched
			\FORALL{$i \in \{0...\nl-1\}$}
				\STATE $\locL = \locL \addL [(\loc_{xa}, i), (\loc_{ta}, i), (\loc_{ea}, i)]$
			\ENDFOR
		\ELSE
			\STATE $\sigma_1 = \sigma_1[\loc_t \to (\byte_\x, \Type, \nl, \VarPermL(\PermF, \Type, \Priv, \nl))]$
			\STATE $\sigma_1 = \sigma_1[\loc_e \to (\byte_\x, \Type, \nl, \VarPermL(\PermF, \Type, \Priv, \nl))]$
		\ENDIF
	\ENDFOR									
	\RETURN $(\gamma_1, \sigma_1, \locL)$
\end{algorithmic}
\end{algorithm*}

\begin{algorithm*}\footnotesize
\caption{$(\sigma_4, \locL) \gets \Restore(\x_{\vl}, \gamma, \sigma, \Acc)$}
\label{algo: restore}
\begin{algorithmic}
	\STATE $\locL = [\ ]$
	\FORALL{$\x \in \x_{\vl}$}	 
		\STATE $(\loc_\x, \Type) = \gamma(\x)$
		\STATE $(\loc_t, \Type) = \gamma(\x\_t\_\Acc)$
		\STATE $(\loc_e, \Type) = \gamma(\x\_e\_\Acc)$
		% record touched locs
		\STATE $\locL = \locL \addL [(\loc_\x, 0), (\loc_t, 0), (\loc_e, 0)]$
		% Special case for if entire array is modified
		\IF{$(\Type = \Priv\ \Const\ \btype*)$}
			% look up x/t/e
			\STATE $(\byte_{xa}, \Type, 1, \VarPermL(\PermF, \Type, \Priv, 1)) = \sigma(\loc_\x)$
			\STATE $(\byte_{ta}, \Type, 1, \VarPermL(\PermF, \Type, \Priv, 1)) = \sigma(\loc_t)$
			\STATE $(\byte_{ea}, \Type, 1, \VarPermL(\PermF, \Type, \Priv, 1)) = \sigma(\loc_e)$
			% Decode x/t/e
			\STATE $[1, [(\loc_{xa}, 0)], [1], 1] = \DecodePtr(\Type, 1, \byte_{xa})$
			\STATE $[1, [(\loc_{ta}, 0)], [1], 1] = \DecodePtr(\Type, 1, \byte_{ta})$
			\STATE $[1, [(\loc_{ea}, 0)], [1], 1] = \DecodePtr(\Type, 1, \byte_{ea})$
			% Look up x/t array data 
			\STATE $\sigma_1[\loc_{xa} \to (\byte_t, \Type, \nl, \VarPermL(\PermF, \Type, \Priv, \nl))] = \sigma$
			\STATE $\sigma_2[\loc_{ta} \to (\byte_\x, \Type, \nl, \VarPermL(\PermF, \Type, \Priv, \nl))] = \sigma_1$
			% store in then
			\STATE $\sigma_3 = \sigma_2[\loc_{ta} \to (\byte_t, \Type, \nl, \VarPermL(\PermF, \Type, \Priv, \nl)]$
			% look up else
			\STATE $(\byte_\x, \Type, \nl, \VarPermL(\PermF, \Type, \Priv, \nl)) = \sigma_3(\loc_{ea})$
			% store in x
			\STATE $\sigma_4 = \sigma_3[\loc_{xa} \to (\byte_\x, \Type, \nl, \VarPermL(\PermF, \Type, \Priv, \nl)]$
			% record touched locs
			\FORALL{$i \in \{0...\nl-1\}$}
				\STATE $\locL = \locL \addL [(\loc_{xa}, i), (\loc_{ta}, i), (\loc_{ea}, i)]$
			\ENDFOR
		\ELSE
			% look up x/t
			\STATE $\sigma_1[\loc_\x \to (\byte_t, \Type, \nl, \VarPermL(\PermF, \Type, \Priv, \nl))] = \sigma$
			\STATE $\sigma_2[\loc_t \to (\byte_\x, \Type, \nl, \VarPermL(\PermF, \Type, \Priv, \nl)] = \sigma_1$
			% store in then
			\STATE $\sigma_3 = \sigma_2[\loc_t \to (\byte_t, \Type, \nl, \VarPermL(\PermF, \Type, \Priv, \nl)]$
			% look up else
			\STATE $(\byte_\x, \Type, \nl, \VarPermL(\PermF, \Type, \Priv, \nl)) = \sigma_3(\loc_e)$
			% store in x
			\STATE $\sigma_4 = \sigma_3[\loc_\x \to (\byte_\x, \Type, \nl, \VarPermL(\PermF, \Type, \Priv, \nl)]$
		\ENDIF
		\STATE $\sigma = \sigma_4$
	\ENDFOR							
	\RETURN $(\sigma_4, \locL)$
\end{algorithmic}
\end{algorithm*}

\begin{algorithm*}\footnotesize
\caption{$(\valL, \n_{\res}, \locL) \gets \ResolveR(\x_{\vl}, \Acc, \gamma, \sigma)$}
\label{algo: resolve R}
\begin{algorithmic}
	\STATE $\valL = [\ ]$
	% look up res
	\STATE $(\loc_{\res}, \Priv\ \Int) = \gamma(\res\_\Acc)$
	\STATE $(\byte_{\res}, \Priv\ \Int, 1, \VarPermL(\PermF, \Priv\ \Int, \Priv, 1)) = \sigma(\loc_{\res})$
	\STATE $\n_{\res} = \Decode(\Priv\ \Int, \byte_{\res})$
	\STATE $\locL = [(\loc_{\res}, 0)]$
	\FORALL{ $\x \in \x_{\vl}$ }	
		% look up x	 		(== current else value)	
		% look up x_then	(== current then value)
		\STATE $(\loc_\x, \Type) = \gamma(\x)$
		\STATE $(\loc_t, \Type) = \gamma(\x_t)$	
		% Look up x/t	
		\STATE $(\byte_{x}, \Type, \nl, \VarPermL(\PermF, \Type, \Priv, \nl)) = \sigma(\loc_\x)$
		\STATE $(\byte_{t}, \Type, \nl, \VarPermL(\PermF, \Type, \Priv, \nl)) = \sigma(\loc_t)$
		% record touched locs
		\STATE $\locL = \locL \addL [(\loc_\x, 0), (\loc_t, 0)]$
		% based on type look up in sigma, add 2-tuple V = V::[(v_t, v_e)] 
		% Maintain this constant ordering to ensure correct storage later on -> X[i] == V[i]
		\IF{ $(\Type = \Priv\ \btype)$ }	
			% Decode x/t
			\STATE $\val_{x} = \Decode(\Priv\ \btype, \byte_{x})$
			\STATE $\val_{t} = \Decode(\Priv\ \btype, \byte_{t})$
			% Add to V
			\STATE $\valL = \valL.\mathit{append}((\val_{t}, \val_{x}))$
		\ELSIF{ $(\Type = \Priv\ \Const\ \btype*)$ }	
			% Decode x/t
			\STATE $[1, [(\loc_{xa}, 0)], [1], 1] = \DecodePtr(\Type, 1, \byte_{x})$
			\STATE $[1, [(\loc_{ta}, 0)], [1], 1] = \DecodePtr(\Type, 1, \byte_{t})$
			% Look up xa/ta	
			\STATE $(\byte_{xa}, \Priv\ \btype, \nl, \VarPermL(\PermF, \Priv\ \btype, \Priv, \nl)) = \sigma(\loc_{xa})$
			\STATE $(\byte_{ta}, \Priv\ \btype, \nl, \VarPermL(\PermF, \Priv\ \btype, \Priv, \nl)) = \sigma(\loc_{ta})$
			\FORALL{$i \in \{0 ... \nl-1\}$}
				% Decode x/t
				\STATE $\val_{xi} = \DecodeArr(\Priv\ \btype, i, \byte_{xa})$
				\STATE $\val_{ti} = \DecodeArr(\Priv\ \btype, i, \byte_{ta})$
				% Add to V
				\STATE $\valL = \valL.\mathit{append}((\val_{ti}, \val_{xi}))$
				% record touched locs
				\STATE $\locL = \locL \addL [(\loc_{xa}, i), (\loc_{ta}, i)]$
			\ENDFOR
		\ELSIF{$(\Type = \Priv\ \btype*)$}	
			% Decode x/t
			\STATE $[\nl, \locL_\x, \tagbL_\x, \indir] = \DecodePtr(\Type, \nl, \byte_{x})$
			\STATE $[\nl, \locL_t, \tagbL_t, \indir] = \DecodePtr(\Type, \nl, \byte_{t})$
			% Add to V	
			\STATE $\valL = \valL.\mathit{append}(([\nl, \locL_t, \tagbL_t, \indir], [\nl, \locL_\x, \tagbL_\x, \indir]))$
		\ENDIF	
	\ENDFOR
	\RETURN $(\valL, \n_\res, \locL)$	
\end{algorithmic}
\end{algorithm*}

\begin{algorithm*}\footnotesize
\caption{$(\sigma_1, \locL) \gets \ResolveS(\x_{\vl}, \Acc, \gamma, \sigma, \valL)$}
\label{algo: resolve S}
\begin{algorithmic}
	\STATE $\locL = [\ ]$
	\STATE $\sigma_1 = \sigma$
	\FORALL{ $i \in \{0...|\valL| -1\}$ }			
		% look up x	 		(== current else value)	
		% look up x_then	(== current then value)
		\STATE $\x = \x_{\vl}[i]$
		\STATE $\val_\x = \valL[i]$
		\STATE $(\loc_\x, \Type) = \gamma(\x)$
		% record touched locs
		\STATE $\locL = \locL.\mathit{append}((\loc_\x, 0))$
		% based on type look up in sigma, add 2-tuple V = V::[(v_t, v_e)] 
		% Maintain this constant ordering to ensure correct storage later on -> X[i] == V[i]
		\IF{ $(\Type = \Priv\ \btype)$ }	
			% Update x	
			\STATE $\sigma_2 = \Update(\sigma_1,\ \loc_\x,\ \val_\x,\ \Type)$
			\STATE $\sigma_1 = \sigma_2$
		\ELSIF{ $(\Type = \Priv\ \Const\ \btype*)$ }	
			% Decode x/t
			\STATE $[1, [(\loc_{xa}, 0)], [1], 1] = \DecodePtr(\Type, 1, \byte_{x})$
			\FORALL{$\offset \in \{0 ... \nl-1\}$}
				\STATE $\val_\offset = \val_\x[\offset]$
				% Update offset
				\STATE $\sigma_{2+\offset} = \UpdateArr(\sigma_{1+\offset},\ (\loc_{xa},\offset),\ \val_\offset,\ \Type)$
				% record touched locs
				\STATE $\locL = \locL.\mathit{append}((\loc_{xa}, \offset))$
			\ENDFOR
			\STATE $\sigma_1 = \sigma_{2+\offset}$
		\ELSIF{$(\Type = \Priv\ \btype*)$}	
			\STATE $\sigma_2 = \UpdatePtr(\sigma_1,\ (\loc_\x, 0),\ \val_\x,\ \Type)$ 
			\STATE $\sigma_1 = \sigma_2$
		\ENDIF	
	\ENDFOR
	\RETURN $(\sigma_1, \locL)$	
\end{algorithmic}
\end{algorithm*}

\subsection{Array Overshooting}
\label{app: array oob}

\begin{minipage}{\textwidth}
\centering{\includegraphics[width=0.75\textwidth]{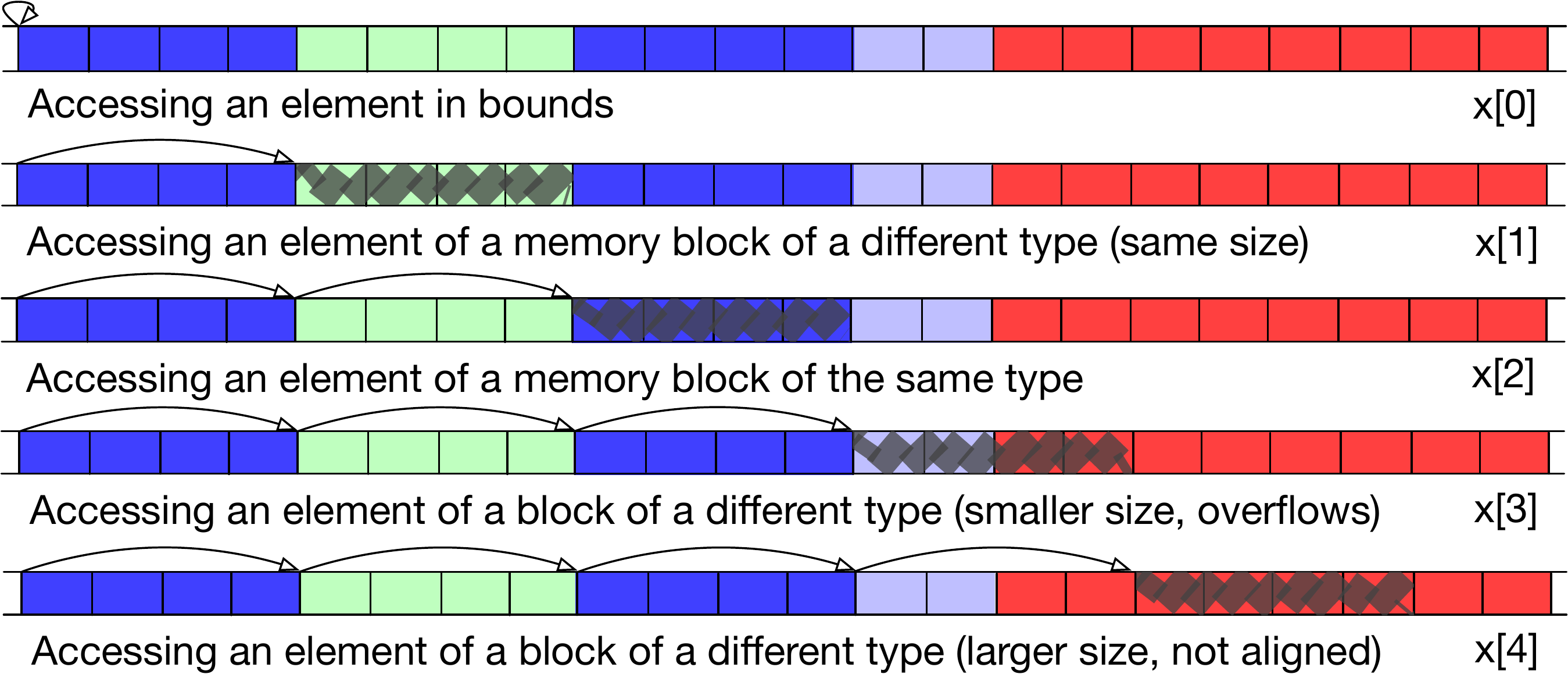}}
\captionof{figure}{Types of overshooting array accesses.}
\label{Fig: arr oob read vanC}
\end{minipage}

It is possible to overshoot memory bounds in both \vanillaC\ and \piccoC. 
Figure~\ref{Fig: arr oob read vanC} shows an example of an array read that overshoots the bounds of the array \TT{x} (for simplicity, \TT{x} is of size 1).  
The first access shown  is an in-bounds access -- this is the default behavior of a  correct
program.  The second access is an out-of-bounds access of a memory block of a different type, but the same size. This data would be read as if it was the type of the array, and may not be meaningful.  This corresponds to an access where implicit conversions between types is possible, but not always correct.
The third out-of-bounds access corresponds to reading out of a memory block of the same type. This data would be meaningful from a type perspective, but the specific value read may not be semantically meaningful to the program. 
 The fourth out-of-bounds access is of a memory block of a different type of a smaller size. This read would grab the data from the smaller memory block, then grab data from the next memory block(s) to obtain the correct amount for the expected type.  In this situation a value, which may not be meaningful, is constructed from two, or more, values in memory.
 The last out-of-bounds access is of a larger memory block, not aligned. This read would obtain a portion of the data of the larger memory block, and read it as the type of the array, thereby reading a partial value from memory.

With \piccoC, when dealing with array overshooting, we have the added complexity of private data, which has a different representation and is of a larger size than the corresponding C representation of the type. 
Additionally, we need to ensure that no leakage can occur, so we must consider all possible combinations of bytes from public and private data with either public or private variables. 
Consider reading a value from an overshot array and storing it into a variable. If both the data read and variable are private or both are public, no leakage can occur as these are the default cases.
Next, consider reading public data and storing into a private variable. The public data will be grabbed at the byte-level, and interpreted as though it is private (no encryption will occur), so no leakage occurs.
Third, we consider reading private data and storing in a public variable. The private data will be grabbed at the byte-level, and interpreted as though it is public. No decryption will occur, so no leakage can occur. This is similar in nature to reading a partial value in Figure~\ref{Fig: arr oob read vanC}. 
Fourth, consider if the data read is a mix of public and private data and stored in a public variable. Given that the private data will not be decrypted, this read will not result in any leakage, but a value is constructed from a mix
of private (encrypted) data and public data. 
Lastly, consider reading a mix of public and private data and storing into a private variable. Like before, the byte-level data will be merged and read as the expected type.

Writes that occur out-of-bounds of an array have situations similar to out-of-bound read accesses (and can be illustrated as with the reads shown in Figure~\ref{Fig: arr oob read vanC}).
Writing private data out-of-bounds to a private location results in the data still residing in a private memory block, so no leakage will occur. 
Writing public data out-of-bounds to a public location is safe, as the data is already public. 
When writing private data out-of-bounds to a public location, the data will be written as-is -- no decryption will occur when the data is written to or later read back from that location -- therefore, there is no leakage.
Writing public data out-of-bounds to a private location or a mix of public and private locations is safe, as the data was already public; no encryption will occur. 
Lastly, writing private data out-of-bounds to a mix of locations will result in the data being written to the locations as-is. No decryption will occur when the data is written to any location or later read back, therefore, there is no leakage.

In \piccoC, we ensure this behavior, 
using algorithms $\ReadOOB$ and $\WriteOOB$. 
In particular, $\ReadOOB$ ensures that no matter what mix of byte-wise data we grab from memory, we will decode it as a value of the type of data in the array, ignoring it's true type. 
Similarly, $\WriteOOB$ ensures that we will write to memory the byte-wise encoding of the given value as the type for the array, without taking into consideration the type of the memory block(s) and without modifying any of the metadata within the memory block(s) we write to. 
In proving the correctness of \piccoC\ with relation to \vanillaC, the various possible alignments for reading and writing out-of-bounds poses complications due to the different sizes of private and public data (an example of this is shown in the Appendix, Figure~\ref{fig: overshooting alignment}). Therefore, we can only prove correctness over well-aligned accesses (i.e., those that iterate only over aligned elements of the same type). 
When proving noninterference, we must prove that these cases (particularly those involving private data) cannot leak any information about the private data that is affected. We discuss this in more detail in
the following section.

%%%%%%%%%%%%%%%%%
%								 %
% 		Metatheory				 %
%								 %
%%%%%%%%%%%%%%%%%
\section{Metatheory}
\label{app: metatheory}
In this section, we will provide our main definitions and metatheory, including proof sketches of the most crucial Theorems and Lemmas, to give the reader a more complete understanding and some intuition behind how our proofs work. 

\subsection{Correctness}
In our semantics, we give each evaluation an identifying code as a shorthand way to refer to that specific evaluation, as well as to allow us to quickly reason about the \vanillaC\ and \piccoC\ evaluations that are congruent to each other (i.e., a \vanillaC\ rule and an identical one handling only public data in \piccoC).

The list of \vanillaC\ codes are as follows: 
$\vanillaCodes$ = 
	[$\mathit{mpb}$, $\mathit{mpcmpt}$, $\mathit{mpcmpf}$, $\mathit{mppin}$, 
	$\mathit{mpra}$, $\mathit{mpwe}$, $\mathit{mpfre}$, $\mathit{mpiet}$, $\mathit{mpief}$,
	$\mathit{mprdp}$, $\mathit{mprdp1}$, $\mathit{mpwdp}$, $\mathit{mpwdp1}$, 
	$\mathit{fls}$, $\mathit{ss}$, $\mathit{sb}$, $\mathit{ep}$, $\mathit{cv}$, $\mathit{cl}$, 
	$\mathit{r}$, $\mathit{w}$, $\mathit{ds}$, $\mathit{dv}$, $\mathit{dp}$, $\mathit{da}$, 
	$\mathit{wle}$, $\mathit{wlc}$, $\mathit{bp}$, $\mathit{bs}$, $\mathit{bm}$, $\mathit{bd}$, 
	$\mathit{ltf}$, $\mathit{ltt}$, $\mathit{eqf}$, $\mathit{eqt}$, $\mathit{nef}$, $\mathit{net}$, 
	$\mathit{mal}$, $\mathit{fre}$, $\mathit{wp}$, $\mathit{wdp}$, $\mathit{wdp1}$, $\mathit{rp}$, 
	$\mathit{rdp}$, $\mathit{rdp1}$, $\mathit{ra}$, $\mathit{wa}$, $\mathit{rao}$, $\mathit{wao}$, 
	$\mathit{rae}$, $\mathit{wae}$, $\mathit{loc}$, $\mathit{iet}$, $\mathit{ief}$,  
	$\mathit{inp}$, $\mathit{inp1}$, $\mathit{out}$, $\mathit{out1}$, $\mathit{df}$, $\mathit{ty}$, 
	$\mathit{fd}$, $\mathit{fpd}$, $\mathit{fc}$, $\mathit{pin}$, $\mathit{pin1}$, $\mathit{pin2}$].

The list of \piccoC\ codes are as follows: 
$\piccoCodes$ = 
	[$\mathit{mpb}$, $\mathit{mpcmp}$, 
	$\mathit{mpra}$, $\mathit{mpwa}$, $\mathit{mppin}$, $\mathit{mpdp}$, $\mathit{mpdph}$, 
	$\mathit{mpfre}$, 
	$\mathit{mprdp}$, $\mathit{mprdp1}$, 
	$\mathit{mpwdp}$, $\mathit{mpwdp1}$, $\mathit{mpwdp2}$, $\mathit{mpwdp3}$, 
	$\mathit{iet}$, $\mathit{ief}$, $\mathit{iep}$, $\mathit{iepd}$, $\mathit{wle}$, $\mathit{wlc}$, 
	$\mathit{dp}$, $\mathit{dp1}$, 
	$\mathit{rp}$, $\mathit{rp1}$, $\mathit{rdp}$, $\mathit{rdp1}$, $\mathit{rdp2}$, 
	$\mathit{wp}$, $\mathit{wp1}$, $\mathit{wp2}$, 
	$\mathit{wdp}$, $\mathit{wdp1}$, $\mathit{wdp2}$, $\mathit{wdp3}$, $\mathit{wdp4}$, $\mathit{wdp5}$, 
	$\mathit{da}$, $\mathit{da1}$, $\mathit{das}$, 
	$\mathit{ra}$, $\mathit{ra1}$, $\mathit{rea}$, 
	$\mathit{wa}$, $\mathit{wa1}$, $\mathit{wa2}$, 
	$\mathit{wea}$, $\mathit{wea1}$, $\mathit{wea2}$, 
	$\mathit{rao}$, $\mathit{rao1}$, 
	$\mathit{wao}$, $\mathit{wao1}$, $\mathit{wao2}$, 
	$\mathit{pin}$, $\mathit{pin1}$, $\mathit{pin2}$, $\mathit{pin3}$, $\mathit{pin4}$, 
	$\mathit{pin5}$, $\mathit{pin6}$, $\mathit{pin7}$, 
	$\mathit{mal}$, $\mathit{malp}$, $\mathit{fre}$, $\mathit{pfre}$, 
	$\mathit{cv}$, $\mathit{cv1}$, $\mathit{cl}$, $\mathit{cl1}$, 
	$\mathit{loc}$, $\mathit{ty}$, 
	$\mathit{df}$, $\mathit{fd}$, $\mathit{fpd}$, $\mathit{fc}$, $\mathit{fc1}$, 
	$\mathit{bp}$, $\mathit{bs}$, $\mathit{bm}$, $\mathit{bd}$, 
	$\mathit{ltf}$, $\mathit{ltt}$, $\mathit{eqf}$, $\mathit{eqt}$, $\mathit{nef}$, $\mathit{net}$, 
	$\mathit{dv}$, $\mathit{d1}$, $\mathit{r}$, $\mathit{r1}$, $\mathit{w}$, $\mathit{w1}$, $\mathit{w2}$, 
	$\mathit{ds}$, $\mathit{ss}$, $\mathit{sb}$, $\mathit{ep}$, 
	$\mathit{inp}$, $\mathit{inp1}$, $\mathit{inp2}$, $\mathit{inp3}$, 
	$\mathit{out}$, $\mathit{out1}$, $\mathit{out2}$, $\mathit{out}$].

\begin{figure*}[h]
\begin{subfigure}{0.55\textwidth}
\includegraphics[width=\textwidth]{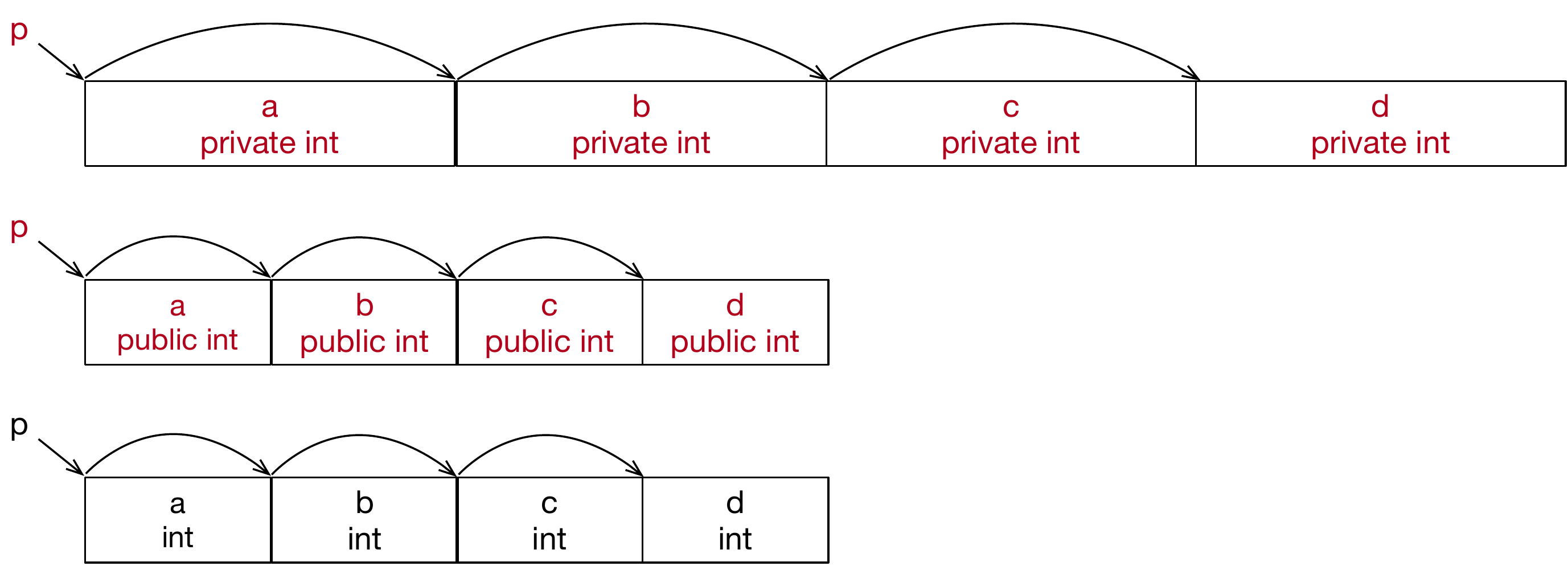}
\caption{Well-aligned accesses}
\label{fig: well-aligned}
\end{subfigure}
\quad
\begin{subfigure}{0.35\textwidth}
\includegraphics[width=\textwidth]{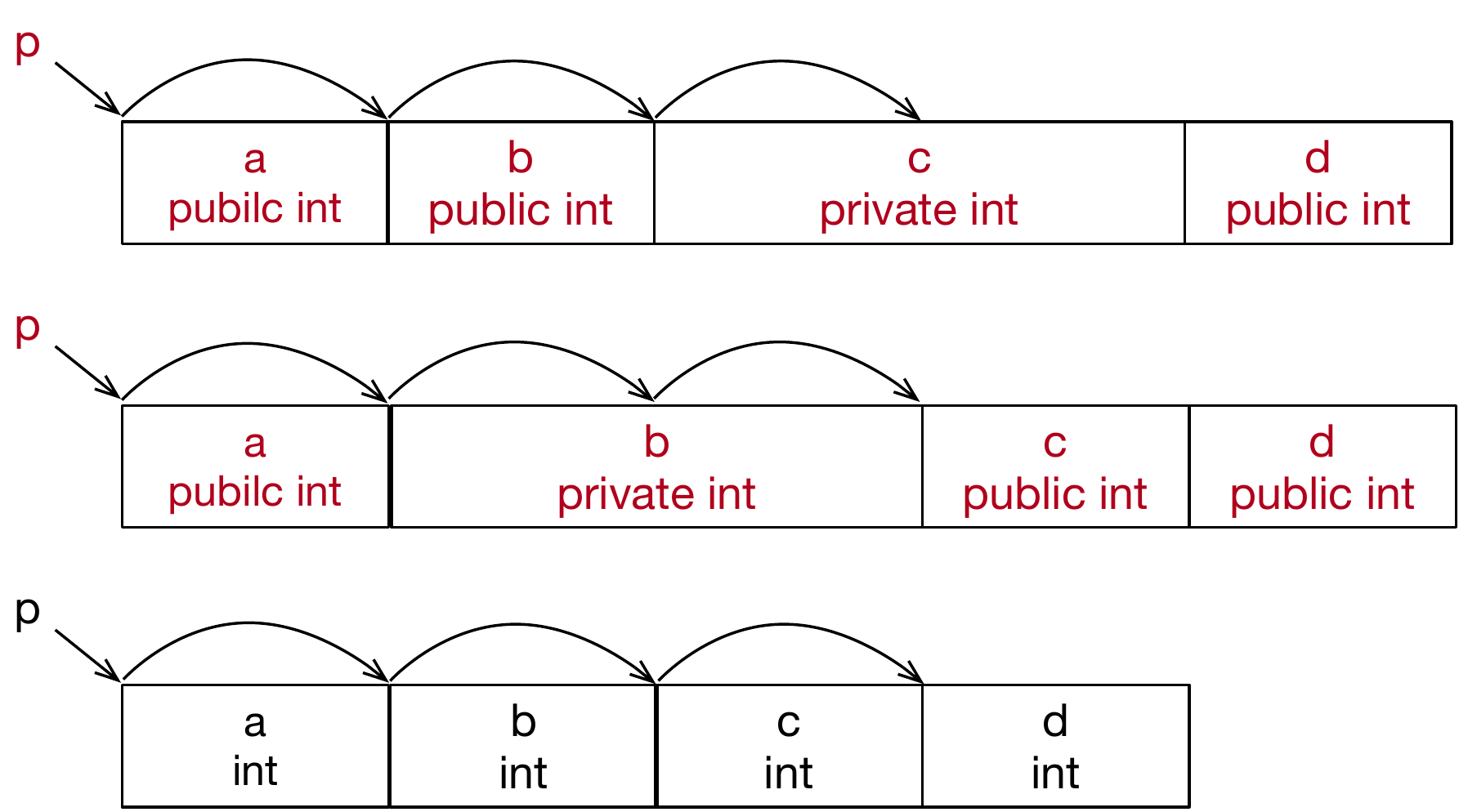}
\caption{Not well-aligned accesses}
\label{fig: not well-aligned}
\end{subfigure}
\caption{Examples of alignment between \piccoC\ and \vanillaC\ in overshooting accesses by incrementing pointer \TT{p} three times.}
\label{fig: overshooting alignment}
\end{figure*}

\subsubsection{Erasure Function}

Here, we show the full erasure function in Figure~\ref{Fig: app erasure}. 
This function is intended to take a \piccoC\ program or configuration and remove all private privacy labels, decrypt any private data, and clear any additional tracking features that are specific to \piccoC; this process will result in a \vanillaC\ program or configuration.

\begin{figure*}
\footnotesize
\begin{minipage}{0.5\textwidth}
% STMT
\begin{subfigure}{\textwidth}
$\begin{array}{l}
\bm{\erasure(\rstmt)} =  \\ 
	\mid\ \RT{\x[\Expr]}\ =>\ \bm{\x[\erasure(\rExpr)]}  \\ 
	\mid\ \RT{\x(\Elist)}\ =>\ \bm{\x(\erasure(\RT{\Elist}))} \\ 
	\mid\ \RT{\Expr_1\ \binop\ \Expr_2}\ =>\ \bm{\erasure(\RT{\Expr_1})\ \binop\ \erasure(\RT{\Expr_2})}  \\ 
	\mid\ \RT{\unop\ \x}\ =>\ \bm{\unop\ \x}  \\ 
	\mid\ \RT{( \Expr )}\ =>\ \bm{(\erasure(\rExpr))}  \\ 
	\mid\ \RT{(\Type)\ \Expr}\ =>\ \bm{\erasure(\rType))\ \erasure(\rExpr)} \\ 
	\mid\ \RT{\var = \Expr}\ =>\ \bm{\erasure(\RT\var) = \erasure(\rExpr)}  \\ 
	\mid\ \RT{*\x = \Expr}\ =>\ \bm{*\x = \erasure(\rExpr)}  \\ 
	\mid\ \RT{\stmt_1;\ \stmt_2}\ =>\ \bm{\erasure(\RT{\stmt_1});\ \erasure(\RT{\stmt_2})}  \\ 
	\mid\ \RT{\{ \stmt \}}\ =>\ \bm{\{ \erasure(\rstmt) \}}  \\ 
	\mid\ \RT{\free(\Expr)}\ =>\ \bm{\free(\erasure(\rExpr))}  \\ 
	\mid\ \RT{\pfree(\Expr)}\ =>\ \bm{\free(\erasure(\rExpr))}  \\ 
	\mid\ \RT{\sizeof(\Type)}\ =>\ \bm{\sizeof(\erasure(\rType))}  \\ 
	\mid\ \RT{\Malloc(\Expr)}\ =>\ \bm{\Malloc(\erasure(\rExpr))}  \\ 
	\mid\ \RT{\PMalloc(\Expr,\ \Type)}\ =>\ \\ \ \quad \bm{\Malloc(\sizeof(\erasure(\rType)) \cdot \erasure(\rExpr))}  \\ 
	\mid\ \RT{\smcinput(\Elist)}\ =>\ \\ \ \quad \bm{\inputFun(\erasure(\RT\Elist))}  \\ 
	\mid\ \RT{\smcoutput(\Elist)}\ =>\ \\ \ \quad \bm{\outputFun(\erasure(\RT\Elist))} \\ 
	\mid\ \RT{[\val_0, ..., \val_n]} =>\ \\ \ \quad  \bm{[\erasure(\RT{\val_0}),\ \erasure(\RT{...}),\ \erasure(\RT{\val_n})]} \\ 
	\mid\ \RT{\Type\ \var}\ =>\ \\ \ \quad  \bm{\erasure(\rType)\ \erasure(\RT{\var})}  \\ 
	\mid\ \RT{\Type\ \var = \Expr}\ =>\ \\ \ \quad  \bm{\erasure(\rType)\ \erasure(\RT{\var}) = \erasure(\rExpr)}  \\ 
	\mid\ \RT{\Type\ \x(\plist)}\ =>\ \\ \ \quad  \bm{\erasure(\rType)\ \x(\erasure(\RT{\plist}))}  \\ 
	\mid\ \RT{\Type\ \x (\plist)\ \{ \stmt\}}\ =>\ \\ \ \quad  \bm{\erasure(\RT{\Type\ \x (\plist)})\ \{ \erasure(\rstmt) \} }  \\ 
	\mid\ \RT{\If (\Expr)\ \stmt_1\ \Else\ \stmt_2}\ =>\ \\ \ \quad \bm{\If (\erasure(\rExpr))\ \erasure(\RT{\stmt_1})\ \Else\ \erasure(\RT{\stmt_2})}  \\ 
	\mid\ \RT{\While\ (\Expr)\ \stmt}\ \\ \ \ \ =>\ \bm{\While\ (\erasure(\rExpr))\ \erasure(\rstmt)}  \\ 
	\mid\ \RT\_\ =>\ \bm{\stmt}
\end{array}$
\caption{Erasure function over statements} 	\label{Fig: erasure stmt}
\end{subfigure}
\end{minipage}
\begin{minipage}{0.4\textwidth}
% CONFIG
\begin{subfigure}{\textwidth}
$\begin{array}{l}
\bm{\erasure(\RT\Config)} = 
	\\ \mid \RT{\Config_1} \Mid \RT{\Config_2}\ =>\ \bm{\erasure(\RT{\Config_1})} \Mid \bm{\erasure(\RT{\Config_2})}
	\\ \mid (\RT\pid, \rrgamma,\ \rrsigma,\ \RT{\DMap}, \rAcc,\ \rstmt) \ =>\
		\\ \ \ \ \bm{(\pid, \erasure(\rrgamma, \rrsigma, [\ ], [\ ]), \bsq, \bsq, \erasure(\rstmt))}
\end{array}$
\caption{Erasure function over configurations} 	\label{Fig: erasure config}
\end{subfigure}
\\ \\ \\
% TYPES  /  LISTS
\begin{subfigure}{\textwidth}
$\begin{array}{l}
\bm{\erasure(\rType)} =  \\  	
	\mid\ \rlabel\ \rbtype\ =>\ \bm{\btype}  \\ 
	\mid\ \RT{\rlabel\ \rbtype\ *}\ =>\ \bm{\btype*}  \\ 
	\mid\ \RT{\Tlist \to \rType}\ =>\ \\ \ \quad \bm{\erasure(\RT{\Tlist}) \to \erasure(\rType))}  \\ 
	\mid\ \RT\_\ => \bm{\Type} \\ \\
\bm{\erasure(\rTlist)} =  \\ 
	\mid\ \RT{[\ ]}\ =>\ \bm{[\ ]}  \\ 
	\mid\ \RT{\Type::\Tlist}\ =>\ \bm{\erasure(\RT{\Type})::\erasure(\RT{\Tlist})} 
\end{array}$
\caption{Erasure function over types and type lists} 	\label{Fig: erasure ty}
\end{subfigure}
\\ \\ \\
% LISTS
\begin{subfigure}{\textwidth} 
$\begin{array}{l}
\bm{\erasure(\rElist)} = \\ 
	\mid\ \RT{\Elist,\ \Expr}\ =>\ \bm{\erasure(\RT{\Elist}),\ \erasure(\rExpr)}  \\ 
	\mid\ \rExpr\ =>\ \bm{\erasure(\rExpr)}  \\ 
	\mid\ \rVoid\ =>\ \bm{\Void} \\  \\
\bm{\erasure(\rplist)} =  \\ 
	\mid\ \RT{\plist,\ \Type\ \var}\ =>\ \\ \ \quad \bm{\erasure(\RT{\plist}),\ \erasure(\RT{\Type\ \var})}  \\ 
	\mid\ \RT{\Type\ \var}\ =>\ \bm{\erasure(\rType)\ \erasure(\RT\var)}  \\ 
	\mid\ \rVoid\ =>\ \bm{\Void}
\end{array}$
\caption{Erasure function over lists} 	\label{Fig: erasure list}
\end{subfigure}
\end{minipage}
\\ \\ \\
% BYTES
\centering
\begin{subfigure}{0.8\textwidth}
$\begin{array}{l}
\bm{\erasure(\rbyte,\ \rType,\ \rn)} = \\ 
% public variable / array => no modifications necessary
	\mid\ (\rbyte,\ \rPub\ \rbtype,\ \rn) =>\ \bm{\byte} 	\\ 
% private variable => need to decrypt stored values
	\mid\ (\rbyte,\ \rPriv\ \rbtype,\ \RT1) =>\  \\ \-\ \quad
		\RT{\val_1 = \Decode(\Type,\ 1,\ \byte)};\ 
		\val_2 = \RT{\Decrypt(\val_1)};\ 
		\byte_1 = \Encode(\btype,\ \val_2);\ 
		\bm{\byte_1}	\\ 
% private array	=> need to decrypt stored values
	\mid\ (\rbyte,\ \rPriv\ \rbtype,\ \rn) =>\  
		\RT{\val_1 = \Decode(\Type,\ n,\ \byte)};\ \\ \ \quad
		[\val_1', ..., \val_n'] = \RT{[\Decrypt(\val_1), \Decrypt(...), \Decrypt(\val_n)]};\ 
		\byte_1 = \Encode(\btype,\ [\val_1',\ ...,\ \val_n']);\ 
		\bm{\byte_1} \\ 	
% public pointer, single loc => need to remove labels
	\mid\ (\rbyte,\ \RT{\Pub\ \btype\ *},\ \RT1) =>\ 
		[\RT1,\ [(\rloc, \RT\offset)],\ [\RT1],\ \RT\indir] = \RT{\DecodePtr(\Pub\ \btype\ *,\ 1,\ \byte)};\ \\ \-\ \quad
		\byte_1 = \EncodePtr(\btype\ *,\ [1,\ [(\loc, \offset)],\ [1],\ \erasure(\RT{\Type'}),\ \indir]);\ 
		\bm{\byte_1} \\ 
% private pointer, single loc => need to remove labels, reduce size of offset
	\mid\ (\rbyte,\ \RT{\Priv\ \btype\ *},\ \RT1) =>\ 
		[\RT1,\ [(\rloc, \RT\offset)],\ [\RT1],\ \RT\indir] = \RT{\DecodePtr(\Priv\ \btype\ *,\ 1,\ \byte)};\ \\ \-\ \quad
		\If (\RT\indir = 1) \Then \{ \RT{\Type_1} = \rPub\ \rbtype; \RT{\Type_2} = \rPriv\ \rbtype \} \Else \{ \RT{\Type_1} = \rPub\ \rbtype\RT*; \RT{\Type_2} = \rPriv\ \rbtype\RT* \}; \ \\ \-\ \quad 
		\offset_1 = \frac{\RT\offset \cdot \tau(\RT{\Type_1})}{\tau(\RT{\Type_2})};\  
		\byte_1 = \EncodePtr(\btype\ *,\ [1,\ [(\loc, \offset_1)],\ [1],\ \erasure(\RT{\Type'}),\ \indir]);\ 
		\bm{\byte_1} \\ 
% private pointer, multiple locs => need to remove labels, find true loc if multiple, reduce size of offset
	\mid\ (\rbyte,\ \RT{\Priv\ \btype\ *},\ \rn) =>\ 
		[\RT\nl,\ \RT\locL,\ \RT\tagbL,\ \RT\indir] = \RT{\DecodePtr(\Priv\ \btype\ *,\ n,\ \byte)};\  \\ \-\ \quad
		(\loc, \RT\offset) = \rrDeclassifyPtr([\RT\nl,\ \RT\locL,\ \RT\tagbL,\ \RT\indir],\ \RT{\rPriv\ \btype*});\ \\ \-\ \quad
		\If (\RT\indir = 1) \Then \{ \RT{\Type_1} = \rPub\ \rbtype; \RT{\Type_2} = \rPriv\ \rbtype \} \Else \{ \RT{\Type_1} = \rPub\ \rbtype\RT*; \RT{\Type_2} = \rPriv\ \rbtype\RT* \}; \ \\ \-\ \quad
		\offset_1 = \frac{\RT\offset \cdot \tau(\RT{\Type_1})}{\tau(\RT{\Type_2})};\  
		\byte_1 = \EncodePtr(\btype\ *,\ [1,\ [(\loc, \offset_1)],\ [1],\ \ \indir]);\ 
		\bm{\byte_1} \\ 
% function  => need to call erasure on stored param types and statement
	\mid\ (\rbyte,\ \RT{\Tlist \to \Type},\ \RT1) =>\  \\ \-\ \quad
		\RT{(\rstmt,\ \rn,\ \rplist) = \DecodeFun(\byte)};\ % \\ \-\ \qq
		\byte_1 = \EncodeFun(\erasure(\rstmt),\ \bsq,\ \erasure(\rplist));\ 
		\bm{\byte_1}
\end{array}$
\caption{Erasure function over bytes} 	\label{Fig: erasure bytes}
\end{subfigure}
\caption{The Erasure function, broken down into various functionalities. } 	\label{Fig: app erasure}
\end{figure*}

Figure~\ref{Fig: erasure config} shows erasure over an entire configuration, calling $\erasure$ on the four-tuple of the environment, memory, and two empty maps needed as the base for the \vanillaC\ environment and memory; removing the accumulator (i.e., replacing it with $\Box$); and calling $\erasure$ on the statement. 
Figure~\ref{Fig: erasure ty} shows erasure over types and type lists (i.e., for function types).
Here, we remove any privacy labels given to the types, with unlabeled types being returned as is. 
For function types, we must iterate over the entire list of types as well as the return type. 
Figure~\ref{Fig: erasure list} shows erasure over expression lists (i.e., from function calls) and parameter lists (i.e., from function definitions). 

Figure~\ref{Fig: erasure stmt} shows erasure over statements. 
For statements, we case over the various possible statements. 
When we reach a private value (i.e., $\Encrypt(n)$), we decrypt and then return the decrypted value. 
For function $\PMalloc$, we replace the function name with $\Malloc$, modifying the argument to appropriately evaluate the expected size of the type. 
For functions $\pfree$, $\smcinput$, and $\smcoutput$, we simply replace the function name with its \vanillaC\ equivalent. 
All other cases recursively call the erasure function as needed, with the last case $(\_)$ handling all cases that are already identical to the \vanillaC equivalent (i.e., $\Null$, locations). 

Figure~\ref{Fig: erasure bytes} shows erasure over bytes stored in memory, which is used from within the erasure on the environment and memory. 
This function takes the byte-wise data representation, the type that it should be interpreted as, and the size expected for the data. 
For regular public types, we do not need to modify the byte-wise data. 
For regular private types (i.e., single values and array data), we get back the value(s) from the representation, decrypt, and obtain the byte-wise data for the decrypted value(s). 
For pointers with a single location, we must get back the pointer data structure, then simply remove the privacy label from the type stored there. 
For private pointers with multiple locations, we must declassify the pointer, retrieving it's true location and returning the byte-wise data for the pointer data structure with only that location. 
For functions, we get back the function data, then call $\erasure$ on the function body, remove the tag for whether the function has public side effects (i.e., replace with $\Box$), and call $\erasure$ on the function parameter list.

% ENV  /  MEM
\begin{figure*}\footnotesize
$\begin{array}{l}
\bm{\erasure(\rrgamma,\ \rrsigma, \hgamma, \hsigma)} = \\ 
	\mathrm{match}\ (\rrgamma, \rrsigma)\ \mathrm{with} \\ 
% empty, empty
	\mid\ (\RT{[\ ]}, \RT{[\ ]})\ =>\ \bm{(\hgamma,\ \hsigma)} \\ 
% empty, allocated pub
	\mid\ (\RT{[\ ]}, \RT{\sigma_1[\loc \to (\Null,\ \Void*,\ n,\ \PermL(\PermF,\ \Void*,\ \Pub,\ n))])}\ =>\ \\ \-\ \quad
		 \bm{(\erasure(\RT{[\ ]},\ \rsigma{_1},\ \hgamma,\ \hsigma[\loc \to (\Null,\ \Void*,\ \hn,} \
			\bm{\PermL(\perm,\ \Void*,\ \Pub,\ \hn))]))} 
	\\ 
% empty, allocated priv
	\mid\ (\RT{[\ ]}, \RT{\sigma_1[\loc \to (\Null,\ \RT{\Void*},\ n,\ \PermL(\PermF,\ \RT{\Type},\ \Priv,\ n))])}\ =>\ \\ \-\ \quad
		\hn = \Big(\frac{\rn}{\tau(\rType)} \Big) \cdot \tau(\erasure(\rType)) \\ \-\ \quad
		\bm{(\erasure(\RT{[\ ]},\ \rsigma{_1},\ \hgamma,\ \hsigma[\loc \to (\Null,\ \Void*,\ \hn,} \
			\bm{\PermL(\perm,\ \Void*,\ \Pub,\ \hn))]))} 
	\\ 
% empty, var/array
	\mid\ (\RT{[\ ]}, \RT{\sigma_1[\loc \to (\byte,\ \Type,\ n,\ \PermL(\perm,\ \Type,\ \llabel,\ n))])}\ =>\ \\ \-\ \quad
		\bm{(\erasure(\RT{[\ ]}, \rsigma{_1}, \hgamma, \hsigma[\loc \to (\erasure(\rbyte, \rType, \rn), \erasure(\rType), n,}
			\bm{\PermL(\perm, \erasure(\rType), \Pub, n))]))} 
	\\ 
% empty, ptr
	\mid\ (\RT{[\ ]}, \RT{\sigma_1[\loc \to (\byte,\ \Type,\ n,\ \PtrPermL(\perm,\ \Type,\ \llabel,\ n))])}\ =>\ \\ \ \ \ 
		\bm{(\erasure(\RT{[\ ]}, \rsigma{_1}, \hgamma, \hsigma[\loc \to (\erasure(\rbyte, \rType, \rn), \erasure(\rType), n,}
			\bm{\PtrPermL(\perm, \erasure(\rType), \Pub, n))]))}
	\\ 
% empty, func
	\mid\ (\RT{[\ ]},\RT{\sigma_1[\loc \to (\byte,\ \Type,\ 1,\ \FunPermL(\Pub))])}\ =>\ \\ \-\ \quad
		\bm{(\erasure(\RT{[\ ]},\ \rsigma{_1},\ \hgamma,\ \hsigma[\loc \to (\erasure(\rbyte, \rType, \RT1),\ \erasure(\rType),\ 1,}\
			\bm{\FunPermL(\Pub))]))}
	\\ 
% var, var
	\mid\ (\RT{\gamma_1[\x \to (\loc,\ \llabel\ \btype)]},\ \rsigma{_1}\RT{[\loc \to (\byte,\ \llabel\ \btype,\ 1,\ \PermL(\perm,\ \llabel\ \btype,\ \llabel,\ 1))]})\ =>\ \\ \-\ \quad
	 	\bm{(\erasure(\rgamma{_1}, \rsigma{_1}, \hgamma[\x \to (\loc, \btype)], \hsigma[\loc \to}
			\bm{(\erasure(\rbyte, \rlabel\ \rbtype, \RT1), \btype, 1,} % \\ \-\ \qq\qquad
			\bm{\PermL(\perm, \btype, \Pub, 1))]))}
	\\ 
% RES var, var
	\mid\ (\RT{\gamma_1[\res\_n \to (\loc,\ \Priv\ \btype)]},\ \rsigma{_1}\RT{[\loc \to (\byte,\ \Priv\ \btype,\ 1,}\ \RT{\PermL(\perm,\ \Priv\ \btype,\ \Priv,\ 1))]})\ =>\ 
		\\ \-\ \quad
		\bm{(\erasure(\rgamma{_1},\ \rsigma{_1},\ \hgamma,\ \hsigma))}
	\\ 
% THEN var, var
	\mid\ (\RT{\gamma_1[\x\_then\_n \to (\loc, \llabel\ \btype)]}, \rsigma{_1}\RT{[\loc \to (\byte, \llabel\ \btype, 1,} \RT{\PermL(\perm, \llabel\ \btype, \llabel, 1))]})\ =>\ \\ \-\ \quad
		%\\ \-\ \qquad
		\bm{(\erasure(\rgamma{_1}, \rsigma{_1}, \hgamma, \hsigma))}
	\\ 
% ELSE var, var
	\mid\ (\RT{\gamma_1[\x\_else\_n \to (\loc, \llabel\ \btype)]},\ \rsigma{_1}\RT{[\loc \to (\byte, \llabel\ \btype, 1,} \RT{\PermL(\perm, \llabel\ \btype, \llabel, 1))]})\ =>\ \\ \-\ \quad
	%\\ \-\ \qquad
		\bm{(\erasure(\rgamma{_1}, \rsigma{_1}, \hgamma, \hsigma))}
	\\ 
% array, array
	\mid\ (\RT{\gamma_1[\x \to (\loc,\ \llabel\ \Const\ \btype*)]},\ \rsigma{_1}\RT{[\loc \to (\byte,\ \llabel\ \Const\ \btype*,\ 1,}\ \RT{\PermL(\perm,\ \llabel\ \Const\ \btype*,\ \llabel,\ 1))]})\ =>\ 
	\\ \-\ \quad
		\RT{\DecodePtr(\llabel\ \Const\ \btype*, 1, \byte)} = \RT{[1,\ [(\loc_1, 0)],\ [1],\ 1]}; \\ \-\ \quad
			\rsigma{_1} = \rsigma{_2}\RT{[\loc_1 \to (\byte_1,\ \llabel\ \btype,\ n,\ \PermL(\perm,\ \llabel\ \btype,\ \llabel,\ n))]}; \\ \-\ \quad
		\bm{(\erasure(\rgamma{_1},\ \rsigma{_2},\ \hgamma[\x \to (\loc,\ \erasure(\llabel\ \Const\ \btype*))],}\ \\ \-\ \quad
			\bm{\hsigma[\loc \to (\erasure(\rbyte, \RT{\llabel\ \Const\ \btype*}, \RT1), \Const\ \btype*), 1,} 
				\bm{\PtrPermL(\perm,\ \Const\ \btype*,\ \Pub,\ 1))]} \
			\\ \-\ \quad
			\bm{[\loc_1 \to (\erasure(\RT{\byte_1}, \rlabel\ \rbtype, \rn), \btype, n,} %\\ \-\ \qq
				\bm{\PermL(\perm,\ \btype,\ \Pub,\ n))]))}
	\\ 
% THEN array, array
	\mid\ (\RT{\gamma_1[\x\_then\_n \to (\loc,\ \llabel\ \Const\ \btype*)]}, \rsigma{_1}\RT{[\loc \to (\byte, \llabel\ \Const\ \btype*, 1,} 
		\RT{\PtrPermL(\perm, \llabel\ \Const\ \btype*, \llabel, 1))]}) =>\ \\ \-\ \quad
		\RT{\DecodePtr(\llabel\ \Const\ \btype*, 1, \byte)} = \RT{[1,\ [(\loc_1, 0)],\ [1],\ 1]}; \\ \-\ \quad
			\rsigma{_1} = \rsigma{_2}\RT{[\loc_1 \to (\byte_1,\ \llabel\ \btype,\ n,\ \PermL(\perm,\ \llabel\ \btype,\ \llabel,\ n))]}; %\\ \-\ \qq
		\bm{(\erasure(\rgamma{_1},\ \rsigma{_2},\ \hgamma,\ \hsigma))}
	\\ 
% ELSE array, array
	\mid\ (\RT{\gamma_1[\x\_else\_n \to (\loc,\ \llabel\ \Const\ \btype*)]},\ \rsigma{_1}\RT{[\loc \to (\byte,\ \llabel\ \Const\ \btype*,\ 1,}\ 
		\RT{\PermL(\perm,\ \llabel\ \Const\ \btype*,\ \llabel,\ 1))]}) =>\ \\ \-\ \quad
		\RT{\DecodePtr(\llabel\ \Const\ \btype*, 1, \byte)} = \RT{[1,\ [(\loc_1, 0)],\ [1],\ 1]}; \\ \-\ \quad
			\rsigma{_1} = \rsigma{_2}\RT{[\loc_1 \to (\byte_1,\ \llabel\ \btype,\ n,\ \PermL(\perm,\ \llabel\ \btype,\ \llabel,\ n))]}; %\\ \-\ \qq
		\bm{(\erasure(\rgamma{_1},\ \rsigma{_2},\ \hgamma,\ \hsigma))}
	\\ 
% pointer, pointer
	\mid\ (\RT{\gamma_1[\x \to (\loc,\ \llabel\ \btype*)]},\ \rsigma{_1}\RT{[\loc \to (\byte,\ \llabel\ \btype*,\ n,}\ 
			\RT{\PtrPermL(\perm,\ \llabel\ \btype*,\ \llabel,\ n))]})\ =>\ \\ \-\ \quad
		 \bm{(\erasure(\rgamma{_1},\ \rsigma{_1},\ \hgamma[\x \to (\loc,\ \erasure(\llabel\ \btype*))],\ } \\ \-\ \quad
			\bm{\hsigma[\loc \to (\erasure(\rbyte, \rType, \rn),\ \erasure(\rType),\ n,}\ 
				\bm{\PtrPermL(\perm,\ \erasure(\rType),\ \Pub,\ n))]))}
	\\ 
% THEN pointer, pointer
	\mid\ (\RT{\gamma_1[\x\_then\_n \to (\loc,\ \llabel\ \btype*)]},\ \rsigma{_1}\RT{[\loc \to (\byte,\ \llabel\ \btype*,\ n,}\ 
			\RT{\PtrPermL(\perm,\ \llabel\ \btype*,\ \llabel,\ n))]})\ %
			=>\ \\ \-\ \quad
		\bm{(\erasure(\rgamma{_1},\ \rsigma{_1},\ \hgamma,\ \hsigma))}
	\\ 
% ELSE pointer, pointer
	\mid\ (\RT{\gamma_1[\x\_else\_n \to (\loc,\ \llabel\ \btype*)]},\ \rsigma{_1}\RT{[\loc \to (\byte,\ \llabel\ \btype*,\ n,}\ 
			\RT{\PtrPermL(\perm,\ \llabel\ \btype*,\ \llabel,\ n))]})\ %
			=>\ \\ \-\ \quad
		 \bm{(\erasure(\rgamma{_1},\ \rsigma{_1},\ \hgamma,\ \hsigma))}
	\\ 
% function, function
	\mid\ (\RT{\gamma_1[\x \to (\loc,\ \Tlist \to \Type)]}, \RT{\sigma_1[\loc \to (\byte,\ \Tlist \to \Type,\ 1,\ \FunPermL(\Pub))]}\ =>\ \\ \-\ \quad
		\bm{(\erasure(\rgamma{_1},\ \rsigma{_1},\ \hgamma[\x \to (\loc,\ \erasure(\RT{\Tlist \to \Type}))],}\ \\ \-\ \quad
			\bm{\hsigma[\loc \to (\erasure(\rbyte, \RT{\Tlist \to \Type}, \RT1),\ \erasure(\rTlist \to \rType),\ 1,\ \FunPermL(\Pub))]))}
\end{array}$
\caption{Erasure function over the environment and memory} 	\label{Fig: erasure env mem}
\end{figure*}

Figure~\ref{Fig: erasure env mem} shows erasure over the environment and memory. 
In order to properly handle all types of variables and data stored, we must iterate over both the \piccoC\ environment and memory maps, and pass along the \vanillaC\ environment and memory maps as we remove elements from the \piccoC\ maps and either add to them to the \vanillaC\ maps or discard them. 
The first case is the base case, when the \piccoC\ environment and memory are both empty, and we return the \vanillaC\ environment and memory. 
Next, we have three cases which continue to iterate through the \piccoC\ memory after the environment has been emptied. These cases are possible due to the fact that in \piccoC\ we remove mappings from the environment once they are out of scope, but we never remove mappings from memory. 

Then we have three cases to handle regular variables. The first adds mappings to the \vanillaC\ environment and memory without the privacy annotations on the types, and calls $\erasure$ on the byte-wise data stored at that location (the behavior of this is shown in Figure~\ref{Fig: erasure bytes} and described later in this section). The other two remove temporary variables (an their corresponding data) inserted by an \TT{if else} statement branching on private data. 
The cases for arrays, pointers, and functions behave similarly; however, when we have an array we handle the array pointer as well as the array data within those cases.

\subsubsection{Selected Metatheory}

\begin{definition}[$\psi$] 
\label{Def: psi}
A \LocMap\ $\psi$ is defined as a list of lists of locations, in symbols $\psi = [\ ]\ |\ \psi[\locL]$,
that is formed by tracking which locations are privately switched during the execution of the statement $\RT{\pfree(\x)}$ in a \piccoC\ program $\rstmt$ to enable comparison with the \emph{congruent} \vanillaC\ program $\hstmt$. 
\end{definition}

\begin{definition}[$\rval \Pcong \hval$]
\label{Def: val psi cong}
A \piccoC\ value and \vanillaC\ value are \emph{$\psi$-congruent}, 
in symbols $\rval \Pcong \hval$, \\
if and only if either 
$\rval \neq (\rloc, \RT\offset)$, $\hval \neq (\hloc, \Hoffset)$ and $\rval \cong \hval$, \\
or 
$\rval = (\rloc, \RT\offset)$, $\hval = (\hloc, \Hoffset)$ and $(\rloc, \RT\offset) \Pcong (\hloc, \Hoffset)$. 
\end{definition}

\begin{definition}[$\RT\code \cong \codeV$]
\label{Def: code cong}
We define \emph{congruence} over \piccoC\ codes $\RT\code \in \piccoCodes$ and $\codeV \in \vanillaCodes$, in symbols $\RT\code \cong \codeV$, by cases as follows: 
\\ if $\RT\code = \codeV$, then $\RT\code \cong \codeV$, 
\\ if $\RT\code = \RT{\mathit{iep}} \lxor \RT{\mathit{iepd}}$, then $\codeV = \codeVV{mpiet} \lxor \codeVV{mpief}$ and $\RT\code \cong \codeV$, 
\\ if $\RT\code = \RT{\mathit{mpcmp}}$, then $\codeV = \codeVV{mpcmpt} \lxor \codeVV{mpcmpf}$ and $\code \cong \codeV$, 
\\ otherwise we have 
$[\RT{\mathit{malp}}] \cong [\codeVV{ty}, \codeVV{bm}, \codeVV{mal}]$, 
$\RT{\mathit{fc1}} \cong \codeVV{fc}$, $\RT{\mathit{pin3}} \cong \codeVV{pin}$, 
$\RT{\mathit{cl1}} \cong \codeVV{cl}$, $\RT{\mathit{mpwdp2}} \cong \codeVV{mpwdp1}$, 
$\RT{\mathit{cv1}} \cong \codeVV{cv}$, $\RT{\mathit{mpwdp}} \cong \codeVV{mpwdp}$, 
$\RT{\mathit{pin4}} \cong \codeVV{pin1}$, $\RT{\mathit{pin5}} \cong \codeVV{pin2}$, 
$\RT{\mathit{mpwdp3}} \cong \codeVV{mpwdp}$, $\RT{\mathit{pin6}} \cong \codeVV{pin1}$, 
$\RT{\mathit{pin7}} \cong \codeVV{pin2}$, $\RT{\mathit{r1}} \cong \codeVV{r}$, 
$\RT{\mathit{w1}} \cong \codeVV{w}$, $\RT{\mathit{w2}} \cong \codeVV{w}$, 
$\RT{\mathit{d1}} \cong \codeVV{d}$, $\RT{\mathit{wdp2}} \cong \codeVV{wdp1}$, 
$\RT{\mathit{dp1}} \cong \codeVV{dp}$, $\RT{\mathit{wdp3}} \cong \codeVV{wdp}$, 
$\RT{\mathit{rp1}} \cong \codeVV{rp}$, $\RT{\mathit{wdp4}} \cong \codeVV{wdp}$, 
$\RT{\mathit{wp1}} \cong \codeVV{wp}$, $\RT{\mathit{rdp1}} \cong \codeVV{rdp1}$, 
$\RT{\mathit{wp2}} \cong \codeVV{wp}$, $\RT{\mathit{da1}} \cong \codeVV{da}$, 
$\RT{\mathit{ra1}} \cong \codeVV{ra}$, $\RT{\mathit{wea2}} \cong \codeVV{wea}$, 
$\RT{\mathit{wea1}} \cong \codeVV{wea}$, $\RT{\mathit{rao1}} \cong \codeVV{rao}$, 
$\RT{\mathit{wa1}} \cong \codeVV{wa}$, $\RT{\mathit{wa2}} \cong \codeVV{wa}$, 
$\RT{\mathit{wa1p}} \cong \codeVV{wa}$, $\RT{\mathit{wa2p}} \cong \codeVV{wa}$, 
$\RT{\mathit{wao2}} \cong \codeVV{wao}$, $\RT{\mathit{wao1}} \cong \codeVV{wao}$, 
$\RT{\mathit{inp3}} \cong \codeVV{inp1}$, $\RT{\mathit{inp2}} \cong \codeVV{inp}$, 
$\RT{\mathit{out3}} \cong \codeVV{out1}$, and $\RT{\mathit{out2}} \cong \codeVV{out}$. 
\end{definition}

\begin{definition}[$\RT\Pi\Pcong\Sigma$]%\small
\label{Def: deriv cong}
Two derivations and \emph{$\psi$-congruent}, in symbols $\RT\Pi\cong_{\psi}\Sigma$, 
if and only if 
\\ $\RT\Pi \deriv ((\RT\pidA, \rrgamma^{\RT{\pidA}}_{},$ $\rrsigma^{\RT{\pidA}}_{},$ $\RT\DMap^{\RT{\pidA}}_{}$, $\rAcc^{\RT{\pidA}}_{},$ $\rstmt^{\RT{\pidA}})\ \Mid ...\Mid$ 
	$(\RT\pidZ, \rrgamma^{\RT{\pidZ}}_{},$ $\rrsigma^{\RT{\pidZ}}_{},$ $\RT\DMap^{\RT{\pidZ}}_{}$, $\rAcc^{\RT{\pidZ}}_{},$ $\rstmt^{\RT{\pidZ}}))$ 
	 \\$\Deval{\RT\locLL}{\RT\codeLL}$ 
	$((\RT\pidA, {\rrgamma^{\RT{\pidA}}_{\RT1}},$ ${\rrsigma^{\RT{\pidA}}_{\RT1}},$ $\RT\DMap^{\RT{\pidA}}_{\RT1}$, $\rAcc^{\RT{\pidA}}_{\RT1},$ ${\RT\val^{\RT{\pidA}}_{}}) \Mid ... \Mid$ 
	$(\RT\pidZ, {\rrgamma^{\RT{\pidZ}}_{\RT1}},$ ${\rrsigma^{\RT{\pidZ}}_{\RT1}},$ $\RT\DMap^{\RT{\pidZ}}_{\RT1}$, $\rAcc^{\RT{\pidZ}}_{\RT1},$ $\RT\val^{\RT{\pidZ}}))$
and 
\\$\Sigma \deriv ((\pidA,$ $\hgamma^\pidA,$ $\hsigma^\pidA,$ $\bsq,$ $\bsq,$ $\hstmt^\pidA)\Mid ...\Mid $
	$(\pidZ,$ $\hgamma^\pidZ,$ $\hsigma^\pidZ,$ $\bsq,$ $\bsq,$ $\hstmt^\pidZ))$ 
\\	$\Deval{}{\codeVLL}$ 
	$((\pidA,$ $\hgamma^\pidA_1,$ $\hsigma^\pidA_1,$ $\bsq,$ $\bsq,$ $\hval^\pidA)\Mid...\Mid$
	$(\pidZ,$ $\hgamma^\pidZ_1,$ $\hsigma^\pidZ_1,$ $\bsq,$ $\bsq,$ $\hval^\pidZ))$
such that 
\\$\{(\RT\pid, \rrgamma^{\RT{\pid}}_{},$ $\rrsigma^{\RT{\pid}}_{},$ $\RT\DMap^{\RT{\pid}}_{}$, $\rAcc^{\RT{\pid}}_{},$ $\rstmt^{\RT{\pid}})$ $\cong_{\psi_1}$ 
$(\pid,$ $\hgamma^\pid,$ $\hsigma^\pid,$ $\bsq,$ $\bsq,$ $\hstmt^\pid)\}^{\pidZ}_{\pid = \pidA}$, 
$\RT\codeLL \cong \codeVLL$, and 
\\$\{(\RT\pid, {\rrgamma^{\RT{\pid}}_{\RT1}},$ ${\rrsigma^{\RT{\pid}}_{\RT1}},$ $\RT\DMap^{\RT{\pid}}_{\RT1}$, $\rAcc^{\RT{\pid}}_{\RT1},$ $\RT\val^{\RT{\pid}}_{})$ $\cong_{\psi}$ 
$(\pid,$ $\hgamma^\pid_1,$ $\hsigma^\pid_1,$ $\bsq,$ $\bsq,$ $\hval^\pid)\}^{\pidZ}_{\pid = \pidA}$ such that $\psi$ was derived from $\psi_1$ and the derivation $\RT\Pi$. 
\end{definition}

\begin{definition}[$\RT{\val^1}\sim\RT{\val^2}$]%\small
\label{def: val sim}
Two values are \emph{corresponding}, in symbols $\RT{\val^1}\sim\RT{\val^2}$, 
if and only if either both $\RT{\val^1},\RT{\val^2}$ are public (including locations) and $\RT{\val^1}=\RT{\val^2}$, 
or $\RT{\val^1},\RT{\val^2}$ are private and $\erasure(\RT{\val^1}) = \erasure(\RT{\val^2})$.
\end{definition}

\begin{definition}[$\RT{\Config^1} \sim \RT{\Config^2}$]%\small
\label{def: config sim}
Two configurations are \emph{corresponding}, in symbols $\RT{\Config^1} \sim \RT{\Config^2}$ or 
$(\RT1, \RT{\gamma^1}, \RT{\sigma^1}, \RT{\DMap^1}, \RT{\Acc^1}, \RT{\stmt^1})$ $\sim$ $(\RT2, \RT{\gamma^2}, \RT{\sigma^2}, \RT{\DMap^2}, \RT{\Acc^2}, \RT{\stmt^2})$, 
if and only if 
$\RT{\gamma^1}=\RT{\gamma^2}$, $\RT{\sigma^1} \sim \RT{\sigma^2}$, $\RT{\DMap^1} \sim \RT{\DMap^2}$, $\RT{\Acc^1}=\RT{\Acc^2}$, and $\RT{\stmt^1}=\RT{\stmt^2}$. 
\end{definition}

\begin{axiom}[$MPC_b$]%\small
\label{axiom: mpc bop}
Given $\binop\in\{+,-,\cdot, \div\}$, values $\{\RT{\n^{\pid}_{1}},$ $\RT{\n^{\pid}_{2}}, \hn_{1},$ $\hn_{2}\}^{\pidZ}_{\pid=\pidA}\in\N$, 
\\
if $MPC_b(\binop,$ $[\RT{\n^{\pidA}_{1}}, ...,$ $\RT{\n^{\pidZ}_{1}}],$ $[\RT{\n^{\pidA}_{2}}, ..., \RT{\n^{\pidZ}_{2}}])$ $= (\RT{\n^{\pidA}_{3}}, ..., \RT{\n^{\pidZ}_{3}})$, 
$\{\RT{\n^{\pid}_{1}} \cong \hn_1\}^{\pidZ}_{\pid=\pidA}$, 
	and
$\{\RT{\n^{\pid}_{2}} \cong \hn_2\}^{\pidZ}_{\pid=\pidA}$,  
\\
then $\{\RT{\n^{\pid}_{3}} \cong \hn^\pid_3\}^{\pidZ}_{\pid=\pidA}$ 
such that $\hn_1\ \binop\ \hn_2 = \hn_3$. 
\end{axiom}

\begin{lemma}[$\RT{\Config^1} \sim \RT{\Config^2} \implies \RT{\Config^1}\Pcong\hConfig \land \RT{\Config^2}\Pcong\hConfig$]%\small
\label{lem: sim implies cong same}
Given two configurations $\RT{\Config^1},\RT{\Config^2}$ such that 
$\RT{\Config^1} = (\RT1, \RT{\gamma^1}, \RT{\sigma^1}, \RT{\DMap^1}, \RT{\Acc^1}, \RT{\stmt^1})$ and $\RT{\Config^2} = (\RT2, \RT{\gamma^2}, \RT{\sigma^2}, \RT{\DMap^2}, \RT{\Acc^2}, \RT{\stmt^2})$ and $\psi$, 
if $\RT{\Config^1} \sim \RT{\Config^2}$ 
then $\{\RT{\Config^\pid}\Pcong(\pid, \hgamma, \hsigma, \bsq, \bsq, \stmt)\}^2_{\pid=1}$. 
\end{lemma}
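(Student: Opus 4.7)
The plan is to prove the two $\psi$-congruences simultaneously by construction: I will take $\hConfig := (\pid, \hgamma, \hsigma, \bsq, \bsq, \hstmt)$ to be the VanillaC configuration obtained by applying the erasure function (together with the given location map $\psi$) to either $\RT{\Config^1}$ or $\RT{\Config^2}$, and then show that both PiccoC configurations erase to this same target. That is, first I would set $\hConfig := \erasure(\RT{\Config^1})$, so that $\RT{\Config^1} \Pcong \hConfig$ holds immediately by the definition of $\Pcong$ (unfolded as $\erasure$ extended with $\psi$). Then the real work is to show $\RT{\Config^2} \Pcong \hConfig$, equivalently $\erasure(\RT{\Config^2}) = \erasure(\RT{\Config^1})$ relative to the same $\psi$.

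The argument then proceeds componentwise against the definition of $\RT{\Config^1} \sim \RT{\Config^2}$. For the statement component, $\RT{\stmt^1} = \RT{\stmt^2}$ directly gives $\erasure(\RT{\stmt^1}) = \erasure(\RT{\stmt^2}) = \hstmt$ since $\erasure$ on statements is purely syntactic. For the accumulator and the \changeMap\ components, both erase to the placeholder $\bsq$ regardless of their values, so the assumptions $\RT{\Acc^1} = \RT{\Acc^2}$ and $\RT{\DMap^1} \sim \RT{\DMap^2}$ are not even needed here. For the environment, $\RT{\gamma^1} = \RT{\gamma^2}$ and the erasure on environments depends only on the domain and the associated types (whose privacy labels are stripped uniformly), so $\erasure(\RT{\gamma^1},\ldots) = \erasure(\RT{\gamma^2},\ldots) = \hgamma$.

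The memory component is where the real work lies. I would do a case analysis, location by location, on the block shape stored in $\RT{\sigma^\pid}(\rloc)$, following exactly the cases enumerated in the erasure over environment/memory in Figure~\ref{Fig: erasure env mem}. For every public block (plain variables, public arrays, public pointers, function blocks) the assumption $\RT{\sigma^1} \sim \RT{\sigma^2}$ forces byte-for-byte equality, so the erased blocks coincide. For private value blocks and private array blocks, $\sim$ only constrains the underlying logical content: the bytes themselves may differ between the two parties' memories. Here I would invoke Definition~\ref{def: val sim}, which defines $\RT{\val^1} \sim \RT{\val^2}$ on private values precisely as $\erasure(\RT{\val^1}) = \erasure(\RT{\val^2})$, and lift this through the decode/decrypt/re-encode pipeline used inside $\erasure(\rbyte, \rType, \rn)$ to conclude that the erased bytes agree. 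For private pointer blocks, I would additionally use the fact that $\psi$ records exactly the same sequence of pfree-induced location swaps for both configurations (since $\RT{\Config^1} \sim \RT{\Config^2}$ implies they have executed the same program so far), so the declassification step $\rrDeclassifyPtr$ inside the pointer branch of $\erasure$ selects the same concrete location for both.

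The main obstacle will be the private pointer case of the memory argument: one must check that when $\sim$ holds on a block of type $\RT{\Priv\ \btype*}$ with multiple candidate locations and private tags, the true-location selected by $\rrDeclassifyPtr$ relative to $\psi$ is the same in both memories, and the offset rescaling $\offset_1 = \frac{\RT\offset \cdot \tau(\RT{\Type_1})}{\tau(\RT{\Type_2})}$ yields a consistent public offset. This requires a small auxiliary observation that correspondence on private pointer payloads (location list plus tag list plus indirection) implies correspondence of the declassified single-location pointer, which in turn justifies using a single $\psi$ for both erasures. Once this is established, the componentwise conclusions combine to give $\erasure(\RT{\Config^2}) = \hConfig$ with the same $\psi$, completing both $\Pcong$'s.
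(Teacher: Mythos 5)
Your proposal is correct and follows essentially the same route as the paper, whose proof is a one-line sketch stating that, by the definition of $\erasure$ and Definition~\ref{def: config sim}, there is only one possible \vanillaC\ configuration (modulo party ID) obtainable from both $\erasure(\RT{\Config^1})$ and $\erasure(\RT{\Config^2})$. Your componentwise unfolding (including the private-value and private-pointer cases via Definition~\ref{def: val sim} and the shared $\psi$) is simply a more detailed elaboration of that same argument.
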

\begin{proof}[Proof Sketch]%\small
Using the definition of $\erasure$ and Definition~\ref{def: config sim}, there is only one possible \vanillaC\ configuration $\hConfig$ (modulo party ID) that can be obtained from both 
$\erasure(\RT{\Config^1})$ and $\erasure(\RT{\Config^2})$. 
\end{proof}

\begin{lemma}[Unique party-wise transitions]%\small
\label{lem: smc unique rules}
Given $((\RT\pid, \RT\gamma, \RT\sigma, \RT\DMap, \RT\Acc, \RT\stmt) \Mid \RT\Config)$
if $((\RT\pid, \RT\gamma, \RT\sigma, \RT\DMap, \RT\Acc, \RT\stmt) \Mid \RT\Config)$ $\Deval{\RT\locLL}{\RT\codeLL}$
   $((\RT\pid, \RT{\gamma_1}, \RT{\sigma_1}, \RT{\DMap_1}, \RT\Acc, \RT\val) \Mid \RT{\Config_1})$
then there exists no other rule by which $(\RT\pid, \RT\gamma, \RT\sigma, \RT\DMap, \RT\Acc, \RT\stmt)$ can step. 
\end{lemma}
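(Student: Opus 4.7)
The plan is to prove this by case analysis on the form of the statement $\RT\stmt$, and then, within each syntactic case, on the semantic rule that was applied to derive the given transition. Because the \piccoC\ semantics is big-step and syntax-directed, only a finite (and small) set of rules can apply to any particular head-symbol of $\RT\stmt$; the task is therefore to argue that within each such set the side conditions of distinct rules are pairwise mutually exclusive, so at most one rule's premises can be simultaneously satisfied against the fixed local state $(\RT\gamma,\RT\sigma,\RT\DMap,\RT\Acc)$.

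First I would dispense with the straightforward syntactic cases: statement sequencing, statement blocks, declarations of non-private-pointer types, returns, function definitions, casts, and the purely structural constructs. For these the head-symbol of $\RT\stmt$ uniquely determines the applicable rule, so uniqueness is immediate. Next I would handle the cases where the syntactic form admits multiple candidate rules, and group them by the discriminating side condition. The main discriminators are: (i) the privacy label of each subexpression against $\RT\gamma$, i.e.\ $(\RT\Expr)\isPriv\RT\gamma$ versus $(\RT\Expr)\isPub\RT\gamma$, which partition the rules for binary operations, comparisons, writes, array reads/writes, pointer dereferences, conditional branches, and I/O primitives; (ii) the type recorded in $\RT\gamma$ for the variable under operation, distinguishing e.g.\ array reads on $\Const\ \btype*$ from pointer reads on $\btype*$, and public versus private variants of each; (iii) the number of locations $\nl$ stored in a pointer's data structure, distinguishing the $\nl=1$ rules (which mirror \vanillaC) from the $\nl>1$ multi-location rules for private free, dereference, and write; (iv) the accumulator value $\RT\Acc$, distinguishing rules guarded by $\RT\Acc = \AccZ$ (public side effects) from those admissible inside a private-conditioned branch; and (v) for the two Private If Else rules, the tag returned by $\DynExtract$, which is $0$ when no pointer dereference writes or public-indexed array writes occur in either branch and $1$ otherwise, so exactly one of the variable-tracking and location-tracking rules applies.

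For each pair of candidate rules sharing the same head-symbol I would exhibit the specific premise that holds in one and fails in the other, relying on the fact that the premises are semantic predicates on the fixed state $(\RT\gamma,\RT\sigma,\RT\DMap,\RT\Acc)$ and are functional in the sense that the subsidiary evaluations they invoke (e.g.\ the recursive evaluation of $\RT\Expr$) are themselves unique by the inductive hypothesis on smaller derivations. In particular, $\DecodePtr$, $\DecodeArr$, lookups in $\RT\gamma$ and $\RT\sigma$, $\DynExtract$, and the privacy relations $\isPub$/$\isPriv$ are all deterministic functions of the state, so no two rules whose conclusions share the same left-hand syntax can have jointly satisfiable premises.

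The main obstacle will be the bookkeeping: the rule set for pointer and array manipulation is large, with separate rules distinguishing single-level versus higher-level indirection, single-location versus multi-location pointers, public versus private values, and in-bounds versus out-of-bounds accesses, and we must additionally cover the overshooting variants. I expect the proof to be tedious but routine once a clean taxonomy is fixed; to keep the argument tractable I would introduce a small table indexed by head-symbol listing all candidate rules together with the discriminating premise, and then verify exclusivity one row at a time. The only conceptually non-routine subcase is the private if else, where the discriminator lies inside the auxiliary algorithm $\DynExtract$ rather than on the surface of the rule, so I would state and use a short auxiliary lemma to the effect that $\DynExtract$ returns a uniquely determined flag for any given pair of statements and environment.
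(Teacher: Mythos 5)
Your plan is essentially the paper's own proof: the paper argues by induction on the configuration, verifying that for every configuration the statement, accumulator, and stored type information determine a unique applicable rule, which is exactly your case analysis on the head-symbol plus the discriminating side conditions (privacy labels, types in $\RT\gamma$, location count $\nl$, $\RT\Acc$, and the $\DynExtract$ tag). Your write-up simply fills in the taxonomy that the paper's one-line sketch leaves implicit, so it is correct and follows the same route.
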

\begin{proof}[Proof Sketch]%\small
By induction on $(\RT\pid, \RT\gamma, \RT\sigma, \RT\DMap, \RT\Acc, \RT\stmt)$. 
We verify that for every configuration, given $\RT\stmt$, $\RT\Acc$, and stored type information, there is only one corresponding semantic rule.
\end{proof}

\begin{theorem}[Confluence]%\small
\label{thm: smc confluence}
Given $\RT{\Config^\pidA} \Mid ... \Mid \RT{\Config^\pidZ}$ such that $\{\RT{\Config^\pidA} \sim \RT{\Config^\pid}\}^{\RT\pidZ}_{\RT\pid = \RT\pidA}$ 
\\
if $(\RT{\Config^\pidA} \Mid ... \Mid \RT{\Config^\pidZ})$ $\Deval{\RT{\locLL_1}}{\RT{\codeLL_1}}$ $(\RT{\Config^\pidA_1} \Mid ... \Mid \RT{\Config^\pidZ_1})$ such that $\exists \RT\pid\in\{\RT\pidA...\RT\pidZ\} \RT{\Config^\pidA_1} \not\sim \RT{\Config^\pid_1}$, 
\\
then $\exists$ $(\RT{\Config^\pidA_1} \Mid ... \Mid \RT{\Config^\pidZ_1})$ $\Deval{\RT{\locLL_2}}{\RT{\codeLL_2}}$ $(\RT{\Config^\pidA_2} \Mid ... \Mid \RT{\Config^\pidZ_2})$
\\ 
such that $\{\RT{\Config^\pidA_2} \sim \RT{\Config^\pid_2}\}^{\RT\pidZ}_{\RT\pid = \RT\pidA}$, 
$\{(\RT{\locLL^\pidA_1}\addL\RT{\locLL^\pidA_2}) = (\RT{\locLL^\pid_1}\addL\RT{\locLL^\pid_2})\}^{\RT\pidZ}_{\RT\pid = \RT\pidA}$, 
and $\{(\RT{\codeLL^\pidA_1}\addC\RT{\codeLL^\pidA_2}) = (\RT{\codeLL^\pid_1}\addC\RT{\codeLL^\pid_2})\}^{\RT\pidZ}_{\RT\pid = \RT\pidA}$.
\end{theorem}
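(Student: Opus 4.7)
The plan is to proceed by induction on the derivation of the single ensemble step $({\Config^\pidA} \Mid \ldots \Mid {\Config^\pidZ}) \Deval{\locLL_1}{\codeLL_1} ({\Config^\pidA_1} \Mid \ldots \Mid {\Config^\pidZ_1})$, performing case analysis on the rule applied at the root. The starting configurations all correspond ($\sim$), so by Definition~\ref{def: config sim} they share the same statement and accumulator, and their memories/maps agree on public data while private data differ at most by encryption (same erasure). The driving invariant is: from a family of pairwise $\sim$-configurations, each party has a uniquely determined next rule (Lemma~\ref{lem: smc unique rules}), and that rule is the \emph{same} across parties modulo party identifier and private-value content.

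The case split separates multiparty rules (those marked with $\codeMP{\cdot}$) from single-party rules ($\codeSP{\cdot}$). For multiparty rules, all parties step simultaneously in a single application: by the MPC axioms (e.g.\ Axiom~\ref{axiom: mpc bop} for $\MPC{b}$ and analogous axioms to be invoked for $\MPC{ar}$, $\MPC{dv}$, $\MPC{resolve}$, $\MPC{mult}$, and $\PFree$), the returned party-wise values are $\sim$-corresponding, and the synchronized updates to $\sigma^\pid$, $\DMap^\pid$ preserve correspondence; hence $\Config^\pid_1 \sim \Config^\pidA_1$ already holds for every $\pid$, contradicting the hypothesis, so this branch is vacuous. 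For a single-party rule applied at some party $\pid_0$, only $\Config^{\pid_0}$ is altered while the other $\Config^\pid$ are copied unchanged. Using Lemma~\ref{lem: sim implies cong same} together with the original $\sim$-correspondence, each remaining party $\pid \neq \pid_0$ can fire the analogue of the same rule on its own slot, and I will take $\locLL_2, \codeLL_2$ to be the concatenation of these individual catch-up steps in a fixed order (say $\pidA, \pidB, \ldots$, skipping $\pid_0$). After these $|\{\pidA,\ldots,\pidZ\}|-1$ catch-up steps, all parties have applied the same rule, so the resulting configurations are again pairwise $\sim$, and concatenation commutativity at the party-indexed level gives the required equalities $(\locLL^\pidA_1 \addL \locLL^\pidA_2) = (\locLL^\pid_1 \addL \locLL^\pid_2)$ and likewise for $\codeLL$, because each party's combined list is precisely the sequence ``my-own-step followed by the others' steps'' up to reordering that collapses under the per-party projection.

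The main obstacle will be the private-conditioned branch rules (Private If Else, both variable-tracking and especially location-tracking, Figures~\ref{Fig: iep vt} and~\ref{Fig: iep lt}), because they are large compound rules whose internal sub-derivations include auxiliary evaluations, $\DynInit$, $\DynRestore$, $\DynResolveR$/$\DynResolveS$, and a call to $\MPC{resolve}$. I need to show that when a party steps through such a rule, the updates to its $\DMap^\pid$ mirror the pattern of updates the other parties will make when they catch up: same set of tracked locations, same sequence of dynamic additions on pointer-dereference writes and public-index array writes, and the same final resolved values (up to encryption). This relies on the fact that \emph{which} locations get dynamically added to $\DMap$ is determined by public control flow and public array indices within the branch body, which by data-obliviousness depends only on data that is $\sim$-equal across parties. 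A secondary subtlety is the map $\psi$ used for private-free relocations: in the Multiparty Private Free case, all parties swap in lockstep, so $\psi$ is updated uniformly and $\sim$ is preserved, but I must be careful that the catch-up construction does not fire $\pfree$ asymmetrically, which is guaranteed because $\pfree$ is always a $\codeMP{}$ rule and thus handled in the multiparty branch rather than the single-party branch.

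Finally, after the inductive case analysis is complete, I will verify the list-equality side conditions uniformly. Single-party steps contribute to only one party's sublist, and the catch-up construction arranges the remaining contributions symmetrically, so the party-wise concatenation $\locLL_1 \addL \locLL_2$ coincides across parties; the same holds for $\codeLL_1 \addC \codeLL_2$, using the congruence relation of Definition~\ref{Def: code cong} at the party-local level to identify the codes applied by each catch-up step with the code applied by $\pid_0$ initially.
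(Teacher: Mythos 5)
Your proposal is correct and takes essentially the same route as the paper: the paper's own proof is a short determinism argument that, by Lemma~\ref{lem: smc unique rules}, each party's possible execution trace is uniquely determined, and since $\sim$-corresponding configurations carry the identical statement, every party follows the same trace and must eventually return to corresponding states. Your induction with the multiparty/single-party case split, the catch-up steps in a fixed party order, and the per-party collapse of the $\locLL$/$\codeLL$ side conditions is a more detailed, operational rendering of exactly that argument, resting on the same key lemma and the definition of $\sim$ rather than on any genuinely different idea.
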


\begin{proof}[Proof Sketch]%\small
By Lemma~\ref{lem: smc unique rules}, we have that there is only one possible execution trace for any given party based on the starting configuration. 
~\\ 
By definition of $\{\RT{\Config^\pidA} \sim \RT{\Config^\pid}\}^{\RT\pidZ}_{\RT\pid = \RT\pidA}$, 
we have that the starting states of all parties are corresponding, with identical statements. 
~\\ 
Therefore, all parties must follow the same execution trace and will eventually reach another set of corresponding states.
\end{proof}

\begin{axiom}
\label{correctness assumption}%\small
For purposes of correctness, 
we assume all parties are executing program $\RT\stmt$ 
from initial state $(\RT\pid, \RT{[\ ]}, \RT{[\ ]}, \RT{[\ ]}, \RT\AccZ, \RT\stmt)$
and corresponding input data. 
\end{axiom}

\begin{theorem}[Semantic Correctness]%\small
\label{Thm: app correctness}
~\\
For every configuration $\{(\RT\pid,\ \rrgamma^{\RT{\pid}}_{},$ $\rrsigma^{\RT{\pid}}_{},$ $\RT\DMap^{\RT{\pid}}_{}$, $\rAcc^{\RT{\pid}}_{},$ $\rstmt^{\RT\pid})\}^{\RT\pidZ}_{\RT\pid = \RT\pidA}$, 
$\{(\pid,$ $\hgamma^\pid,$ $\hsigma^\pid,$ $\bsq,$ $\bsq,$ $\hstmt^\pid)\}^{\pidZ}_{\pid = \pidA}$ and \LocMap\ $\psi$ 
\\ such that $\{(\RT\pid, \rrgamma^{\RT{\pid}}_{},$ $\rrsigma^{\RT{\pid}}_{},$ $\RT\DMap^{\RT{\pid}}_{}$, $\rAcc^{\RT{\pid}}_{},$ $\rstmt^{\RT{\pid}})$ $\Pcong$ 
$(\pid,$ $\hgamma^\pid,$ $\hsigma^\pid,$ $\bsq,$ $\bsq,$ $\hstmt^\pid)\}^{\pidZ}_{\pid = \pidA}$, 
\\ % a 
if $\RT\Pi \deriv ((\RT\pidA, \rrgamma^{\RT{\pidA}}_{},$ $\rrsigma^{\RT{\pidA}}_{},$ $\RT\DMap^{\RT{\pidA}}_{}$, $\rAcc^{\RT{\pidA}}_{},$ $\rstmt^{\RT{\pidA}})\ \Mid ...\Mid$ 
	$(\RT\pidZ, \rrgamma^{\RT{\pidZ}}_{},$ $\rrsigma^{\RT{\pidZ}}_{},$ $\RT\DMap^{\RT{\pidZ}}_{}$, $\rAcc^{\RT{\pidZ}}_{},$ $\rstmt^{\RT{\pidZ}}))$ 
	\\ \-\ \-\ \-\ $\Deval{\RT\locLL}{\RT\codeLL}$ 
	$((\RT\pidA, {\rrgamma^{\RT{\pidA}}_{\RT1}},$ ${\rrsigma^{\RT{\pidA}}_{\RT1}},$ $\RT\DMap^{\RT{\pidA}}_{\RT1}$, $\rAcc^{\RT{\pidA}}_{\RT1},$ ${\RT\val^{\RT{\pidA}}_{}}) \Mid ... \Mid$ 
	$(\RT\pidZ, {\rrgamma^{\RT{\pidZ}}_{\RT1}},$ ${\rrsigma^{\RT{\pidZ}}_{\RT1}},$ $\RT\DMap^{\RT{\pidZ}}_{\RT1}$, $\rAcc^{\RT{\pidZ}}_{\RT1},$ $\RT\val^{\RT{\pidZ}}))$ 
\\ for codes $\RT\codeLL \in \RT{\piccoCodes}$,
then there exists a derivation 
\\ % b
$\Sigma \deriv\ ((\pidA,$ $\hgamma^\pidA,$ $\hsigma^\pidA,$ $\bsq,$ $\bsq,$ $\hstmt^\pidA)\Mid ...\Mid $
	$(\pidZ,$ $\hgamma^\pidZ,$ $\hsigma^\pidZ,$ $\bsq,$ $\bsq,$ $\hstmt^\pidZ))$ 
	\\ $\Deval{}{\codeVLL}$ 
	$((\pidA,$ $\hgamma^\pidA_1,$ $\hsigma^\pidA_1,$ $\bsq,$ $\bsq,$ $\hval^\pidA)\Mid...\Mid$
	$(\pidZ,$ $\hgamma^\pidZ_1,$ $\hsigma^\pidZ_1,$ $\bsq,$ $\bsq,$ $\hval^\pidZ))$ 
\\ for codes $\codeVLL \in \vanillaCodes$ 
and 
% c
a \LocMap\ $\psi_1$ 
% d
such that 
\\ % f
$\RT\codeLL \cong \codeVLL$, 
% g
$\{(\RT\pid, {\rrgamma^{\RT{\pid}}_{\RT1}},$ ${\rrsigma^{\RT{\pid}}_{\RT1}},$ $\RT\DMap^{\RT{\pid}}_{\RT1}$, $\rAcc^{\RT{\pid}}_{\RT1},$ $\RT\val^{\RT{\pid}}_{})$ $\cong_{\psi_1}$ 
$(\pid,$ $\hgamma^\pid_1,$ $\hsigma^\pid_1,$ $\bsq,$ $\bsq,$ $\hval^\pid)\}^{\pidZ}_{\pid = \pidA}$, 
% h
and $\RT\Pi \cong_{\psi_1} \Sigma$.
\end{theorem}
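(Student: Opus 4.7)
The plan is to proceed by induction on the derivation tree $\RT\Pi$ of the \piccoC\ evaluation, with case analysis on the last rule applied. For each \piccoC\ rule we must (i) exhibit a corresponding \vanillaC\ derivation $\Sigma$ concluded by the congruent vanilla rule (or sequence of rules, per Definition~\ref{Def: code cong}), (ii) establish that the resulting configurations are $\psi_1$-congruent for an appropriately updated map $\psi_1$, and (iii) verify the codes remain congruent. The inductive hypothesis gives us corresponding \vanillaC\ sub-derivations for each premise of the \piccoC\ rule, which we then glue together using the congruent vanilla rule. Confluence (Theorem~\ref{thm: smc confluence}) together with Axiom~\ref{correctness assumption} lets us assume all parties remain in corresponding states at each global synchronization point, so multiparty rules apply uniformly; combined with Lemma~\ref{lem: sim implies cong same}, this means a single \vanillaC\ configuration (per party) is determined by the erased \piccoC\ configuration.

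First, I would handle the routine cases: purely non-interactive rules whose \piccoC\ form differs from \vanillaC\ only by the presence of privacy labels and the accumulator (e.g., arithmetic on public values, public reads/writes, public branches, public declarations, statement sequencing, public malloc/free, pointer read at a single location). For these, the IH yields matching sub-derivations, the erasure commutes with the rule's effect on environment and memory by construction of $\bm{\erasure}$ on bytes (Figure~\ref{Fig: erasure bytes}) and on $(\gamma,\sigma)$ (Figure~\ref{Fig: erasure env mem}), and $\psi_1 := \psi$ suffices. Multiparty binary operations and comparisons reduce to their \vanillaC\ counterparts using Axiom~\ref{axiom: mpc bop} (and its analog for comparisons) to obtain that $\RT{n^\pid_3} \cong \hat{n}_3$. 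Write-to-private-with-public-value is matched by a plain vanilla write after erasure, since $\Encrypt$ is inverted by $\bm{\erasure}$ on bytes. The \TT{pmalloc} case uses the three-code expansion $[\codeVV{ty}, \codeVV{bm}, \codeVV{mal}]$ and the identity $\hat{n}\cdot\sizeof(\erasure(\rType))$ built into $\bm{\erasure}$ for the \TT{pmalloc} statement.

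The main obstacle is the private \TT{if else} case (both variable- and location-tracking variants). Here $\RT\Pi$ executes $\ExprC{\Expr}$ and \emph{both} $\sC{\stmt_1}$ and $\ssC{\stmt_2}$ (with $\Initialize/\DynInit$, $\Restore/\DynRestore$, and $\Resolve/\DynResolve$ phases), while $\Sigma$ must be a single vanilla if-else firing exactly one branch by the value of $\erasure(\ExprC{\Expr})$, namely via \ruleIfT\ or Multiparty If Else False (Figure~\ref{Fig: van mpief}). The plan is to show that after all phases complete on the \piccoC\ side, the erased environment and memory coincide with those produced by executing only the branch selected by the plaintext condition on the \vanillaC\ side. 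Concretely, I would prove three supporting lemmas: (a) the temporary variables $\res\_\Acc$, $\x\_t\_\Acc$, $\x\_e\_\Acc$ introduced by $\Initialize$ are invisible to erasure (this is already captured by the dedicated clauses for these names in Figure~\ref{Fig: erasure env mem}); (b) for any location $\loc$ tracked in $\vl_{\Acc+1}$ (variable style) or $\DMap[\Acc]$ (location style), after the resolve step the byte content at $\loc$ encodes the \TT{then}-branch value if the private condition holds and the \TT{else}-branch value otherwise, which follows from the axiomatic correctness of $\MPC{resolve}$ combined with the invariant that $\Restore$/$\DynRestore$ stores the exact \TT{then} outcome and reinstates the original value; (c) locations not tracked are unchanged by the composite execution, which follows from $\DynExtract$ being complete over write targets (modulo pointer dereference writes and public-indexed array writes, which is exactly why the location-tracking variant is needed and dynamically extends $\DMap$ at the moment of the write). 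Combining (a)–(c), the erased final memory agrees with the single-branch vanilla execution, and $\psi_1 := \psi$ again suffices because no \TT{pfree} swap occurs at this top level.

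The remaining delicate cases are \TT{pfree} and private pointer operations on multi-location pointers. For Multiparty Private Free, the rule may copy the content of the true location into $\loc_0$ and then mark $\loc_0$ as freed; I would define $\psi_1 := \psi[\locL']$ where $\locL'$ records the swapped location pair, so that the new $\Pcong$ relation accounts for the permuted blocks—$\bm{\erasure}$ on a private pointer with multiple locations already uses $\rrDeclassifyPtr$ to pick the true location, and the vanilla side sees a single \Code{free} of that declassified target, which matches post-swap. For multi-location private pointer reads/writes and dereferences, the relevant multiparty protocol axioms ($\MPC{dv}$, $\MPC{ar}$, analogous to Axiom~\ref{axiom: mpc bop}) yield the value at the true location, which after erasure is the value at the unique vanilla location chosen by $\rrDeclassifyPtr$. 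Finally, well-aligned array overshooting is handled by restricting to derivations where $\ReadOOB$/$\WriteOOB$ touch only blocks of the expected type and size (as illustrated in Figure~\ref{fig: overshooting alignment}(a)); under this restriction the byte-level encode/decode operations commute with $\bm{\erasure}$ on bytes, so the vanilla overshoot rule produces the erased byte pattern. Putting these cases together with the routine cases closes the induction, yielding the required $\RT\Pi\cong_{\psi_1}\Sigma$.
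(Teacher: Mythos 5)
Your overall strategy is the same as the paper's: induction on the \piccoC\ evaluation tree, erasure commuting with the routine non-interactive rules, protocol axioms for the multiparty steps (including the three-code expansion for \TT{pmalloc}), the map $\psi$ extended exactly at Multiparty Private Free to record the swapped blocks, and the restriction to well-aligned overshooting. The one place where your plan, as stated, would fail is the private-conditioned \TT{if else} case. You assert that $\Sigma$ ``must be a single vanilla if-else firing exactly one branch,'' offering an If-True rule for the true case and citing Figure~\ref{Fig: van mpief} for the false case. A single-branch \vanillaC\ if-else cannot work: the \piccoC\ rule emits the code list $\codeLL_1 \addC \codeLL_2 \addC \codeLL_3$ followed by $\mathit{iep}$ (respectively $\mathit{iepd}$), i.e.\ it contains the codes generated by the guard \emph{and both} branch evaluations, whereas a one-branch vanilla if-else contributes codes for the guard and only one branch. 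Hence $\codeLL \cong \codeVLL$ — and with it $\Pi \cong_{\psi_1} \Sigma$, which is part of the theorem's conclusion — would be unprovable, and the inductive hypothesis applied to the subderivation of the non-taken branch would have no counterpart inside $\Sigma$. This is precisely why the paper matches $\mathit{iep}/\mathit{iepd}$ with the \vanillaC\ \emph{multiparty} rules Multiparty If Else True/False ($\mathit{mpiet}/\mathit{mpief}$, Definition~\ref{Def: code cong}), which evaluate both statements from $\hsigma_1$ but retain only the memory of the branch selected by the plaintext guard; the rule you cite in Figure~\ref{Fig: van mpief} is of this two-branch form, so your description of it as ``firing exactly one branch'' is what needs to be repaired, and the true case must use Multiparty If Else True rather than the ordinary If-True rule.

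Once $\Sigma$ is built from those two-branch vanilla rules, the rest of your plan coincides with the paper's treatment: your lemmas (a)–(c) are the paper's correctness arguments for $\DynExtract$, $\Initialize$/$\DynInit$, $\Restore$/$\DynRestore$ and the resolution phase via the $\MPC{resolve}$ axiom, and your handling of \TT{pfree} via $\psi_1=\psi[\locL']$ with $\rrDeclassifyPtr$ on the erased side is exactly the paper's $\psi$-congruence argument. Two smaller omissions worth adding: resolution of pointer-typed tracked variables goes through the dedicated $\resolve$ evaluation (Multiparty Resolve Pointer Locations), not the arithmetic blend used for values, and the whole argument relies on the paper's standing assumption that programs contain no hard-coded locations, since the \TT{pfree}-induced swaps make concrete addresses differ between the two semantics.
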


\begin{proof}[Proof Sketch]
By induction over all \piccoC\ semantic rules. 

The bulk of the complexity of this proof lies with rules pertaining to Private If Else, handling of pointers, and freeing of memory.  We first provide a brief overview of the intuition for the simpler cases and then dive deeper
into the details for the more complex cases.  Full proofs are available in our artifact submission.

For the rules evaluating over public data, correctness follows simply as the \vanillaC\ and \piccoC\ rules for public data are nearly identical. For all the semantic rules that use general helper algorithms (i.e., algorithms in common to both \vanillaC\ and \piccoC), we also reason about the correctness of the helper algorithms, comparing the \vanillaC\ version and the \piccoC version. Correctness over such algorithms is easily proven, as these algorithms are nearly identical, differing on privacy labels as we do not have private data in \vanillaC. 

For all \piccoC\ multiparty semantic rules, we relate them to the multiparty versions of the \vanillaC\ rules. To reason about the multiparty protocols, we leverage Axioms, such as Axiom~\ref{axiom: mpc bop}, to prove these rules correct. These Axioms should be proven correct by a library developer to ensure the completeness of the formal model. The correctness of most multiparty semantic rules follows easily, with Multiparty Private Free being an exception. For this rule, we also must reason about our helper algorithms that are specific to the \piccoC\ semantics (e.g., $\UpdateBytesFree$, $\UpdatePtrLocs$). We leverage the correctness of the behavior of the multiparty protocol $\PFree$, to show that correctness of these algorithms follows due to the deterministic definitions of the algorithms. In this case, we must also show that the locations that are swapped within this rule (which is done to hide the true location) are deterministic based on our memory model definition. We use $\psi$ to map the swapped locations, enabling us to show that, if these swaps were reversed, we would once again have memories that are directly congruent. This concept of locations being $\psi$-congruent is particularly necessary when reasoning about pointers in other rule cases. 
For all the rules using private pointers, we will rely upon the pointer data structure containing a set of locations and their associated tags, only one of which being the true location. With this proven to be the case, it is then clear that the true location indicated within the private pointer's data structure in \piccoC\ will be $\psi$-congruent  with the location given by the pointer data structure in \vanillaC.
In our proof, we make the assumption that locations are not hard-coded, as hard-coded locations would lead to potentially differing results between \vanillaC\ and \piccoC\ execution due to the behavior of \TT{pfree}. Additionally, given the distributed nature of the \piccoC, it would not make sense to allow hard-coded locations, as a single program will be executed on several different machines.

For rule Private Malloc, we must relate this rule to the sequence of \vanillaC\ rules for Malloc, Multiplication, and Size Of Type. This is due to the definition of \TT{pmalloc} as a helper that allows the user to write programs without knowing the size of private types. This case follows from the definition of translating the \piccoC\ program to a \vanillaC\ program, $\bm{\erasure}({\PMalloc(\Expr,\ \Type)} 
= {(\Malloc(\sizeof(\bm{\erasure}(\Type)) \cdot \bm{\erasure}(\Expr)))})$.

For the Private If Else rules, we must reason that our end results in memory after executing both branches and resolving correctly match the end result of having only executed the intended branch. 
The cases for both of these rules will have two subcases - one for the conditional being true, and the other for false. 
To obtain correctness, we use multiparty versions of the if else true and false rules that execute both branches - this allows us to reason that both branches will evaluate properly, and that we will obtain the correct ending state once completed. 
For both rules, we must first show that $\DynExtract$ will correctly find all non-local variables that are modified within both branches, including non-assignment modifications such as use of the pre-increment operator $++\x$, and that all such modified variables will be added to the list (excluding pointers modified exclusively by pointer dereference write statements). We must also show that it will correctly find and tag if a pointer dereference write statement was found. These properties follow deterministically from the definition of the algorithm. 

For rule Private If Else Variable Tracking, we will leverage the correctness of $\DynExtract$, and that if $\DynExtract$ returns the tag 0, no pointer dereference writes were found. We then reason that $\Initialize$ will correctly create the assignment statements for our temporary variables, and that the original values for each of the modified variables will be stored into the \TT{else} temporary variables. The temporaries being stored into memory correctly through the evaluation of these statements follows by induction. Next we have the evaluation of the \TT{then} branch, which will result in the values that are correct for if the condition had been true - this holds by induction. 
We then proceed to reason that $\Restore$ will properly create the statements to store the ending results of the \TT{then} branch into the \TT{then} temporary variables, and restore all of the original values from the \TT{else} variables (the original values being correctly stored follows from $\Initialize$ and the evaluation of it's statements). The correct evaluation of the this set of statements follows by induction. 
Next we have the evaluation of the \TT{else} branch, which will result in the values that are correct for if the condition had been false - this holds by induction and the values having been restored to the original values properly.
We will then reason about the correctness of the statements created by $\Resolve$. These statements must be set up to correctly take the information from the \TT{then} temporary variable, the temporary variable for the condition for the branch, and the ending result for all variables from the \TT{else} branch. For the resolution of pointers, we insert a call for a resolution function ($\resolve$), because the resolution of pointer data is more involved. The evaluation of this function is shown in rule Multiparty Resolve Pointer Locations. By proving that this rule will correctly resolve the true locations for pointers, we will then have that the statements created by $\Resolve$ will appropriately resolve all

For rule Private If Else Location Tracking, the structure of the case is similar to the rule for variable tracking, but with a few differences we will discuss here. 
For this rule, we will need to reason about $\DynUpdate$, and that we will catch all modifications by pointer dereference writes and properly add them to $\DMap$ if the location being modified is not already tracked. If a new mapping is added, we store the current value in $\val_\mathit{orig}$ (as this location has not yet been modified) and the tag has to be set to 0. This behavior will be used to ensure the correctness during resolution.
For $\DynInit$, we must reason that we correctly initialize the map $\DMap$ with all of the locations we found within $\DynExtract$ to be modified by means other than pointer dereference writes and store their original values in $\val_\mathit{orig}$. 
Then we can evaluate the \TT{then} branch, which will result in the values that are correct for if the condition had been true - this holds by induction. 
For $\DynRestore$, we reason that we properly store the results of the \TT{then} branch, and update the tag for the location to signify that we should use $\val_\mathit{then}$ instead of $\val_\mathit{orig}$. We will then restore the original values, leveraging the correctness of $\DynInit$ to prove this will happen correctly. 
Then we can evaluate the \TT{else} branch, which will result in the values that are correct for if the condition had been false - this holds by induction. 
For $\DynResolve$, we reason that we will create the appropriate resolution statements to be executed. For the \TT{then} result, these statements must use the value stored in $\val_\mathit{orig}$ if the tag is set to 0 (this occurs if the first modification to the location was a pointer dereference write within the \TT{else} branch), and the value stored in $\val_\mathit{then}$ if the tag is set to 1. We prove this to be the correct \TT{then} result through the correctness of $\DynUpdate$ and $\DynRestore$. The \TT{else} result must use the current value for that location in memory, which is proven to be the correct \TT{else} result through the correctness of $\DynInit$ and $\DynResolve$. In this way, we can prove the correctness the contents of the statements created by $\DynResolve$, and then the correctness of the evaluation of the statements created by $\DynRestore$ will hold as we discussed for with those created by $\Resolve$ for Private If Else Variable tracking. 
\end{proof}

\subsection{Noninterference}
\label{app: noninterference}

\begin{definition}%\small
\label{def: loweq tree}
Two \piccoC\ evaluation trees $\Pi$ and $\Sigma$ are  \emph{low-equivalent}, in symbols $\Pi \loweq \Sigma$, if and only if $\Pi$ and $\Sigma$ have the same structure as trees, and for each node in $\Pi$ proving 
\\ \-\ \quad \-\ $((\pidA, \gamma^{\pidA}_{},$ $\sigma^{\pidA}_{},$ $\DMap^{\pidA}_{}$, $\Acc^{\pidA}_{},$ $\stmt)\Mid ...\Mid (\pidZ, \gamma^{\pidZ}_{},$ $\sigma^{\pidZ}_{},$ $\DMap^{\pidZ}_{}$, $\Acc^{\pidZ}_{},$ $\stmt))$ 
\\ $\Deval{\locLL}{\codeLL}$ $((\pidA, \gamma^{\pidA}_{1},$ $\sigma^{\pidA}_{1},$ $\DMap^{\pidA}_{1}$, $\Acc^{\pidA}_{1},$ $\val^{\pidA}_{})\Mid ...\Mid (\pidZ, \gamma^{\pidZ}_{1},$ $\sigma^{\pidZ}_{1},$ $\DMap^{\pidZ}_{1}$, $\Acc^{\pidZ}_{1},$ $\val^{\pidZ}_{}))$, the corresponding node in $\Sigma$ proves 
\\ \-\ \quad \-\ \-\ $((\pidA, \gamma^{\pidA}_{},$ $\sigma^{\pidA}_{},$ $\DMap^{\pidA}_{}$, $\Acc^{\pidA}_{},$ $\stmt)\Mid ...\Mid (\pidZ, \gamma^{\pidZ}_{},$ $\sigma^{\pidZ}_{},$ $\DMap^{\pidZ}_{}$, $\Acc^{\pidZ}_{},$ $\stmt))$ 
\\ $\Deval{\locLL'}{\codeLL'}$ $((\pidA, \gamma^{\pidA}_{1},$ $\sigma^{\pidA}_{1},$ $\DMap^{\pidA}_{1}$, $\Acc^{\pidA}_{1},$ $\val^{\pidA}_{})\Mid ...\Mid (\pidZ, \gamma^{\pidZ}_{1},$ $\sigma^{\pidZ}_{1},$ $\DMap^{\pidZ}_{1}$, $\Acc^{\pidZ}_{1},$ $\val^{\pidZ}_{}))$, $\codeLL = \codeLL'$ and $\locLL = \locLL'$.
\end{definition}

\begin{axiom}[$MPC_{ar}$]%\small
\label{axiom: mpc ar ni}
Given indices $\{\ind^{\pid}_{},$ $\ind'^{\pid}_{}\}^{\pidZ}_{\pid=\pidA}$, arrays $\{[\val^{\pid}_{1},$ $...,$ $\val^{\pid}_{n}]$, $[\val'^{\pid}_{1},$ $...,$ $\val'^{\pid}_{n}]\}^{\pidZ}_{\pid=\pidA}$, 
\\
if $MPC_{ar}((\ind^{\pidA}_{},$ $[\val^{\pidA}_{1},$ $...,$ $\val^{\pidA}_{n}]),$ $...,$ $(\ind^{\pidZ}_{},$ $[\val^{\pidZ}_{1},$ $...,$ $\val^{\pidZ}_{n}]))$ $= (\val^{\pidA}_{},$ $...,$ $\val^{\pidZ}_{})$, 
\\ \-\ \-\ 
$MPC_{ar}((\ind'^{\pidA}_{},$ $[\val'^{\pidA}_{1},$ $...,$ $\val'^{\pidA}_{n}]),$ $...,$ $(\ind'^{\pidZ}_{},$ $[\val'^{\pidZ}_{1},$ $...,$ $\val'^{\pidZ}_{n}]))$ $= (\val'^{\pidA}_{},$ $...,$ $\val'^{\pidZ}_{})$,  
\\ \-\ \-\ 
$\{\ind^{\pid}_{} = \ind'^{\pid}_{}\}^{\pidZ}_{\pid=\pidA}$, 
and $\{[\val^{\pid}_{1},$ $...,$ $\val^{\pid}_{n}] = [\val'^{\pid}_{1},$ $...,$ $\val'^{\pid}_{n}]\}^{\pidZ}_{\pid=\pidA}$
\\
then $\{\val^{\pid}_{} = \val'^{\pid}_{}\}^{\pidZ}_{\pid=\pidA}$. 
\end{axiom}

\begin{axiom}[$\MPC{b}$]%\small
\label{axiom: mpc b ni}
Given values $\{\val^{\pid}_{1},$ $\val^{\pid}_{2},$ $\val^{\pid}_{3},$ $\val'^{\pid}_{1},$ $\val'^{\pid}_{2},$ $\val'^{\pid}_{3}\}^{\pidZ}_{\pid=\pidA}$ and binary operation $\binop\in\{\cdot, +, -, \div\}$, 
\\
if $MPC_{b}(\binop,$ $\val^{\pidA}_{1},$ $\val^{\pidA}_{2}, ...,$ $\val^{\pidZ}_{1}, \val^{\pidZ}_{2})$ $= (\val^{\pidA}_{3}, ..., \val^{\pidZ}_{3})$, 
\\ \-\ \-\ 
$MPC_{b}(\binop,$ $\val'^{\pidA}_{1},$ $\val'^{\pidA}_{2}, ...,$ $\val'^{\pidZ}_{1}, \val'^{\pidZ}_{2})$ $= (\val'^{\pidA}_{3}, ..., \val'^{\pidZ}_{3})$,  
$\{\val^{\pid}_{1} = \val'^{\pid}_{1}\}^{\pidZ}_{\pid=\pidA}$, 
and $\{\val^{\pid}_{2} = \val'^{\pid}_{2}\}^{\pidZ}_{\pid=\pidA}$
\\
then $\{\val^{\pid}_{3} = \val'^{\pid}_{3}\}^{\pidZ}_{\pid=\pidA}$. 
\end{axiom}

\begin{theorem}[Multiparty Noninterference]%\small
\label{thm: NI app}
For every environment $\{\gamma^{\pid}_{},$ $\gamma^{\pid}_{1},$ $\gamma'^{\pid}_{1}\}^{\pidZ}_{\pid = \pidA}$; 
memory $\{\sigma^{\pid}_{}$, $\sigma^{\pid}_{1}$, $\sigma'^{\pid}_{1} \}^{\pidZ}_{\pid = \pidA}\in\Mem$; 
\changeMap $\{\DMap^{\pid}_{}$, $\DMap^{\pid}_{1}$, $\DMap'^{\pid}_{1}\}^{\pidZ}_{\pid = \pidA}$;
accumulator $\{\Acc^{\pid}_{}$, $\Acc^{\pid}_{1}$, $\Acc'^{\pid}_{1}\}^{\pidZ}_{\pid = \pidA}\in\N$; 
statement $\stmt$, values $\{\val^{\pid}_{}$, $\val'^{\pid}_{}\}^{\pidZ}_{\pid = \pidA}$; 
step evaluation code lists $\codeLL,\codeLL'$ and their corresponding lists of locations accessed $\locLL,\locLL'$, 
party $\pid \in \{\pidA...\pidZ\}$; 
\\
if 
$\Pi \deriv ((\pidA, \gamma^{\pidA}_{},$ $\sigma^{\pidA}_{},$ $\DMap^{\pidA}_{}$, $\Acc^{\pidA}_{},$ $\stmt)\Mid ...\Mid (\pidZ, \gamma^{\pidZ}_{},$ $\sigma^{\pidZ}_{},$ $\DMap^{\pidZ}_{}$, $\Acc^{\pidZ}_{},$ $\stmt))$ 
\\ $\-\ \quad\Deval{\locLL}{\codeLL}$ $((\pidA, \gamma^{\pidA}_{1},$ $\sigma^{\pidA}_{1},$ $\DMap^{\pidA}_{1}$, $\Acc^{\pidA}_{1},$ $\val^{\pidA}_{})\Mid ...\Mid (\pidZ, \gamma^{\pidZ}_{1},$ $\sigma^{\pidZ}_{1},$ $\DMap^{\pidZ}_{1}$, $\Acc^{\pidZ}_{1},$ $\val^{\pidZ}_{}))$ 
\\ and   
$\Sigma \deriv ((\pidA, \gamma^{\pidA}_{},$ $\sigma^{\pidA}_{},$ $\DMap^{\pidA}_{}$, $\Acc^{\pidA}_{},$ $\stmt)\Mid ...\Mid (\pidZ, \gamma^{\pidZ}_{},$ $\sigma^{\pidZ}_{},$ $\DMap^{\pidZ}_{}$, $\Acc^{\pidZ}_{},$ $\stmt))$ 
\\ $\-\ \quad \-\ \Deval{\locLL'}{\codeLL'}$ $((\pidA, \gamma'^{\pidA}_{1},$ $\sigma'^{\pidA}_{1},$ $\DMap'^{\pidA}_{1}$, $\Acc'^{\pidA}_{1},$ $\val'^{\pidA}_{})\Mid ...\Mid (\pidZ, \gamma'^{\pidZ}_{1},$ $\sigma'^{\pidZ}_{1},$ $\DMap'^{\pidZ}_{1}$, $\Acc'^{\pidZ}_{1},$ $\val'^{\pidZ}_{}))$
\\ then $\{\gamma^{\pid}_{1}=\gamma'^{\pid}_{1}\}^{\pidZ}_{\pid = \pidA}$, 
$\{\sigma^{\pid}_{1}=\sigma'^{\pid}_{1}\}^{\pidZ}_{\pid = \pidA}$, 
$\{\DMap^{\pid}_{1} =\DMap'^{\pid}_{1}\}^{\pidZ}_{\pid = \pidA}$, 
$\{\Acc^{\pid}_{1}=\Acc'^{\pid}_{1}\}^{\pidZ}_{\pid = \pidA}$, 
$\{\val^{\pid}_{}=\val'^{\pid}_{}\}^{\pidZ}_{\pid = \pidA}$, 
$\codeLL=\codeLL'$, 
$\locLL = \locLL'$, 
$\Pi \loweq \Sigma$.
~\\
\end{theorem}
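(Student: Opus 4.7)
The plan is to proceed by induction on the structure of the derivation tree $\Pi$. The key observation is that the two derivations $\Pi$ and $\Sigma$ start from \emph{syntactically identical} configurations: since private data is stored in memory in its encrypted form, equality of the initial memories $\sigma^\pid$ is a purely syntactic condition and no separate ``low-equivalence of memories'' relation is needed. Therefore, the burden reduces to showing that the \piccoC\ semantics is deterministic on observable state up to the behavior of the multiparty protocols. For each case, by Lemma~\ref{lem: smc unique rules}, given the shared starting statement, environment, memory, accumulator, and tracking map, there is a unique applicable rule, so $\Sigma$ must proceed by the same rule as $\Pi$. This also immediately gives $\codeLL = \codeLL'$ once the sub-derivations are matched up.

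For local, non-interactive rules (arithmetic on public data, reads and writes of public variables, declarations, statement sequencing, public \TT{if else}, loops, public malloc/free, public pointer and array operations), the result follows by applying the induction hypothesis to each sub-derivation of $\Pi$ and $\Sigma$ in order, concluding equality of intermediate environments, memories, $\DMap$s, accumulators, and returned values, and hence equality of the final configurations together with $\locLL = \locLL'$. The low-equivalence of evaluation trees is then immediate from the one-to-one structural correspondence. For interactive multiparty rules (Multiparty Binary Operation, Multiparty Comparison, Multiparty Array Read, Multiparty Private Pointer Dereference, Multiparty Private Free, and so on), the sub-derivations evaluate their arguments to the same values by induction, and we invoke the corresponding MPC noninterference axioms (e.g., Axiom~\ref{axiom: mpc b ni}, Axiom~\ref{axiom: mpc ar ni}, and the analogous axioms we assume for $\MPC{cmp}$, $\MPC{dv}$, $\MPC{free}$, $\MPC{resolve}$) to conclude that identical protocol inputs yield identical protocol outputs across the two derivations. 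The locations touched by these rules are determined purely by the public pointer/array layouts and the current environment, so $\locLL = \locLL'$ follows directly.

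The main obstacle is the two Private If Else rules, because they execute both branches while dynamically updating tracking state, and the final result is produced by the multiparty protocol $\MPC{resolve}$. For Private If Else Variable Tracking, I would first argue that $\DynExtract$ is a deterministic function of the statements and environment, so both derivations produce the same $\x_\vl$ and the same tag. Then $\Initialize$ deterministically creates the same temporary variable bindings and copies the same original values; by the induction hypothesis on the \TT{then}-branch sub-derivation, the two executions agree on the intermediate memory; $\Restore$ then acts as a deterministic function on that memory, so agreement is preserved; by induction on the \TT{else}-branch sub-derivation, agreement continues; and finally $\ResolveR$ gathers equal tuples of $(\val_t, \val_e)$ pairs, so by the noninterference axiom for $\MPC{resolve}$ the resolved values are equal, and $\ResolveS$ deterministically writes them back. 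For Private If Else Location Tracking, the additional subtlety is that $\DMap$ can grow dynamically during the execution of either branch whenever a pointer dereference write or public-index array write touches an untracked location. I would handle this by observing that whether a location is added to $\DMap[\Acc]$ depends only on the current $\DMap$, the current environment, and the (public) pointer data structure / index being used; by the induction hypothesis these are all identical across the two derivations at the moment of the write, so the same entries with the same original values are inserted in both executions. The rest of the argument then mirrors the variable-tracking case. Finally, $\locLL = \locLL'$ in both rules is obtained by concatenating the per-phase location lists, each of which is determined by the same public tracking data in the two executions, and $\Pi \loweq \Sigma$ follows by Definition~\ref{def: loweq tree} since the trees have the same structure and each node carries the same codes and locations.
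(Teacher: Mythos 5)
Your proposal is correct and follows essentially the same route as the paper's own proof: induction over the semantic derivations, determinism of the helper algorithms (including the dynamic growth of $\DMap$ from public location information), the MPC noninterference axioms for all interactive rules including $\MPC{resolve}$, and data-obliviousness of the accessed-location lists to conclude $\locLL=\locLL'$ and $\Pi\loweq\Sigma$. The only cosmetic difference is that you make the appeal to Lemma~\ref{lem: smc unique rules} explicit to force the two derivations to use the same rule, which the paper leaves implicit in its determinism argument.
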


\begin{proof}[Proof Sketch]
By induction over all \piccoC\ semantic rules. We make the assumption that both evaluation traces are over the same program (this is given by having the same $\stmt$ in the starting states) and all public data will remain the same, including data read as input during the evaluation of the program. A portion of the complexity of this proof is within ensuring that memory accesses within our semantics remain data oblivious. 
Several rules follow fairly simply and leverage similar ideas, which we will discuss first, and then we will provide further intuition behind the more complex cases. 

For all rules leveraging helper algorithms, we must reason about the helper algorithms, and that they behave deterministically by definition and have data-oblivious memory accesses. Given this and that these helper algorithms do no modify the private data, we maintain the properties of noninterference of this theorem. First we reason that our helper algorithms to translate values into their byte representation will do so deterministically, and therefore maintain indistinguishability between the value and byte representation. We can then reason that our helper algorithms that take these byte values and store them into memory will also do so deterministically, so that when we later access the data in memory we will obtain the same indistinguishable values we had stored. 

It is also important to take note here our functions to help us retrieve data from memory, particularly in cases such as when reading out of bounds of an array. When proving these cases to maintain noninterference, we leverage our definition of how memory blocks are assigned in a monotonically increasing fashion, and how the algorithms for choosing which memory block to read into after the current one are deterministic. This, as well as our original assumptions of having identical public input, allows us to reason that if we access out of bounds (including accessing data at a non-aligned position, such as a chunk of bytes in the middle of a memory block), we will be pulling from the same set of bytes each time, and therefore we will end up with the same interpretation of the data as we continue to evaluate the remainder of the program. It is important to note again here that by definition, our semantics will always interpret bytes of data as the type it is expected to be, not the type it actually is (i.e., reading bytes of data that marked private in memory by overshooting a public array will not decrypt the bytes of data, but instead give you back a garbage public value). To reiterate this point, even when reading out of bounds, we will not reveal anything about private data, as the results of these helper algorithms will be indistinguishable.

To reason about the multiparty protocols, we leverage Axioms, such as Axiom~\ref{axiom: mpc b ni}, to reason that the protocols will maintain our definition of noninterference. With each of these Axioms, we ensure that over two different evaluations, if the values of the first run ($\val^\pid_1, \val^\pid_2$) are not distinguishable from those of the second ($\val'^\pid_1, \val'^\pid_2$), then the resulting values are also not distinguishable ($\val^\pid_3 = \val'^\pid_3$). These Axioms should be proven by a library developer to ensure the completeness of the formal model. 

For private pointers, it is important to note that the obtaining multiple locations is deterministic based upon the program that is being evaluated. A pointer can initially gain multiple locations through the evaluation of a private if else. Once there exists a pointer that has obtained multiple locations in such a way, it can be assigned to another pointer to give that pointer multiple locations. The other case for a pointer to gain multiple location is through the use of \TT{pfree} on a pointer with multiple locations (i.e., the case where a pointer has locations $\loc_1$, $\loc_2$, $\loc_3$ and we free $\loc_1$) - when this occurs, if another pointer had referred to only $\loc_1$, it will now gain locations in order to mask whether we had to move the true location or not. 
When reasoning about pointers with multiple locations, we maintain that given the tags for which location is the true location are indistinguishable, then it is not possible to distinguish between them by their usage as defined in the rules or helper algorithms using them. 
Additionally, to reason about \TT{pfree}, we leverage that the definitions of the helper algorithms are deterministic, and that (wlog), we will be freeing the same location. We will then leverage our Axiom about the multiparty protocol $\PFree$. After the evaluation of $\PFree$, it will deterministically update memory and all other pointers as we mentioned in the brief example above.

For both Private If Else rules, the most important element we must leverage is how values are resolved, showing that given our resolution style, we are not able to distinguish between the ending values. 
In order to do this, we also must reason about the entirety of the rule, including all of if else helper algorithms. 
First, we note that the evaluation of the \TT{then} branches follows by induction, as does the evaluation of the \TT{else} branch once we have reasoned through the restoration phase. 
For variable tracking, it is clear from the definitions of $\DynExtract$, $\Initialize$, and $\Restore$ that the behavior of these algorithms is deterministic and given the same program, we will be extracting, initializing, and restoring the same set variables every time we evaluate the program. For location tracking, $\DynInit$ is also immediately clear that it will be initializing the same locations each time. We must then reason about $\DynUpdate$, and how given a program, we will deterministically find the pointer dereference writes and array writes at public indices at corresponding positions in memory and add them to our tracking structure $\DMap$. Then we can reason that the behavior of $\DynRestore$ will deterministically perform the same updates, because $\DMap$ will contain the same information in every evaluation. 
Now, we are able to move on to reasoning about resolution, and show that given all of this and the definitions of the resolution helper algorithms and rule, we are not able to distinguish between the ending values.  

One of the main complexities of this proof revolves around ensuring \emph{data-oblivious memory accesses} (i.e. that we always access locations deliberately based on public information), particularly when handling arrays and pointers. Within the proof, we must consider all helper algorithms, and what locations are accessed within the algorithms as well as within the rules. What locations are accessed within the algorithms follows deterministically from the definition of the algorithms, and we return from the algorithms which locations were accessed in order to properly reason about the entire evaluation trace of the program. Our semantics are designed in such a way that we give the multiparty protocols all of the information they need, with all memory accesses being completed within the rule itself or our helper algorithms. This also helps show that memory accesses are purely local, not distributed operations.  
Within the array rules, the main concern is in reading from and writing at a private index. We currently handle this complexity within our rules by accessing all locations within the array in rules Multiparty Array Read Private Index and Multiparty Array Write Private Index. 
In Multiparty Array Read Private Index, we clearly read data from every index of the array ($\{\forall \ind \in \{0...\nl-1\} \quad \DecodeArr({\llabel\ \btype}, \ind, {\byte^\pid_1}) =  \n^\pid_\ind\}^{\pidZ}_{\pid = \pidA}$), then that data is passed to the multiparty protocol. 
Similarly, in Multiparty Array Write Private Index, we read data from every index of the array, pass it to the multiparty protocol, then proceed to update every index of the array with what was returned from the protocol. 
Within the multiparty protocols used in these two rules, we will ensure the usage of the data is data-oblivious within the main noninterference proof in the following subsection. 
All other array rules use public indices, and in turn only access that publicly known location.
Within the pointer rules, our main concern is that we access all locations that are referred to by a private pointer when we have multiple locations. For this, we will reason about the contents of the rules and the helper algorithms used by the pointer rules, which can be shown to deterministically do so.
\end{proof}

\end{document}